\definecolor{Gray}{gray}{0.9}
\definecolor{lightgray}{rgb}{0.9,0.9,0.9}
\definecolor{darkblue}{rgb}{0.05,0.25,0.65}
\definecolor{greenii}{RGB}{20,140,10}
\definecolor{orangeii}{RGB}{200,100,5}
\newcolumntype{L}[1]{>{\raggedright\let\newline\\\arraybackslash\hspace{0pt}}m{#1}}
\newcolumntype{C}[1]{>{\centering\let\newline\\\arraybackslash\hspace{0pt}}m{#1}}
\newcolumntype{R}[1]{>{\raggedleft\let\newline\\\arraybackslash\hspace{0pt}}m{#1}}
\newcommand{\raisemath}[1]{\mathpalette{\raisem@th{#1}}}
\newcommand{\raisem@th}[3]{\raisebox{#1}{$#2#3$}}
\newcommand{\dslash}{/\!\!/}
\DeclareRobustCommand{\rchi}{{\mathpalette\irchi\relax}}
\newcommand{\irchi}[2]{\raisebox{\depth}{$#1\chi$}} 
\newif\if@sup
\newtoks\@sups
\def\append@sup#1{\edef\act{\noexpand\@sups={\the\@sups #1}}\act}%
\def\reset@sup{\@supfalse\@sups={}}%
\def\mk@scripts#1#2{\if #2/ \if@sup ^{\the\@sups}\fi \else%
  \ifx #1_ \if@sup ^{\the\@sups}\reset@sup \fi {}_{#2}%
  \else \append@sup#2 \@suptrue \fi%
  \expandafter\mk@scripts\fi}
\def\tensor#1#2{\reset@sup#1\mk@scripts#2_/}
\def\multiscripts#1#2#3{\reset@sup{}\mk@scripts#1_/#2%
  \reset@sup\mk@scripts#3_/}
\newbox\slashbox \setbox\slashbox=\hbox{$/$}
\def\itex@pslash#1{\setbox\@tempboxa=\hbox{$#1$}
  \@tempdima=0.5\wd\slashbox \advance\@tempdima 0.5\wd\@tempboxa
  \copy\slashbox \kern-\@tempdima \box\@tempboxa}
\def\slash{\protect\itex@pslash}
\def\clap#1{\hbox to 0pt{\hss#1\hss}}
\def\mathrlap{\mathpalette\mathrlapinternal}
\def\mathrlapinternal#1#2{\rlap{$\mathsurround=0pt#1{#2}$}}
\let\oldroot\root
\def\root#1#2{\oldroot #1 \of{#2}}
\renewcommand{\sqrt}[2][]{\oldroot #1 \of{#2}}
\DeclareSymbolFont{symbolsC}{U}{txsyc}{m}{n}
\DeclareSymbolFont{stmry}{U}{stmry}{m}{n}
\DeclareFontFamily{OMX}{MnSymbolE}{}
\DeclareSymbolFont{mnomx}{OMX}{MnSymbolE}{m}{n}
\DeclareFontShape{OMX}{MnSymbolE}{m}{n}{
    <-6>  MnSymbolE5
   <6-7>  MnSymbolE6
   <7-8>  MnSymbolE7
   <8-9>  MnSymbolE8
   <9-10> MnSymbolE9
  <10-12> MnSymbolE10
  <12->   MnSymbolE12}{}
\def\Decl@Mn@Delim#1#2#3#4{%
  \if\relax\noexpand#1%
    \let#1\undefined
  \fi
  \DeclareMathDelimiter{#1}{#2}{#3}{#4}{#3}{#4}}
\def\Decl@Mn@Open#1#2#3{\Decl@Mn@Delim{#1}{\mathopen}{#2}{#3}}
\def\Decl@Mn@Close#1#2#3{\Decl@Mn@Delim{#1}{\mathclose}{#2}{#3}}
\Decl@Mn@Open{\llangle}{mnomx}{'164}
\Decl@Mn@Close{\rrangle}{mnomx}{'171}
\Decl@Mn@Open{\lmoustache}{mnomx}{'245}
\Decl@Mn@Close{\rmoustache}{mnomx}{'244}
\DeclareRobustCommand\widecheck[1]{{\mathpalette\@widecheck{#1}}}
\def\@widecheck#1#2{%
    \setbox\z@\hbox{\m@th$#1#2$}%
    \setbox\tw@\hbox{\m@th$#1%
       \widehat{%
          \vrule\@width\z@\@height\ht\z@
          \vrule\@height\z@\@width\wd\z@}$}%
    \dp\tw@-\ht\z@
    \@tempdima\ht\z@ \advance\@tempdima2\ht\tw@ \divide\@tempdima\thr@@
    \setbox\tw@\hbox{%
       \raise\@tempdima\hbox{\scalebox{1}[-1]{\lower\@tempdima\box
\tw@}}}%
    {\ooalign{\box\tw@ \cr \box\z@}}}
\def\udots{\mathinner{\mkern2mu\raise\p@\hbox{.}
\mkern2mu\raise4\p@\hbox{.}\mkern1mu
\raise7\p@\vbox{\kern7\p@\hbox{.}}\mkern1mu}}
\newcommand{\mathfr}{\mathfrak}
\newcommand{\h}{\mathfrak{h}}
\newcommand{\R}{\ensuremath{\mathbb R}}
\newcommand{\Z}{\ensuremath{\mathbb Z}}
\newcommand{\Q}{\ensuremath{\mathbb Q}}
\newcommand{\cE}{\ensuremath{\mathcal E}}
\newcommand{\cH}{\ensuremath{\mathcal H}}
\newcommand{\cK}{\ensuremath{\mathcal K}}
\newcommand{\cD}{\ensuremath{\mathcal D}}
\newcommand{\cC}{\ensuremath{\mathcal C}}
\renewcommand{\(}{\begin{equation}}
\renewcommand{\)}{\end{equation}}
\newcommand{\bea}{\begin{eqnarray*}}
\newcommand{\eea}{\end{eqnarray*}}
\newcommand{\mc}[1]{\mathcal{#1}}
\newcommand{\abs}[1]{\lvert #1 \rvert}
\newcommand{\on}[1]{\operatorname{#1}}
\def\BB{\mathbb{B}}
\def\CC{\mathbb{C}}
\def\GG{\mathbb{G}}
\def\HH{\mathbb{H}}
\def\NN{\mathbb{N}}
\def\PP{\mathbb{P}}
\def\QQ{\mathbb{Q}}
\def\RR{\mathbb{R}}
\def\ZZ{\mathbb{Z}}
\def\PPP{\mathbf{P}}
\DeclareMathOperator{\Aut}{Aut}
\DeclareMathOperator{\Der}{Der}
\DeclareMathOperator{\dR}{dR}
\DeclareMathOperator{\End}{End}
\DeclareMathOperator{\GL}{GL}
\DeclareMathOperator{\gr}{gr}
\DeclareMathOperator{\Hom}{Hom}
\DeclareMathOperator{\id}{id}
\DeclareMathOperator{\Lie}{Lie}
\DeclareMathOperator{\Map}{Map}
\DeclareMathOperator{\Mor}{Mor}
\DeclareMathOperator{\Pic}{Pic}
\DeclareMathOperator{\Spec}{Spec}
\newtheorem{theorem}{Theorem}[section]
\newtheorem{lemma}[theorem]{Lemma}
\newtheorem{prop}[theorem]{Proposition}
\newtheorem{cor}[theorem]{Corollary}
\newtheorem{conj}[theorem]{Conjecture}
\theoremstyle{definition}
\newtheorem{defn}[theorem]{Definition}
\newtheorem{example}[theorem]{Example}
\newtheorem{remark}[theorem]{Remark}
\newtheorem{note[theorem]}{Note}
\begin{document}

\title{Mysterious Triality and Rational Homotopy Theory}

\author{Hisham Sati, \; Alexander A. Voronov}

\maketitle
  

\vspace{-3mm}
$\,$ \hspace{7.5cm} {\it To our teachers: Igor V. Dolgachev and Yuri I. Manin}
\bigskip 

\begin{abstract}
Mysterious Duality has been discovered by Iqbal, Neitzke, and Vafa
\cite{INV} as a convincing, yet mysterious correspondence between
certain symmetry patterns in toroidal compactifications of M-theory
and del Pezzo surfaces, both governed by the root system 
series $E_k$. 

It turns out that the sequence of del Pezzo surfaces is not the only sequence of objects in mathematics that gives rise to the same $E_k$ symmetry pattern. We present a sequence of topological spaces, starting with the four-sphere $S^4$, and then forming its iterated cyclic loop spaces $\mathcal{L}_c^k S^4$, within which we 
discover the $E_k$ symmetry pattern  via rational homotopy theory. For this sequence of spaces, the correspondence between its $E_k$ symmetry pattern and that of toroidal compactifications of M-theory is no longer a mystery, as each space $\mc{L}_c^k S^4$
is naturally related to the compactification of
M-theory on the $k$-torus via identification of the equations of motion of $(11-k)$-dimensional supergravity as the defining equations of the Sullivan minimal model of $\mc{L}_c^k S^4$. This gives an explicit duality between algebraic topology and physics.

Thereby, we
extend Iqbal-Neitzke-Vafa's Mysterious Duality between algebraic geometry and physics into 
a triality, also involving algebraic topology. 
Via this triality, duality between physics and mathematics is
demystified, and the mystery is transferred to the mathematical realm as duality
between algebraic geometry and algebraic topology.
Now the question is: Is there an explicit relation between the del Pezzo
surfaces $\mathbb{B}_k$ and iterated cyclic loop spaces of $S^4$ which would explain the common $E_k$ symmetry pattern?
 \end{abstract}

\medskip

\tableofcontents


\section{Introduction}
\label{intro}

Mysterious Duality has been discovered by Iqbal, Neitzke, and Vafa
\cite{INV} as a remarkable, yet mysterious correspondence between
certain symmetry patterns in toroidal compactifications of M-theory
and del Pezzo surfaces, both governed by the root system 
corresponding to the exceptional series $E_k$, $k\leq 8$.

\paragraph{Del Pezzo surfaces.}
Consider the \emph{del Pezzo surface} $\mathbb{B}_k$ obtained as the blowup of 
the complex projective plane $\mathbb{C P}^2$ at $k$ generic points 
$x_1, \dots, x_k$, $0\leq k \leq 8$. 
The Picard group ${\rm Pic}(\mathbb{B}_k)$ of isomorphism classes of line bundles is in this case isomorphic to 
the divisor class group and the second cohomology group: 
$\Pic (\BB_k) \cong H^2(\mathbb{B}_k;\ZZ)$. This is a rank-$(k+1)$ lattice
with a natural Lorentzian inner product
given by the intersection form.
Another important feature of the del Pezzo surface $\BB_k$ is
the \emph{anticanonical class}:
$-\cK_k := - \Omega^2_{\BB_k}$,
which is ample and defines a map $\BB_k \to \CC \PP^{9-k}$, also called \emph{anticanonical}. This map is an embedding for $k \le 6$. The \emph{degree of the del Pezzo surface} is the self-intersection number $(-\cK_k) \cdot (-\cK_k) = \cK_k \cdot \cK_k = 9 - k$.

\smallskip 
There is also an ``outlier'' del Pezzo surface $\BB_1' := \CC \PP^1 \times \CC \PP^1$ 
of degree 8 
with Picard group of rank 2.
This surface is related to the del Pezzo surfaces $\BB_k$ by a single blowup: if we blow up a point in $\CC \PP^1 \times \CC \PP^1$, we will obtain a surface isomorphic to $\BB_2$. 

\smallskip 
The connection between algebraic geometry and Lie theory comes from the fact that the Cartan matrices of the exceptional Lie algebras of type $E_k$ and their
root systems arise from the above data: a lattice with a distinguished element and inner product
(see \cite{Man}).

\smallskip 
In general, even for a fixed $k$, the surfaces $\BB_k$ obtained from varying the blowup points are not isomorphic
as complex manifolds. However, they are diffeomorphic, so these surfaces give rise to the same combinatorial data, 
and we will just speak of ``the'' del Pezzo surface $\BB_k$ for each $k$.

\paragraph{The Mysterious Duality correspondence.} 
There is a correspondence between del Pezzo surfaces
$\mathbb{B}_k$ and M-theory ``wrapped'' on tori $T^k$, \cite{INV}, as follows. Given an element $\omega$ of
$H^2(\mathbb{B}_k; \R)$, considered as a generalized K\"{a}hler form on $\BB_k$, the generalized volumes $
\omega(\cC) := \omega \cdot \cC = \int_{\BB_k} \omega \cup \cC$
of the standard basic classes $ \cC = \cH, \cE_1, \dots, \cE_k$, see \Cref{dPtriple}, may be thought of as logarithms, up to certain constants, of the coordinates $(\ell_p, R_1, \linebreak[0] \dots, \linebreak[1] R_k)$ on the moduli space of M-theory compactified on the flat $k$-torus 
$T^k$:
$$
\omega(\cH)=-3 \ln \ell_p; \qquad \omega(\cE_i)=- \ln (2 \pi R_i)\;, \quad  i=1, \dots, k,
$$
where $\ell_p$ is the Planck scale and the $R_i$'s are the radii of the torus factors.
The moduli space of M-theory compactified on $T^k$ is usually taken to be the double quotient $K \backslash G / G(\ZZ)$, where $G$ is the U-duality group, which is the (real split form of the) Lie group $E_k$, $G(\ZZ)$ is its integral form, and $K$ is the maximal compact subgroup of $G$
\cite{HT}\cite{OP}. In  \cite{INV} a simpler moduli space $
\mathcal{M}_k := A/W
$
is used. Taking into account the Iwasawa decomposition $G=KAN$ with $A$ the $\RR$-split abelian factor and $N$ the unipotent factor and the identification of the Weyl group as $W = \mc{N}(A)/A$, where $\mc{N}(A)$ is the normalizer of $A$, one may think of passing from $K \backslash G /  G(\ZZ)$ to $
A/W
$
as some sort of abelianization:
\(
\label{abelianization} 
K \backslash G / G(\ZZ) = (AN) /G(\ZZ) \xymatrix{\ar@{~>}[rr]^{\rm \bf \color{darkblue} abelianize}&&}  
\mathcal{M}_k = A/W\;.
\)  

Further, the Mysterious Duality correspondence has the following features \cite{INV}:

\vspace{-2mm} 
\begin{enumerate}[{\bf (i)}]
\setlength\itemsep{-3pt}
\item Automorphisms of $\mathbb{B}_k$ and $H^2 (\BB_k; \ZZ)$
correspond to the U-duality
transformations of M-theory on $T^k$.
The U-duality group of M-theory on $T^k$, 
which for rectangular compactifications
with no $C$-field is given by the Weyl group of $E_k$,  
is related to a subgroup of the automorphism group of 
$H^2(\mathbb{B}_k; \ZZ)$ which preserves the intersection form and canonical class.

\item The moduli space $A/W$ of compactified M-theory corresponds to the moduli space $H^2 (\BB_k; \RR)/W$ of generalized K\"{a}hler forms $\omega$, regarded as metric/transcendental data on $\BB_k$, up to automorphisms of $\BB_k$ considered as an algebraic surface.
 
\item Two classes of rational curves $\cC_1$ and $\cC_2$ related as $\cC_1 + \cC_2= - \cK_k$ on $\BB_k$
correspond to two ${\rm D}p_1$-branes 
with ${\rm D}p_2$-branes  with $p_1 + p_2=7-k$, expressing electric-magnetic duality.

\item The $p$-branes of type IIA and type IIB string theory in 10 dimensions correspond to
classes of rational curves on $\mathbb{B}_1$ and $\CC \PP^1 \times \CC \PP^1$, respectively. 

\end{enumerate}

\vspace{-3mm} 
\paragraph{Our conceptual take.} We
bring in algebraic topology,
in the form of (rational) homotopy theory,
and propose a triality of the form:
\vspace{-7mm} 
\begin{figure}[htb]
$$
\xymatrix@R=.1em{
 \fbox{\text{Algebraic Geometry}} \;\; \ar@{<..>}[rr]^-{\footnotesize
 \raisebox{.5pt}{\textcircled{\raisebox{-.9pt} {3}}}}
 \ar@{<..>}[dr]_-{\text{\footnotesize \raisebox{.5pt}{\textcircled{\raisebox{-.9pt} {1}}}}}&  & \;\; 
 \fbox{\text{Algebraic Topology}}
\\
& \fbox{\text{Physics}} \ar@{<->}[ur]_-{\text{\footnotesize
\raisebox{.5pt}{\textcircled{\raisebox{-.9pt} {2}}}}}& 
}
$$

\vspace{-5mm} 
\caption{\footnotesize {\it Mysterious Triality}. The solid arrow 
{\footnotesize \raisebox{.5pt}{\textcircled{\raisebox{-.9pt} {2}}}}
stands for a relationship between physics and algebraic topology which is made explicit, while the dotted arrows 
{\footnotesize \raisebox{.5pt}{\textcircled{\raisebox{-.9pt} {1}}}}
and
{\footnotesize \raisebox{.5pt}{\textcircled{\raisebox{-.9pt} {3}}}}
denote still mysterious correspondences 
based on combinatorial coincidences. It is enough to solidify one of the dotted arrows to remove the mystery veil off the face of Triality.}
\label{MT}
\end{figure}
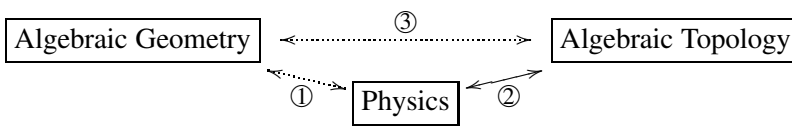

The remarkable fact is that the duality pictured by arrow {\footnotesize \raisebox{.5pt}{\textcircled{\raisebox{-.9pt} {2}}}} is explicit. Let us explain the main idea behind it.
Rational homotopy theory associates to a topological space a certain algebra in two ways:
the Quillen model, which is a differential graded (dg) Lie algebra, and the Sullivan model, which is 
a dg-commutative algebra. 
Roughly speaking, the Sullivan model is a rational homotopy model based on cohomology, 
while the Quillen model is 
based on 
homotopy, so they are dual in some sense; see \cite{Ta83}\cite{Maj}\cite{felix-halperin-thomas}.

\smallskip 
In our context, the Sullivan minimal model of the 4-sphere $S^4$ captures the dynamics of 
the fields in M-theory, as proposed in \cite{Sati13}, and developed further under
the name {\it Hypothesis H} in \cite{FSS17}\cite{FSS19b}\cite{FSS-WZW}\cite{GS21} (and applied in \cite{Roberts}).
Particularly, the generators of the Sullivan minimal model
\[
M(S^4) = \RR\big[g_4, g_7 \; | \; dg_4 = 0, dg_7 = - \tfrac{1}{2} g_4^2\big]
\]

\vspace{-1mm} 
\noindent
correspond to the basic supergravity fields $G_4$ and $G_7 = *G_4$, expressed as differential forms on 11d spacetime, whereas the differential of $M(S^4)$ corresponds to the equations of motion (EOMs):
\vspace{-1mm} 
\begin{equation}
\label{EOM}
d G_4 = 0, \qquad d G_7 = - \tfrac{1}{2}G_4 \wedge G_4 = 0 \;.
\end{equation}
Moreover, the algebra of gauge transformations for these fields is captured 
by the Quillen model (see \Cref{Ex-Malgebra}). A gentle introduction to the relationship between 
M-theory and rational homotopy theory
is given in \cite{FSS19a}.

\smallskip 
Furthermore, we find a striking match between the equations
  for the Sullivan minimal model of the iterated cyclic loop space (\emph{cyclification}) ${\mc{L}_c^k S^4}$  of the
  four-sphere and the EOMs of $k$-fold circle reduction
of M-theory;
see \Cref{Sec-S4cyc}
(specifically, \Cref{smm}, Examples \ref{example1} and \ref{example2}, and \Cref{IIB}). This is deeply rooted in the fact that writing out the EOMs for the reduction of supergravity on a circle $S^1$ is akin to the process
  of working out the Sullivan minimal model of the cyclic loop space;
see \eqref{ext:cyc} and  \eqref{dc}, cf.\ adjunction \eqref{adjunction}.

\smallskip
We may formulate the above more precisely as the main mathematical physics result of the paper.

\begin{theorem}[Supergravity dynamics from rational homotopy theory]
The Sullivan minimal model $M({\cal L}_c^k S^4)$ determines the duality-symmetric equations of motion of $(11-k)$-di\-men\-sion\-al supergravity descending from 11-dimensional supergravity \cref{Sullmm}, with the 
 gauge algebra structure determined by the corresponding 
Quillen model (Examples \ref{Ex-Malgebra},  \ref{Ex-reducedAlg}). There is also a minimal Sullivan algebra corresponding to type IIB, with
T-duality between type II theories  (Prop.\ \ref{Prop-Tdual}) observed at the level of Sullivan minimal models.
\end{theorem}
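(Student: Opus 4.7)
The plan is to proceed by induction on $k$, using at each stage the standard algebraic model for cyclification: given a minimal Sullivan model $(\Lambda V, d)$ of a simply connected space $X$, the Sullivan minimal model of $\mathcal{L}_c X$ is built by adjoining a degree-shifted copy of the generators of $V$ together with a degree-$1$ class $\omega$, equipped with a twisted differential $d_c$ determined by the cyclification adjunction \eqref{adjunction}. The base case $k=0$ is the statement that $M(S^4)=\RR[g_4,g_7\mid dg_4=0,\ dg_7=-\tfrac{1}{2}g_4^2]$ encodes the $11$-dimensional EOMs \eqref{EOM}, as recalled in the introduction.

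For the inductive step I would compute $M(\mathcal{L}_c^{k+1}S^4)$ as the cyclification of $M(\mathcal{L}_c^{k}S^4)$ and expand $d_c$ into components graded by form-degree along the newly adjoined circle class. Each new degree-$1$ generator is to be read as the Kaluza--Klein graviphoton of an additional $S^1$ factor, and the shifted copies of the previous generators as the corresponding form reductions of the existing $(10-k)$-dimensional field strengths. Under this dictionary, the components of the cyclification differential should match term by term with the duality-symmetric EOMs obtained by circle reduction of $(11-k)$-dimensional supergravity, giving the claim in dimension $11-k-1$. The Quillen-side statement about the gauge algebra then follows from the standard Lie-versus-commutative duality of rational homotopy theory: the $L_\infty$-brackets dual to the quadratic and cubic parts of $d_c$ reproduce exactly the gauge-algebra brackets of the corresponding supergravity fields, as illustrated in Examples~\ref{Ex-Malgebra} and \ref{Ex-reducedAlg}.

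For the type IIB statement, I would construct separately a minimal Sullivan algebra whose generators match the IIB field content (axio-dilaton, the $B$/$C_2$ pair, the self-dual $5$-form) and whose differential encodes the IIB Bianchi identities with twisted self-duality, then identify it with $M(\mathcal{L}_c S^4)$ after localizing at the appropriate degree-$1$ ``T-duality circle'' class; this identification is the Sullivan-level incarnation of T-duality asserted in Prop.~\ref{Prop-Tdual}. The step I expect to be the main obstacle is the bookkeeping in the inductive matching: one has to control signs, combinatorial factors, and Hodge-type duality relations as the number of shifted generator copies and circle classes both grow with $k$, and to verify that $d_c$ reproduces the \emph{full} set of duality-symmetric EOMs of lower-dimensional supergravity rather than a proper subsystem. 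Once this matching is pinned down carefully at one stage, the iterative nature of cyclification together with functoriality of the Sullivan and Quillen constructions should propagate it to all $k$.
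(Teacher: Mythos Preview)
Your overall inductive strategy via iterated cyclification is exactly the paper's approach: the Sullivan model of $\mc{L}_c Z$ is built from that of $Z$ by the Vigu\'e-Poirrier--Burghelea extension \eqref{ext:cyc}--\eqref{dc}, and the matching with supergravity EOMs is carried out example by example in \Cref{example1}--\Cref{examplek}. However, several details need correction.

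First, the new generator $w$ adjoined at each cyclification step has degree $2$, not degree $1$: it is the generator of $M(BS^1) = \RR[w]$ coming from $H^2(BS^1)$, and in the physics dictionary it is the graviphoton \emph{field strength} $F_2$, not a potential. Second, your hypothesis that the input space be simply connected fails already at $k=2$: the spaces $\mc{L}_c^k S^4$ for $k \ge 2$ are only nilpotent, and the paper handles this via the \emph{truncated} desuspension \eqref{truncation}. This truncation is not cosmetic---from $k=3$ onward it kills would-be generators such as $s_3 s_2 w_1$ (see \eqref{s_3}) and materially changes the differential, so your ``bookkeeping obstacle'' is real and requires the nilpotent refinement.

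The more substantial gap is in your IIB/T-duality sketch. The paper does \emph{not} identify $M(IIB)$ with $M(\mc{L}_c S^4)$ by any localization. Instead it constructs $M(IIB)$ independently from the IIB EOMs \eqref{htypeIIB}, cyclifies \emph{both} $IIA = \mc{L}_c S^4$ and $IIB$ once more down to nine dimensions, and then compares $M(\mc{L}_c IIA) = M(\mc{L}_c^2 S^4)$ with $M(\mc{L}_c IIB)$. The result (\Cref{Prop-Tdual}) is that they match under $c_2 \leftrightarrow -\tilde{c}_2$ \emph{except} for an explicit mismatch among the degree-$7$ generators: $M(\mc{L}_c IIA)$ has one such generator while $M(\mc{L}_c IIB)$ has two, with incompatible differentials. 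T-duality therefore holds only modulo the differential ideals generated by degree-$7$ elements, as in \eqref{T-duality} and \eqref{T-duality-sub}, not as an outright isomorphism. Your proposal to obtain IIB by ``localizing $M(\mc{L}_c S^4)$ at a T-duality circle class'' would miss this discrepancy entirely.
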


This matching also implies the following general philosophy (extending the $k=1$ case in \cite{FSS-pbranes}\cite{FSS-L00}):
\begin{quote}
{\it Any feature of or statement about the Sullivan minimal model of an iterated cyclic
loop space $\mc{L}_c^k S^4$ (or the rational homotopy type thereof)
may be translated into a feature of or statement about the
compactification of M-theory on the $k$-torus.}
\end{quote}

Examples of such derive from the mathematical core of the paper: we find that toroidal symmetries of the rational homotopy type of $\mc{L}_c^k S^4$  lead naturally to the root system $E_k$, for each $k \ge 0$. Combined with the theorem above, this explains the appearance of the $E_k$ root system in toroidal compactifications of M-theory and the connection of the Weyl group
$W(E_k)$ to U-duality. The split real torus action on $M(\mc{L}_c^k S^4)$ translates into 
trombone and rescaling symmetries of $(11-k)$-dimensional supergravity (see \cref{symm-cycl-concrete}), the 27 exceptional vectors in the root system $E_6$ translate into a collection of 27 distinguished fields in 5d spacetime (see \cref{Sec-27}). 
The luxury of the principle above was not available
within the duality between del Pezzo surfaces and torodial
compactifications of M-theory, as there was only a collection of
surprising coincidences, which were not based on an explicit
relation. The very lack of an explicit relation is the essence of
the mystery behind the duality proposed by \cite{INV}. This is the dotted arrow {\footnotesize \raisebox{.5pt}{\textcircled{\raisebox{-.9pt} {1}}}} in \Cref{MT}.

\smallskip 
The associated series of mathematical results is collected in the following metatheorem.
 \begin{theorem}[Metatheorem]
 \begin{enumerate}[{\bf (a)}]
 \setlength\itemsep{-2pt}
 \item   
The maximal $\RR$-split torus of the real algebraic group $\Aut M (\mc{L}_c^k S^4)$ for $k \ge 0$ is a $(k+1)$-dimensional torus
$T^{k+1}$ canonically isomorphic to $\GG_m^{k+1}$ over $\RR$ (Cor.\ \ref{split-rank}).

\item 
The action of the maximal split torus $\GG_m^{k+1}$ on $M(\mc{L}^k_c S^4)$ may be lifted to an action on the space $\mc{L}_c^k S^4$ in the rational homotopy category. In this way, the last factor $\GG_m$ of $\GG_m^{k+1}$ acts via self-maps of the target $S^4$ and the first $k$ factors act via self-maps of the source $S^1$s (Props.\ \ref{prop-torS4} and \ref{prop-torS1}).

\item For type IIB, a maximal split torus $T^B = \GG_m^2$ is identified explicitly
(Prop. \ref{prop-torIIB}).

\item
\label{bases}
    The $(k+1)$-dimensional real abelian Lie algebra $\h_k = \Lie
    (T^{k+1}) \subseteq \Der M(\mc{L}_c^k S^4)$, which plays the role of a Cartan subalgebra, of the maximal
    $\RR$-split torus $T = T^{k+1}$ of the algebraic group $\Aut
    M(\mc{L}_c^k S^4)$ has an explicit canonical basis. So does the linear dual $\h_k^*$, which plays the role of a weight space
(Thm. \ref{thm-bases}).

\item
\label{element}
There is a unique element of the Lie algebra $\h_k$, 
which acts on the Quillen minimal model $ Q(\mc{L}_c^k S^4)$ as the
degree operator
(Thm. \ref{thm-deg}).

\item For each $k$, $0 \le k \le 8$, the above bases {\rm (\ref{bases})} and element {\rm (\ref{element})} give rise to the exceptional root data $E_k$. This data, extracted from cyclic loop spaces 
$M(\mc{L}^{k}_cS^4)$,
replicates the root data
determined by del Pezzo surfaces $\BB_k$
(Thm. \ref{rootdata}). 
  The construction of the root data 
 of Theorem \ref{rootdata} extends
 to $k\geq 9$
  (Remark \ref{k-le-8}).

\item For type IIB, the exceptional root data from the rational homotopy model for type IIB 
replicates the data  determined by the del Pezzo surface 
$\CC \PP^1 \times \CC \PP^1$
and produces the
root system $A_1$
(Prop. \ref{Root-IIB}).

\item \textbf{$27$ Lines via rational homotopy of $6$-fold cyclic loop space}:
In the weight decomposition
\[
\pi_\bullet^\RR (\mc{L}_c^6 S^4) = \bigoplus_{\alpha  \in \PPP(\h_6)} \pi_\bullet^\RR (\mc{L}_c^6 S^4)_\alpha
\]

\vspace{-3mm} 
\noindent
corresponding to the $7$-torus action on the Quillen minimal model $Q(\mc{L}_c^6 S^4) = \pi_\bullet^\RR (\mc{L}_c^6 S^4)[1]$,
the
$27$ exceptional vectors $\alpha_i \in \PPP(\h_6)$, $i = 1, \dots, 27$,
single out precisely the second real homotopy group $\pi_2^\RR (\mc{L}_c^6 S^4)$:
\[
\pi_2^\RR (\mc{L}_c^6 S^4) = \bigoplus_{i=1}^{27} \pi_\bullet^\RR (\mc{L}_c^6 S^4)_{\alpha_i} \; .
\]
Moreover,
\[
\dim \pi_\bullet^\RR (\mc{L}_c^6 S^4)_{\alpha_i} = 1
\]
for each $i = 1, \dots, 27$, which means there are $27$ canonically defined, linearly independent lines in the $\RR$-vector space $\pi_2^\RR (\mc{L}_c^6 S^4)$ and $\dim \pi_2^\RR (\mc{L}_c^6 S^4) = 27$ 
(Thm. \ref{27lines}).

\end{enumerate} 

\end{theorem}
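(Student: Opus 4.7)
The plan is first to obtain an explicit inductive presentation of $M(\mc{L}_c^k S^4)$ by iterating the cyclification construction on $M(S^4) = \RR[g_4, g_7 \;|\; dg_4 = 0,\; dg_7 = -\tfrac{1}{2} g_4^2]$. Each application of $\mc{L}_c$ adjoins a degree-$2$ Euler class $\omega_i$ together with the desuspensions of the existing generators and twists the differential by the canonical contraction, as encoded in the adjunction \eqref{adjunction} and the formulas \eqref{ext:cyc}, \eqref{dc}. From this presentation one reads off $k+1$ commuting diagonalizable derivations of $M(\mc{L}_c^k S^4)$: a derivation $h_0$ acting on the generators descending from $g_4, g_7$, and a derivation $h_i$, $1 \le i \le k$, acting diagonally on $\omega_i$ and on all generators produced at the $i$-th cyclification step. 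To establish parts (a) and (d), I would verify that $\h_k = \langle h_0, \ldots, h_k\rangle$ is a maximal $\RR$-split toral subalgebra in $\Der M(\mc{L}_c^k S^4)$ by checking multi-homogeneity of the differential and arguing that any further semisimple diagonal derivation would be obstructed by the quadratic relations of the form $d(s^J g_7) = -\tfrac{1}{2} s^J (g_4^2) + \cdots$.

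For (b) and (c), the idea is to lift each one-parameter subgroup $\GG_m \subseteq T^{k+1}$ to the rational homotopy category: the factor generated by $h_0$ lifts to the rational self-maps of $S^4$ of degree $\lambda^2$, while each factor generated by $h_i$ lifts to a degree-$\mu_i$ self-map of the $i$-th source circle; compatibility with the cyclification adjunction assembles these lifts into the claimed $T^{k+1}$-action. The type IIB torus $T^B = \GG_m^2$ is then read off directly from the explicit IIB Sullivan model of \Cref{Sec-S4cyc}. For (e), the degree operator is the unique $D = \sum_j a_j h_j \in \h_k$ whose eigenvalue on each Quillen generator equals its homological degree; the coefficients $a_j$ are pinned down by a small linear system, once the Quillen degrees are read off via Quillen--Sullivan duality.

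For (f) and (g), I would enumerate the weights of the $T^{k+1}$-action on the Quillen minimal model, translate them by $-D$ into an affine hyperplane playing the role of $H^2(\BB_k;\RR)$, and verify that the resulting lattice, together with the canonical class extracted from $D$, matches the Lorentzian root datum of $E_k$ by direct comparison with Manin's classification \cite{Man}; the same procedure applied to the IIB model produces $A_1$. Part (h) is the main obstacle. By the Quillen--Sullivan correspondence, $\pi_2^\RR(\mc{L}_c^6 S^4)$ is computed by the degree-$1$ part of $Q(\mc{L}_c^6 S^4)$, which is dual to the degree-$2$ indecomposables of $M(\mc{L}_c^6 S^4)$. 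The plan is to enumerate these indecomposables combinatorially from the six nested cyclifications (the six Euler classes $\omega_i$ together with the iterated desuspensions $s^{J} g_4,\, s^{J'} g_7$ landing in total degree $2$), record their $T^7$-weights, and verify three things: the weights are pairwise distinct, each weight space is one-dimensional, and the resulting $27$-element weight set coincides with the $27$ minuscule weights of $E_6$ (equivalently, the $27$ lines on a smooth cubic surface). The hard part is the combinatorial bookkeeping: the cyclification tower produces many a priori distinct candidates in degree $2$, and one must verify that the twisted differentials impose precisely enough relations to collapse the indecomposable count to $27$ with no weight-space degeneration; an inductive tabulation controlled by the $E_6$ Weyl-group symmetry already established in (f) appears to be unavoidable.
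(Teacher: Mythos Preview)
Your plan for parts (b)--(e) and (g) tracks the paper closely. Two points deserve comment.

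For part (a), your maximality argument (``any further semisimple diagonal derivation would be obstructed by the quadratic relations'') is underspecified and would be awkward to execute relation by relation. The paper's route is cleaner and structural: one shows that any split torus $T \subseteq \Aut M$ acts faithfully on the subquotient $P(M) := Z(M^+)/\bigl(Z(M^+) \cap (M^+)^2\bigr)$ of $d$-closed indecomposables (Proposition~\ref{kerd}), whence $\dim T \le \dim P(M)$. One then proves, via the super Poincar\'e lemma applied to the de Rham-type operator $s$, that $\dim P(M(\mc{L}_c Z)) = \dim P(M(Z)) + 1$ whenever the differential on $M(Z)$ is quadratic (Lemma~\ref{fund.lemma}). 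The explicit torus $T \times \GG_m$ from the cyclification (Proposition~\ref{AutL}) saturates this bound, so maximality follows by induction without ever inspecting the individual quadratic relations.

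For part (h) there is a genuine misunderstanding. You write that ``the twisted differentials impose precisely enough relations to collapse the indecomposable count to $27$,'' and that ``an inductive tabulation controlled by the $E_6$ Weyl-group symmetry \dots\ appears to be unavoidable.'' But the Sullivan minimal model $M(\mc{L}_c^6 S^4) = S(V)$ is \emph{free} as a graded-commutative algebra, so the space of indecomposables $I(M) = M^+/(M^+)^2$ is canonically identified with the generating space $V$ itself; the differential plays no role whatsoever in counting indecomposables. The degree-$2$ generators are read off directly from the iterated cyclification recipe (Example~\ref{examplek}): they are the six classes $w_1, \dots, w_6$, the fifteen elements $s_j s_i g_4$ for $1 \le i < j \le 6$, and the six elements $s_6 \cdots \widehat{s_i} \cdots s_1 g_7$, giving $6 + 15 + 6 = 27$ on the nose. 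Their weights $\epsilon_i$, $\epsilon_0 - \epsilon_i - \epsilon_j$, and $2\epsilon_0 - \sum_{l \ne i} \epsilon_l$ are manifestly pairwise distinct (the $\epsilon_0$-coefficient already separates the three families), and each satisfies $(\alpha,\alpha) = (\alpha, K_6^*) = -1$ by a one-line check against the Lorentzian form. No Weyl-group bookkeeping is needed; the combinatorics is a direct enumeration, and this is exactly how the paper proceeds in Theorem~\ref{27lines}.
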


\bigskip 
\begin{sloppypar}
Thus, our work provides an explicit, conceptual correspondence {\footnotesize \raisebox{.5pt}{\textcircled{\raisebox{-.9pt} {2}}}} between physics and algebraic topology
in the Triality above and thereby uncovers the mystery of Mysterious Duality, if understood in a
broad sense as a duality between physics and mathematics.
The other two sides of the Triality, see \Cref{MT}, still remain a
mystery. 
Filling out either of the mysterious sides of the triangle would complete the
story and resolve the Mysterious Duality conjecture of \cite{INV}.

\medskip 
This leads to a new, conjectural
duality within mathematics, a duality between the algebraic geometry
of del Pezzo surfaces and the algebraic topology of cyclic loop spaces
of the 4-sphere, formulated as: 
\end{sloppypar}

\begin{conj}
\label{Conj1}
There must be an explicit relation between the series of del Pezzo
surfaces $\BB_k$, $0 \le k \le 8$, and $\CC \PP^1 \times \CC \PP^1$ on the one hand and the series of iterated loop
spaces $\mc{L}^{k}_c S^4$, $0 \le k \le 8$, and the topological model $IIB$, see \cref{IIB}, on
the other hand. In particular, blowing up a del Pezzo surface should correspond to taking a cyclification of an iterated cyclification of $S^4$. This relation should match
the $E_k$ symmetry patterns occurring in both series, as well as
relate other geometric data, such as relate the volumes of curves on del
Pezzo surfaces with certain metric data on the iterated loop
spaces.
\end{conj}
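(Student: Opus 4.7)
The plan is to make the conjectural correspondence concrete by routing it through the common $E_k$ root data, which are now explicitly available on both sides: on the algebraic-geometric side from the Picard lattice of $\BB_k$ with its anticanonical class and intersection form, and on the algebraic-topological side from the maximal split torus $T^{k+1} \subseteq \Aut M(\mc{L}_c^k S^4)$ with its Cartan subalgebra $\h_k$, canonical basis, and distinguished degree element of Metatheorem parts (\ref{bases}) and (\ref{element}). Theorem \ref{rootdata} already guarantees that these two packages of data produce the same abstract root system, so the task is to upgrade this abstract isomorphism to a canonical, geometrically meaningful one, and to check that it is functorial under the natural ``step up by one'' operations on each side.

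Concretely, the first step is to pin down an explicit $\ZZ$-linear isomorphism sending $\cH, \cE_1, \dots, \cE_k \in \Pic(\BB_k)$ to the explicit canonical basis of the weight lattice in $\h_k^*$, under which $-\cK_k = 3\cH - \sum_{i=1}^k \cE_i$ is mapped to a canonically defined element on the rational homotopy side built from the degree element of part (\ref{element}) together with the generators produced by iterated cyclification, and the intersection form matches the bilinear form induced by the root data. The second step is to identify the blowup $\BB_k \to \BB_{k-1}$ at a generic point with the cyclification step $\mc{L}_c^k S^4 = \mc{L}_c \mc{L}_c^{k-1} S^4 \to \mc{L}_c^{k-1} S^4$, in such a way that the class of the exceptional divisor $\cE_k$ corresponds to the canonical generator of $\h_k^*$ coming from the newly added $S^1$ factor. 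I would verify this at the level of Sullivan minimal models, comparing the effect on cohomology of a blowup against the effect on $M(\mc{L}_c^{k-1} S^4)$ of applying the cyclification functor once more, and then propagating the comparison inductively.

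For the metric and K\"ahler data, the quotient $H^2(\BB_k;\RR)/W$ of generalized K\"ahler classes should correspond, via the exponential of the isomorphism above, to the split real torus $T^{k+1}/W$ acting on $M(\mc{L}_c^k S^4)$; the volumes $\omega \cdot \cC$ for $\cC$ a standard basic class then become logarithms of the corresponding torus coordinates essentially by construction. Since both moduli spaces have independently been matched to the M-theory compactification parameters $(\ell_p, R_1, \dots, R_k)$, through the established arrows in \Cref{MT}, this part of the conjecture should go through essentially formally once the root-data dictionary is upgraded to a canonical isomorphism. The outlier case pairing $\CC\PP^1 \times \CC\PP^1$ with the type IIB model would be handled separately using Proposition \ref{prop-torIIB} and Proposition \ref{Root-IIB}, with the $A_1$ root system rigidifying the match and the T-duality isomorphism of Proposition \ref{Prop-Tdual} playing the role of the single blowup relating $\CC\PP^1 \times \CC\PP^1$ to $\BB_2$.

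The main obstacle, and the reason this is posed as a conjecture rather than a theorem, is that the construction above only produces a correspondence between combinatorial and lattice-theoretic invariants. A truly explicit relation in the spirit the conjecture demands requires a geometric or functorial link: either a functor from a category of iterated generic blowups of $\CC\PP^2$ to the rational homotopy category, landing on the sequence $\{\mc{L}_c^k S^4\}$, or an auxiliary third object attached to $\BB_k$ (a configuration or moduli space of blowup points, a Hitchin-type moduli space, or a topological invariant derived from the anticanonical embedding $\BB_k \hookrightarrow \CC\PP^{9-k}$) that can be shown to have the rational homotopy type of $\mc{L}_c^k S^4$. Producing such a third object, and proving the rational homotopy equivalence, is where the real difficulty lies; the root-data matching described above is a guide to the answer, not the answer itself.
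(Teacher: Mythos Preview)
This statement is a \emph{conjecture}, not a theorem: the paper does not offer a proof, and indeed the whole point of the paper is that arrow \raisebox{.5pt}{\textcircled{\raisebox{-.9pt} {3}}} in \Cref{MT} remains mysterious. So there is no ``paper's own proof'' to compare against. What you have written is not a proof but a research outline, and you seem to be aware of this by the final paragraph.

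As a research outline, your proposal is sensible and largely recapitulates what the paper already does together with an honest assessment of what remains. The first two steps you describe---matching the bases $\{\cH,\cE_1,\dots,\cE_k\}$ with $\{\epsilon_0,\epsilon_1,\dots,\epsilon_k\}$, matching $-\cK_k$ with $-K_k^*$, and matching the intersection form with the Lorentzian form---are exactly the content of \Cref{rootdata}, and your observation that blowup should correspond to cyclification is exactly the content of the conjecture itself. So these steps do not advance toward a proof; they restate what is already known or conjectured. Your third paragraph on moduli spaces is also already in the paper (see the discussion of $\mc{M}_k$ in \cref{Sec-Ek}). The genuine content of your proposal is the final paragraph, where you correctly identify that what is missing is a \emph{geometric} or \emph{functorial} bridge---some auxiliary space attached to $\BB_k$ with the real homotopy type of $\mc{L}_c^k S^4$, or a functor from blowups to cyclifications---and that the lattice-theoretic dictionary is only a guide. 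This is exactly the open problem; you have not proposed a candidate for such a bridge, so the proposal does not constitute progress toward resolving the conjecture.
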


\paragraph{The $E_k$ symmetry patterns.} 
The dimensional reduction of M-theory on 
a $k$-torus gives rise to a theory in $D = 11 - k$ dimensions with the 
scalar fields with  symmetry pattern \cite{INV} in the five columns in Table 
\ref{table1} below, matching the familiar pattern for del Pezzo surfaces \cite{Man}, to which we add the 6th column for cyclic loop spaces (``cyclifications'' $\mc{L}_c^k S^4$ of $S^4$),
as well as the 7th column corresponding torus symmetry, both of which we discover in this paper. This highlights the interrelations among Lie theory (4th column), 
algebraic geometry (5th column), and topology/physics (6th column),
as appropriate by the trichotomy/triality in Figure \ref{MT}.

\begin{table}[H]
\centering
\begin{tabular}{ccccccc}
\hline
$D$ &  $k$ & {\bf Type of $E_k$} & {\bf Lie algebra $\mathfr{g}$} & {\bf del Pezzo} & 
{\bf Model} & {\bf Maximal Split Torus}  \\
\hline
 \hline
\rowcolor{lightgray} 11 & 0 & $A_{-1}$ & $\mathfr{sl}_0 = \varnothing$ & $\CC \PP^2$
&  $S^4$       & $\GG_m$  \\
10 & 1 & $A_0$ & $\mathfr{sl}_1 = 0$ & $\mathbb{B}_1$ 
&  $\mathcal{L}_c S^4$ & $\GG_m \times \GG_m$ \\
10 & 1 & $A_1$ & $\mathfr{sl}_2$ & $\CC \PP^1 \times \CC \PP^1$ 
& $IIB$ & $\GG_m \times \GG_m$ \\
\rowcolor{lightgray} 9 & 2 & $A_{1}$ &$\mathfr{sl}_2$ & $\mathbb{B}_2$ 
&  $\mathcal{L}_c^2 S^4$ & $\GG_m^2 \times \GG_m$  \\
8 & 3 & $A_2 \times A_1$ &$\mathfr{sl}_3 \oplus \mathfr{sl}_2$ & $\mathbb{B}_3$ 
&  $\mathcal{L}_c^3 S^4$ & $\GG_m^3 \times \GG_m$  \\
\rowcolor{lightgray} 7 & 4 & $A_4$&$\mathfr{sl}_5$ & $\mathbb{B}_4$ 
&  $\mathcal{L}_c^4 S^4$ & $\GG_m^4 \times \GG_m$ \\
6 & 5 & $D_5$&$\mathfr{so}_{10}$  & $\mathbb{B}_5$ 
&  $\mathcal{L}_c^5 S^4$ & $\GG_m^5 \times \GG_m$  \\
\rowcolor{lightgray} 5 & 6 & $E_6$&$\mathfr{e}_6$   & $\mathbb{B}_6$ 
&  $\mathcal{L}_c^6 S^4$ & $\GG_m^6 \times \GG_m$  \\
4 & 7 & $E_7$&$\mathfr{e}_7$  & $\mathbb{B}_7$ 
&  $\mathcal{L}_c^7 S^4$ & $\GG_m^7 \times \GG_m$ \\
\rowcolor{lightgray} 3 & 8 & $E_8$&$\mathfr{e}_8$  & $\mathbb{B}_8$ 
&  $\mathcal{L}_c^8 S^4$ & $\GG_m^8 \times \GG_m$  \\
\hline
\end{tabular}
\vspace{-2mm} 
\caption{\label{table1} \footnotesize The $E_k$ pattern in Lie theory,
$(0 \leq k \leq 8)$ 
del Pezzo surfaces, and cyclifications of $S^4$.}
\end{table}

\vspace{-2mm} 
Our formulation allows us to extend to higher ranks, $k=9$, $10$, and $11$,
corresponding to the infinite-di\-men\-sion\-al cases. 
Indeed, since our discussion extends beyond the Lie setting to the  Kac-Moody 
setting, we 
have extensions of the Triality in Figure \ref{MT} and of \Cref{Conj1} that go beyond 
the Fano case on the algebraic side in \cref{Sec-KM}.
We also observe that cyclic loop spaces, when we go beyond $k=8$, undergo a transition analogous to that on the del Pezzo/root systems/Lie algebra side: the degree of the cyclification $\mathcal{L}_c^k S^4$ in the sense of \eqref{deg-L} ceases to be positive, the corresponding root system becomes infinite, and the metric on the $k$-dimensional real vector space holding the root system is no longer Euclidean; see \Cref{k-le-8}.
The surface $\mathbb{B}_9$ gives rise to a rank-9 parabolic lattice, 
while $\mathbb{B}_{10}$ and $\mathbb{B}_{11}$ correspond to rank-10 and 11 hyperbolic lattices.

\begin{table}[H]
\centering
\renewcommand{\arraystretch}{1.2}
\begin{tabular}{ccccccc}
\hline
$D$ &  $k$ & {\bf Type of $E_k$} & {\bf Kac-Moody algebra $\mathfr{g}$} & {\bf Non-Fano Surface} & 
{\bf Model} & {\bf Maximal Split Torus}  \\
\hline
 \hline
\rowcolor{lightgray} 
2 & 9 &  $E_9= \widehat{E}_8$ & affine $\mathfr{e}_9=\widehat{\mathfr{e}}_8$  & 
  $\mathbb{B}_9$ 
&  $\mathcal{L}_c^9 S^4$ & $\GG_m^9 \times \GG_m$  \\
1 & 10 & $E_{10}$ & hyperbolic $\mathfr{e}_{10}$  & 
  $\mathbb{B}_{10}$ 
&  $\mathcal{L}_c^{10} S^4$ & $\GG_m^{10} \times \GG_m$  \\
\rowcolor{lightgray} 
0 & 11 & $E_{11}$ & Lorentzian $\mathfr{e}_{11}$  & 
  $\mathbb{B}_{11}$ 
&  $\mathcal{L}_c^{11} S^4$ & $\GG_m^{11} \times \GG_m$  \\
\hline 
\end{tabular}
\vspace{-2mm} 
\caption{\label{table2} \footnotesize The $E_k$ pattern in 
Kac-Moody theory 
$(k\geq 9)$, further blowups of $\CC \PP^2$ and cyclifications of $S^4$.}
\end{table}

Note that our approach can be made quite general by looking at other 
topological spaces than $S^4$. This would then lose the connection to M-theory, but 
the rational homotopy theory aspects would still be interesting to explore. 
For instance, other spheres would have the same toric symmetries of the rational homotopy type and produce the same root data.

\vspace{-3mm} 
\paragraph{\large Acknowledgments.}
We are grateful to Alexey Bondal, Igor Dolgachev, Amer Iqbal, Mikhail Kapranov, and Urs Schreiber for helpful discussions. We are also grateful for the suggestion of the referee and editor to split the paper into a more mathematical part, which is what this paper is, and a more physical follow-up part \cite{SV:M-theory}. We appreciate that the anonymous referee practically worked with us on weeding out errors and restructuring the exposition to improve the paper.
The first author thanks the University of Minnesota, the Aspen Center for Physics, and 
the Park City Mathematics Institute (IAS) for hospitality during the work on this project,
and acknowledges the support by Tamkeen under the NYU Abu Dhabi Research Institute grant CG008.
The second author thanks NYU Abu Dhabi and Kavli IPMU for creating remarkable opportunities to initiate and work on this project.
His work
was also supported by World Premier International Research Center Initiative (WPI), MEXT, Japan, and a Collaboration Grant from the Simons Foundation (\#585720).

\section{The 4-sphere and its cyclifications as the universal targets for M-theory and its reductions} 
\label{Sec-S4cyc}

We will provide our main topological setting for the rest of the paper using (rational) 
homotopy theory, along the lines outlined in the Introduction.

\subsection{The Sullivan minimal model} 
\label{smm}

Here we replace the notion of a \emph{rational Sullivan minimal model}
of a topological space with a less common notion of a \emph{real
  Sullivan minimal model}, given that real coefficients of physical
fields could be a bit more natural than rational ones (see the 
discussion in \cite{FSS-Chern}). We will therefore
assume that our algebraic models are defined over the reals $\RR$
(see \cite{BSzcz}\cite{GM13}). 

\medskip
To every path-connected, \emph{nilpotent} (the fundamental group is
nilpotent and acts nilpotently on higher homotopy groups) topological
space $Z$, rational homotopy theory associates a minimal Sullivan
algebra, called the \emph{Sullivan minimal model} $M(Z)$ of $Z$,
a differential graded commutative $\RR$-algebra (DGCA) $M = M(Z) = (S(V), d)$ of a certain type, called a minimal Sullivan algebra,  after \cite{Su77}; see the definition below and standard rational homotopy theory references, e.g., \cite{felix-halperin-thomas}\cite{FOT08}\cite{GM13}.

\medskip
Here and henceforth, we
will be restricting our attention to spaces which have
finite-dimensional real homology groups and, respectively, minimal
Sullivan algebras having \emph{strong finite type}, i.e., based
on a graded vector space $V$ of finite total dimension, $\dim V <
\infty$. The spaces of  interest below, namely 
the four-sphere $S^4$ and its cyclifications, satisfy this condition.

\begin{defn}[Sullivan minimal models]
\label{SMM}
\begin{enumerate}[{\bf (i)}]
\setlength\itemsep{-2pt}
    \item 
  A \emph{Sullivan algebra} is a differential graded commutative
  $\RR$-al\-ge\-bra (DGCA) $(M,d)$ based on the free graded commutative
  algebra $M = S(V)$ on a graded real vector space $V = \bigoplus_{n > 0}
  V^n$ with a differential $d: M \to M$ of degree 1,
$d^2 = 0$, satisfying the following nilpotence condition, known as the
  \emph{Sullivan condition}: $V$ is the union of an increasing series
  of graded subspaces
  \begin{equation}
  \label{filtration}
  V(0) \subseteq V(1) \subseteq \dots
  \end{equation}
  such that $d(V(0)) = 0$ and $d(V(k)) \subseteq S(V(k-1))$ for $k \ge 1$.

\item
A \emph{Sullivan model} of a DGCA $A$ is a Sullivan algebra $M$ with a
\emph{quasi-isomorphism} $M \to A$, i.e., a homomorphism which
induces an isomorphism on cohomology.

\item
 We say that a Sullivan algebra is \emph{minimal} if
  \[
  d(M) \subseteq M^+ \cdot M^+,
  \]
  where $M^+ := \bigoplus_{n>0} M^n = S^{\ge 1}(V)$.

\end{enumerate}
\end{defn}

A minimal Sullivan model of a \emph{connected} (i.e., $A^n = 0$ for $n < 0$ and
$A^0 = \RR$) DGCA $A$ exists and is unique up to isomorphism.

\medskip
To every topological space $Z$, Sullivan's construction in
rational/real homotopy theory associates a DGCA $A_{\on{PL}}(Z)$,
called the algebra of real polynomial differential forms on $Z$.\footnote{We clarify that, by a little abuse
of terminology, we will often say ``rational'' even when one should more
correctly say ``real.'' However, it will always be clear from the context what
field of coefficients we is working with.}
If
$Z$ is a smooth manifold, one can take the de Rham algebra of smooth
differential forms on $Z$ instead. (This example is the main reason why 
we prefer real homotopy theory to rational one). If $Z$ is path-connected, 
nilpotent, and has finite-dimensional
real homology groups, then the DGCA $A_{\on{PL}}(Z)$ gives rise to a
minimal Sullivan model $S(V)$, defined up to isomorphism, called the
\emph{$($real$)$ Sullivan minimal model} of $Z$.

\subsection{$S^4$ as the universal target for M-theory via Hypothesis H} 
We adopt the perspective proposed in \cite{Sati13} of viewing the 4-sphere  $S^4$ as the 
universal space of form fields in M-theory. The  significance of this is that  $S^4$ encodes, entirely in its topology, the field $G_4$ 
and its dual $G_7$ as well as their 
{\it dynamics}.
This space is  viewed as a universal space in the sense that these field
configurations are given at the homotopy level by real homotopy classes of 
maps from spacetime $Y^{11}$ to $S^4$, and whenever geometry is included, one
would need all maps; see \cite{FSS15}\cite{FSS17}\cite{GS21} (but 
here we will concentrate on topology).

\medskip 
In 11-dimensional supergravity, which is the low-energy limit of M-theory, the equations of motion (EOMs) are \cite{CJS}
 $$
 d G_4 = 0, \qquad d*G_4 + \tfrac{1}{2}G_4 \wedge G_4=0\;.
 $$
When combined with the self-duality condition
\begin{equation}
  \label{C-fields}
G_7 := *G_4 \;,
\end{equation}
these may be rewritten as \eqref{EOM} by using $G_4$ and $G_7$, where
the fields $G_4$ and $G_7$ are represented by differential forms of degree 4 and 7, respectively, on the 11-dimensional spacetime $Y^{11}$ 
of M-theory, and $*$ denotes the Hodge star operator, which captures the dependence on the metric on $Y^{11}$.
Note that {\it locally} we may write  
$$
G_4 = dC_3, \qquad G_7 = dC_6 - \tfrac{1}{2} C_3 \wedge G_4\;
$$ 
for some differential forms $C_3$ and $C_6$, viewed as the corresponding {\it potentials}.
However, we will use the duality-symmetric (doubled field) 
formulation, where $G_4$ and $G_7$ are treated 
independently \cite{BBS98} (see also \cite{MaS} \cite{Sati-Form}\cite{ST17}\cite{BSS}
for more global treatments). This will also 
suppress any explicit dependence on the metric, suitable for 
our topological perspective, which may be regarded as describing the topological background of the full story. To get the full picture at the level of fields, one simply adds metric data to spacetime and imposes the duality relation $* G_4 = G_7$. We have also found a way to add metric data to the universal real homotopy model of $S^4$ (and its cyclifications) via introducing moduli parameters; see the end of \Cref{Sec-Ek}.

\medskip
The topological aspects of the M-theory dynamics at the real homotopy level are captured by the real homotopy
theory description given by the Sullivan minimal model 
of $S^4$, \cite{Sati13},
which we denote $M(S^4)$:\footnote{We will use lowercase letters for universal elements and uppercase letters to denote
spacetime fields.
}  
\vspace{-1mm} 
\begin{gather}
 \nonumber
  M(S^4) = (\RR[g_4, g_7], d)\, ,\\
 \label{S^4}
dg_4 = 0, \qquad dg_7 = -\tfrac{1}{2} g_4^2 \,,
\end{gather}
where the degree of each of the generators $g_4$ and $g_7$ is given by the 
corresponding subscript:
$\abs{g_4} = 4$, $\abs{g_7} = 7$.
Here we are choosing to include the factor of $-\tfrac{1}{2}$ in the
expressions of the model, as opposed to being absorbed by the generators
(see \cite[Ex.\ 3.3]{FSS17}).

\medskip 
Comparing \eqref{EOM} with \eqref{S^4}, we see that there exists a differential graded (dg) algebra homomorphism
\begin{align}
\label{homo}
M(S^4) &\longrightarrow (\Omega^\bullet(Y), d),
\\[-1mm] 
\nonumber
g_4 & \longmapsto G_4,
\\[-1mm]
\nonumber
g_7 & \longmapsto G_7,
\end{align}
where $(\Omega^\bullet(Y), d)$ is the de Rham algebra of the 11-dimensional spacetime $Y^{11}$. The de Rham algebra is, in fact, a real homotopy model of the manifold $Y^{11}$, and this model could be different from the Sullivan minimal model.
Rational (or, actually, real \cite{FSS-Chern}) homotopy theory provides a canonical continuous map
\begin{equation}
\label{q-cohomotopy}
Y \longrightarrow  S^4_\R,
\end{equation}
where $S^4_\R$ is the \emph{rationalization over $\RR$}  
of $M(S^4)$, a certain universal topological space whose Sullivan minimal model $M(S^4_\R)$ is isomorphic $M(S^4)$, such that the pullback map from the Sullivan minimal model of $S^4_\R$ to the de Rham model of $Y^{11}$ is given by \eqref{homo}.
The space $S^4_\R$ has the same real homotopy type as $S^4$ via a map $S^4 \to S^4_\R$, but $H_n(S^4_\R; \ZZ)$ is a real vector space for each $n \ge 1$. The rationalization may be obtained via a simplicial construction from the dg-commutative algebra $M(S^4)$; see \cite{BSzcz}\cite{FSS-Chern}.

\medskip
In \cite{Sati13}, it is suggested that there is actually a continuous map to the honest-to-goodness 4-sphere $S^4$,
\begin{equation}
\label{cohomotopy}
Y \longrightarrow  S^4,
\end{equation}
that induces the homomorphism \eqref{homo}.
Indeed, a comparison of this target to its linearization, that is to say, the
Eilenberg-MacLane classifying space $K(\mathbb{Z}, 4)$,
which encodes the C-field as captured by a degree 4 class, with the 
4-sphere is presented in \cite{GS21} through a Postnikov tower analysis. 
 The nonabelian nature and the shift by Pontrjagin classes in the 
 quantization condition of the C-field are studied in \cite{FSS19b}.

\medskip
The generator $g_4$ of $M(S^4)$ may be realized as a volume form on the sphere $S^4$, but in the de Rham algebra of $S^4$, there is no room for $g_7$. One may desire to have a model of the universal target space for M-theory, so that not only $g_4$, but also $g_7$ may be realized as a differential form. One such model was suggested to us by M.~Kapranov: it is the complement  
$\HH \PP^\infty \mathbin{\vcenter{\hbox{$\scriptscriptstyle\mathrlap{\setminus}{\hspace{.2pt}\setminus}$}}} \HH \PP^{\infty-2} = \bigcup_N \HH \PP^N \mathbin{\vcenter{\hbox{$\scriptscriptstyle\mathrlap{\setminus}{\hspace{.2pt}\setminus}$}}} \HH \PP^{N-2}$ of a codimension 2 plane $\HH \PP^{\infty-2}$ in the quaternionic projective space $\HH \PP^\infty$. The infinite-dimensional manifold $\HH \PP^\infty \mathbin{\vcenter{\hbox{$\scriptscriptstyle\mathrlap{\setminus}{\hspace{.2pt}\setminus}$}}} \HH \PP^{\infty-2}$ is homotopy equivalent to $\HH \PP^1 \cong S^4$ and may prove to be useful in 11d supergravity.

\subsection{The Quillen model and M-theory gauge structure} 
\label{Sec-Quillen} 

The Sullivan minimal model $M(Z)$ (see \cref{smm}) 
of each of the spaces $Z$ we are
considering has quadratic differential. This model 
is actually the symmetric algebra on a space of
generators having a certain homotopy-theoretic meaning:
\[
M(Z) = (S(Q(Z)[-1]^*), d),
\]
where $Q(Z)$ is the \emph{Quillen minimal model \footnote{Here we abandon the traditional notion of minimality, based on a free graded Lie algebra, in favor of a more modern one: $Q(Z)$ is an $L_\infty$-algebra with the zero differential, see 
\cite{BFMT20}. The differential $d$ on $M(Z)$ may be
identified as the Chevalley-Eilenberg differential, but this is beside
the point here.} of} $Z$ \cite{Quillen}, which in our quadratic-differential
case is given by the \emph{graded Lie algebra of real homotopy groups}
\[
Q(Z) := \pi_\bullet(Z) \otimes \RR[1]
\]
of $Z$.  Let us explain what this means.
Consider the \emph{real homotopy groups of $Z$}, as a graded vector
space over $\RR$ 
\[
 \pi_\bullet (Z) \otimes \RR := \bigoplus_{i \in \ZZ} \pi_i (Z) \otimes
 \RR,
\]

\vspace{-2mm} 
\noindent where
\[
\pi_i (Z) \otimes \RR :=
\left\{ \!\!\!\!
\begin{array}{cl}
  0 & \quad \text{for $ i \le 0$},\\
  \big(\pi_1 (Z)/[\pi_1(Z), \pi_1(Z)]\big) \otimes_\ZZ \RR &
  \quad \text{for $i = 1$},\\
  \pi_i (z)  \otimes_\ZZ \RR & \quad \text{for $i \ge 2$}.
  \end{array}
  \right.
\]
 If we shift the grading down by one and consider the
natural isomorphism
\[
\pi_\bullet (Z) \otimes \RR [1] \xrightarrow{\;\;\sim\;\;} \pi_\bullet (\Omega
Z) \otimes \RR\;,
\]
where $\Omega Z := \Map_* (S^1, Z)$ is the based loop space of $Z$, we will
get a natural graded Lie-algebra structure with respect to the
\emph{Samelson product}. This is induced on $\pi_\bullet (\Omega Z)
\otimes \RR$ by the commutator map $[-,-]: \Omega Z \times \Omega Z
\to \Omega Z$ of the concatenation $\Omega Z \times \Omega Z \to
\Omega Z$ of based loops in $Z$:
\[
  S^{a+b} = S^a \times S^b /S^a \vee S^b \xrightarrow{\; \gamma_{\, 1} \times
    \gamma_{\, 2}\;} \Omega Z \times \Omega Z \xrightarrow{\;[-,-]\;} \Omega Z\;,
  \]
for $\gamma_{\, 1} \in \pi_{a} (\Omega Z)$ and $\gamma_{\, 2} \in \pi_{b}
(\Omega Z)$, where $S^a \vee S^b$ is the wedge sum of the pointed spaces $S^a$ and  $S^b$.

\medskip 
An equivalent, more standard description of the corresponding
Lie bracket on $\pi_\bullet (Z)[1] \otimes \RR \cong \pi_\bullet
(\Omega Z) \otimes \RR$, is known as \emph{Whitehead product}, which does
not appeal to based loop spaces; see, e.g.,
\cite{felix-halperin-thomas}. We will not need it here
(see \cite{FSS-WZW}\cite[\S 3.2]{FSS-Chern} for detailed discussion in this context).

\medskip 
Now let us return to the case
$Z = \mc{L}_c^k S^4$. We will start with
$k =0$.
The following is, in a  sense, dual
to the description of the fields via the Sullivan minimal model of $S^4$ in \eqref{S^4}.

\begin{example}[M-theory gauge algebra via the Quillen model of $S^4$] 
\label{Ex-Malgebra} 
 The Quillen model of $S^4$ is just the graded Lie algebra on two
generators
\vspace{-2mm} 
\begin{gather*}
  Q(S^4) = \RR e_3 \oplus \RR e_6,\\
  \abs{e_3} = 3, \quad \abs{e_6} =
  6,\\
  [e_3,e_3] = e_6, \quad [e_3,e_6] = 0, \quad [e_6,e_6] = 0,
\end{gather*}
which is actually the free graded Lie algebra over $\RR$ generated by $e_3$. This captures the algebra of gauge transformation of the C-field and its
dual and, hence, also captures the Dirac quantization of the M-branes
(see \cite{CJLP2}\cite{LLPS99}\cite{KS03}\cite{tcu}).
\end{example} 

We will revisit the reduction of this algebra,
corresponding to cyclic loop spaces $\mc{L}_c^k S^4$ for $k > 0$
in Example \ref{Ex-reducedAlg}.

\subsection{Dimensional reduction of M-theory on tori and iterated cyclifications of $S^4$}
\label{cyclification}


\paragraph{Type IIA and looping.} 
The reduction from M-theory in 11 dimensions to type IIA string theory in 10 dimensions is 
captured by looping. Such a process has been utilized topologically 
at the level of bundles  
leading to loop bundles in 10 dimensions starting from an $E_8$ gauge bundle
(capturing the purely topological aspects of $G_4$) in 11 dimensions \cite{MaS}. 
In our current case,  we do this at the level of universal target spaces, taking into account the rotation of the circle. This leads to the concept of 
 a cyclic loop space or cyclification, advocated in \cite{FSS-L00} (see \eqref{hquotient} below)
 \vspace{-2mm} 
$$
\mc{L}_c S^4: = \mc{L} S^4 \dslash S^1.
$$

\vspace{-2mm} 

\paragraph{Why cyclic loop space.} 
Let us provide mathematical justification of the appearance of the cyclic loop space. The
10-dimensional spacetime of type IIA string theory is actually the
``11-dimensional spacetime $Y^{11}$ wrapped on $S^1$'',
which translates to the mathematical language simply as the
10-dimensional quotient $Y^{11}/S^1$ by a free action of $S^1$. Now, there
is an adjunction
\vspace{-2mm} 
\begin{equation}
\label{adjunction}
\Mor_{/BS^1} (Y^{11} / S^1, \mc{L}_c Z) \xrightarrow{\;\; \sim \;\;} \Mor (Y^{11}, Z),
\end{equation}

\vspace{-1mm} 
\noindent where $Y^{11}$ is a space with a free action of $S^1$, $Z$ is another
topological space, the left-hand side is the set of morphisms in the
category $/BS^1$ of spaces over $BS^1$ (equivalent to the category of
principal $S^1$-bundles, such as $Y \to Y/S^1$ and $\mc{L} Z \to
\mc{L}_c Z$), and the right-hand side is the set of continuous maps $Y
\to Z$; see \cite{FSS-L00}\cite[Theorem 2.44]{BSS}. This adjunction produces a map
\begin{equation}
  \label{cyclified}
Y/S^1 \longrightarrow \mc{L}_c S^4
\end{equation}
from the map \eqref{cohomotopy} (or using the corresponding rationalization
$S^4_\RR$ of $S^4$ over $\RR$ and the map \eqref{q-cohomotopy}). If the action of $S^1$ on $Y^{11}$ 
is not free, one has to replace the naive quotient $Y^{11}/S^1$
by the homotopy quotient $Y^{11} \dslash S^1$ in the above.
Roughly speaking, thinking of $\mc{L}_c S^4$ as the space of unparameterized (also known as equivariant) free
loops in $S^4$, a map like \eqref{cohomotopy} from an $S^1$-space $Y^{11}$ will produce 
a map \eqref{cyclified} which assigns to a point $y \in Y^{11}/S^1$ the map that takes 
the $S^1$-orbit in $Y^{11}$ over $y$ to $S^4$ by the given map $Y^{11} \to S^4$. 
 
\medskip 
 We would like to make more precise what we mean by iterated cyclic loop spaces. 
 Let $\mc{L} Z= \Map (S^1, Z)$ be the \emph{free loop space} of a
topological space $Z$, which we will assume to be path-connected. The
free loop space admits a natural (right) action of the group $S^1$ by
rotating loops
\[
(f \cdot z) (z') = f(z z')\,,
\]
for $f \in \mc{L} Z$ and $z , z' \in S^1$, and we define the
\emph{cyclic loop space} or \emph{cyclification} $\mc{L}_c Z$ to be the \emph{homotopy quotient}
\begin{equation}
\label{hquotient}
\mc{L}_c Z: = \mc{L} Z \dslash S^1.
\end{equation}
One may construct it using the \emph{Borel construction} $(\mc{L} Z
\times ES^1)/S^1$, where $ES^1$ is the universal space for $S^1$ bundles,
as the quotient by the diagonal action of $S^1$,
i.e., the quotient by the relation
\[
(f \cdot z, e) \sim (f, z \cdot e)\,,
\]
for $f \in \mc{L} Z$, $e \in ES^1$, and $z \in S^1$. We will use
the convention that \emph{if $\, Z$ is not simply
  connected}, when $\mc{L} Z$ acquires path components, we
\emph{retain only the component of the constant loop} and in that case use the same
notation
\[
\mc{L} Z := (\mc{L} Z)_0\;,
\]
which also makes our cyclic loop spaces
\emph{path-connected}.

 \paragraph{M-theory on $T^k , k \leq  8$ and higher cyclic loop spaces.}
The reduction of the above system on tori $T^k = (S^1)^k$ leads to a low-dimensional
system corresponding to duality-symmetric supergravity actions in
these dimensions matching the EOMs of M-theory
compactified on a $k$-dimensional torus $T^k$ in an iterative way, as we explain below. 

\medskip 
Suppose the 10-dimensional spacetime $X^{10} = Y^{11}/S^1$ of type IIA string theory
(or the 10-dimensional spacetime $X^{10}$ of type IIB supergravity) also has 
a free action of $S^1$. Then applying the adjunction \eqref{adjunction} 
to the map \eqref{cyclified}, we get a natural map
\[
X^{10}/S^1 = (Y^{11}/S^1)/S^1 \longrightarrow \mc{L}_c^2 S^4.
\]
This would be the second toroidal compactification of M-theory to a 9-dimensional spacetime. Iterating this process, for each $k$, $ 0 \le k \le 11$, we will be getting a map 
\[
\left( \dots((Y^{11}/S^1)/S^1) \dots /S^1\right) \longrightarrow \mc{L}_c^k S^4,
\]
the quotient being by $k$ copies of $S^1$,
from the $k$th toroidal compactification of the 11-dimensional M-theory to the $k$-fold cyclification $\mc{L}_c^k S^4$ of the four-sphere.

\medskip 
Hence, for $k \ge 0$, the \emph{iterated
  cyclic loop space} $\mc{L}_c^k Z$ is the $k$-fold iteration of the
cyclic loop space construction:
\begin{align*}
  \mc{L}_c^0 Z & := Z,\\
  \mc{L}_c^k Z & := \mc{L}_c(\mc{L}_c^{k-1} Z)  \qquad \text{for $k \ge 1$}.
  \end{align*}
  We will often refer to iterated loop spaces as \emph{cyclifications}.
We will be interested mostly in the iterated cyclic loop spaces
$\mc{L}_c^k S^4$ of the 4-sphere $S^4$, both for $0 \le k \le 8$ and for $k \geq 9$.
For $k=1$ this is studied extensively in \cite{FSS-pbranes}\cite{FSS-L00}
in relation to T-duality and the twisted K-theory description of the fields
in type II string theory (cf. \cref{IIB}). The generalization
is possible thanks to the general construction in \cite{BSS}.

\medskip 
The above matching of the EOMs with the differential on the Sullivan model of the $k$-fold cyclic loop space $\mc{L}_c^k S^4$ is another striking phenomenon, which we observe in this paper; see \cref{Sullmm}.

\begin{example}[The brane/reduced M-theory gauge algebra via the Quillen model of the cyclification of $S^4$] 
\label{Ex-reducedAlg} 
We have seen the description of the M-theory gauge algebra via Quillen model of $S^4$ 
is just the graded Lie algebra on two generators in Example \ref{Ex-Malgebra}. 
The gauge algebra of the reduced fields in $11-k$ dimensions 
will correspond to the Quillen model of $\mc{L}_c^kS^4$.
We will not work out the details, as it is clear that matching the fields and EOMs 
of reduced M-theory with
the generators and
their differentials in the Sullivan minimal model of
the cyclification of $S^4$
implies similar matching between the gauge algebra and the Quillen model. Thus, the gauge algebra 
formulas in \cite{CJLP2}\cite{LLPS99} should be reproduced just by looking at the Quillen model of $\mc{L}_c^kS^4$. 
\end{example}

\subsection{The Sullivan minimal model of the cyclic loop space}
\label{Sullmm}

Our goal here is to describe the Sullivan minimal model of the cyclic loop space 
$M(\mc{L}_c Z)$ in terms of the Sullivan minimal model $M(Z) = S(V)$ of the space $Z$.
Along the way, we also provide a duality-symmetric reduction of fields 
in M-theory on tori, extending and organizing partial/local results in the 
supergravity and M-theory literature. 

\medskip 
Suppose $Z$ is a path-connected, nilpotent space with finite-dimensional real 
homotopy groups. With our convention in \cref{cyclification}, we claim the following:
\begin{prop}[Basic properties of  the cyclic loop space] 
\label{Prop-LcZconn}
\begin{enumerate}[\bf (i)]
  \setlength\itemsep{-2pt}
\item
$\mc{L}_c Z$ is also path-connected, nilpotent, and has finite-dimensional
rational homotopy groups.
\item 
The Sullivan minimal model $M(\mc{L}_c Z)$ of $\mc{L}_c Z$ is given by the extension \eqref{ext:cyc} below.
\end{enumerate}
\end{prop}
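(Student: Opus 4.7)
For part \textbf{(i)}, the plan is to exploit the homotopy fibration
\[
\mc{L} Z \longrightarrow \mc{L}_c Z \longrightarrow BS^1
\]
associated with the Borel construction \eqref{hquotient}. Since $BS^1 \simeq \CC\PP^\infty$ is simply connected and the chosen component of $\mc{L} Z$ is path-connected, the long exact sequence of the fibration shows $\mc{L}_c Z$ is path-connected. Nilpotence of $\mc{L}_c Z$ then follows from a standard fact: the total space of a fibration whose base is simply connected and whose fibre is nilpotent is itself nilpotent. The only ingredient still to verify is nilpotence of $\mc{L} Z$ (its basepoint component), which we derive from the evaluation fibration $\Omega Z \to (\mc{L} Z)_0 \to Z$ — here the base $Z$ is nilpotent by hypothesis and the fibre $\Omega Z$ is always nilpotent (even simple) — by the same total-space-of-nilpotent-fibration principle applied to the action of $\pi_1$. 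For finite-dimensional real homotopy, I read off from the long exact sequences of both fibrations that $\pi_n(\mc{L} Z)\otimes\RR$ is controlled by $\pi_n(Z)\otimes\RR \oplus \pi_{n+1}(Z)\otimes\RR$, and that $\pi_n(\mc{L}_c Z)\otimes\RR$ differs from $\pi_n(\mc{L} Z)\otimes\RR$ only by the two-dimensional contribution from $\pi_\bullet(BS^1)\otimes\RR$; strong finite type of $M(Z)=S(V)$ therefore transfers to $\mc{L}_c Z$.

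For part \textbf{(ii)}, the approach is to build a Koszul–Sullivan (K-S) extension matching the above fibration. The minimal model of the base is $(\RR[\omega],0)$ with $|\omega|=2$, and the minimal model of the fibre is the Sullivan–Vigué-Poirrier model
\[
M(\mc{L} Z) \;=\; \bigl(S(V \oplus sV),\, D\bigr), \qquad |sv| = |v|-1,
\]
where $s$ is the unique degree $-1$ derivation with $s(v)=sv$, $s(sv)=0$, and $D$ is defined by $D|_V = d$ and $D(sv)=-s(dv)$ (extended as a derivation). To obtain $M(\mc{L}_c Z)$, I would adjoin $\omega$ and twist the differential by the derivation implementing the $S^1$-rotation on $\mc{L} Z$, which at the model level is precisely the Connes-type operator $s$; this yields
\[
M(\mc{L}_c Z) \;=\; \bigl(S(V \oplus sV) \otimes \RR[\omega],\, D_c\bigr),\qquad
D_c(v) = dv - \omega\, sv,\quad D_c(sv) = -s(dv),\quad D_c(\omega)=0,
\]
fitting into the K-S extension
\[
(\RR[\omega],0) \;\hookrightarrow\; M(\mc{L}_c Z) \;\twoheadrightarrow\; M(\mc{L} Z),
\]
which is the content of \eqref{ext:cyc}. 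To finish, I would verify $D_c^2=0$ by a direct derivation calculation, check minimality (the quadratic part of $D_c$ lands in $M^+\cdot M^+$ provided $d$ does), and show that the resulting DGCA indeed models $\mc{L}_c Z$ — either by invoking the general K-S extension theorem of Halperin (a nilpotent fibration whose base and fibre have Sullivan minimal models and whose fibre has finite-dimensional real cohomology in each degree admits a K-S model constructed exactly this way), or by identifying $D_c$ with the standard Cartan model for the $S^1$-equivariant cohomology of $\mc{L} Z$.

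The main obstacle is the last verification: showing that the twist term $-\omega\, sv$ is the \emph{correct} Cartan-model contribution of the rotation action on $\mc{L} Z$. The cleanest route is to cite the well-known identification of the Sullivan model of the Borel construction with the Cartan model of equivariant cohomology, together with the fact that the infinitesimal rotation of loops is represented on $M(\mc{L} Z)$ by the shift derivation $s$ — this is classical (Jones; Vigué-Poirrier) and has already been used in this physics context in \cite{FSS-pbranes}\cite{FSS-L00}\cite{BSS}, so the proof can proceed by invoking those references rather than redoing the homotopy-theoretic identification from scratch.
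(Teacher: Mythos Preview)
Your approach is essentially correct and close in spirit to the paper's, with two differences worth flagging.

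For part \textbf{(i)}, you argue abstractly via the Borel fibration $\mc{L} Z \to \mc{L}_c Z \to BS^1$ and the standard ``nilpotent fibre over simply connected base $\Rightarrow$ nilpotent total space'' principle, after first establishing nilpotence of $(\mc{L} Z)_0$ from the evaluation fibration. The paper instead works with the \emph{other} fibration, the principal $S^1$-bundle $S^1 \to \mc{L} Z \times ES^1 \to \mc{L}_c Z$, and computes the homotopy groups explicitly: $\pi_i(\mc{L} Z) \cong \pi_i(Z)\oplus\pi_{i+1}(Z)$ for $i\ge 2$, $\pi_1(\mc{L} Z) \cong \pi_2(Z)\rtimes\pi_1(Z)$, and $\pi_\bullet(\mc{L}_c Z)$ differs only by an extra $\ZZ$ in $\pi_2$ on which $\pi_1$ acts trivially (because the fibre $S^1$ includes through the contractible $ES^1$). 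Your route is cleaner; the paper's route yields the explicit homotopy-group description used later.

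For part \textbf{(ii)}, there is a genuine technical gap: since the proposition is stated for \emph{nilpotent} $Z$, not simply connected $Z$, the generating space $V$ may have $V^1\neq 0$. Your $sV$ must then be the \emph{truncated} desuspension $V[1]$ of \eqref{truncation}, with $s(V^1)=0$; otherwise $sv$ for $v\in V^1$ would sit in degree~$0$ and the algebra would fail to be a Sullivan algebra. The paper handles this explicitly, and the footnoted extension of the Vigu\'e-Poirrier--Burghelea argument to the nilpotent case (via Halperin's fibration theorem) hinges on exactly this point. Apart from this, your K-S/Cartan-model construction matches the paper's formulas \eqref{dc} (your sign convention $D_c v = dv - \omega\, sv$ versus the paper's $d_c v = dv + sv\cdot w$ is immaterial, absorbed by $w\mapsto -w$), and the paper simply cites \cite{vigue-burghelea} rather than rederiving the identification.
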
 

\begin{proof}
\textbf{(i)}
Indeed, the path-connected free loop space $\mc{L} Z$
sits in a fiber sequence 
\(
\label{Based-unbased}
\Omega Z \to \mc{L} Z \xrightarrow{\;{\rm ev}_*\;} Z,
\)
where ${\rm ev}_*$ is evaluation at the basepoint $* \in S^1$ and $\Omega Z =
\Map_*(S^1, Z)$ is the \emph{based loop space}.
The constant-loop section of ${\rm ev}_*$ splits the homotopy groups:
$\pi_i(\mc{L} Z) = \pi_i(Z) \oplus \pi_i(\Omega Z) = \pi_i(Z) \oplus \pi_{i+1} (Z)$ 
for $i \ge 2$ and $\pi_1 (\mc{L} Z) = \pi_1(\Omega Z) \rtimes \pi_1(Z)  = \pi_2(Z) \rtimes \pi_1(Z)$. The action of $\pi_1 (\Omega Z)$ on $\pi_i(\mc{L} Z)$ for $i \ge 2$ is trivial, while the action of $\pi_1 (Z)$ on $\pi_i(\mc{L} Z)$ for $i \ge 2$ is the sum of actions on $\pi_i(Z)$ and $\pi_{i+1}(Z)$. Therefore, the action of $\pi_1 (\mc{L} Z)$ on $\pi_i (\mc{L} Z)$, $i \ge 2$, is nilpotent. The group $\pi_1 (\mc{L} Z)$ itself is nilpotent as a semidirect product of a nilpotent group acting nilpotently on an abelian group. Indeed, the action of $\pi_1(Z)$ on $\pi_1(\Omega Z)$ defining the semidirect product is compatible with the standard action of $\pi_1(Z)$ on $\pi_2(Z)$, which is assumed to be nilpotent, see , e.g., \cite{AFO17}. As concerns the homotopy groups of the cyclic loop space $\mc{L}_c Z$, it just adds a copy of $\ZZ$ 
to $\pi_2(\mc{L} Z)$, on which $\pi_1(\mc{L}_c Z) = \pi_1(\mc{L} Z)$ acts trivially, as one can see  from the fiber sequence
\begin{equation*}
S^1 \to \mc{L} Z \times ES^1 \to \mc{L}_c Z,
\end{equation*}
in which the inclusion of the fiber $S^1$ over the constant loop factors through the contractible space $ES^1$.

\medskip
\noindent
\textbf{(ii)} Vigu\'{e}-Poirrier and Burghelea \cite{vigue-burghelea} show\footnote{Vigu\'{e}-Poirrier and Burghelea assume that $Z$ is simply connected, but their argument applies to the more general nilpotent case verbatim, by taking $V[1]$ to be the truncated
desuspension \eqref{truncation} of $V$, given that Halperin's theorem on fibrations \cite[Theorem 20.3]{Halperin83} is done in the nilpotent case.} that the Sullivan minimal model
$M(\mc{L}_c Z)$ is isomorphic to $(S(V \oplus V[1] \oplus \RR w), d_c)$,
which is an \emph{extension} (in the sense of Halperin \cite{Halperin83}, also known as
a \emph{$\Lambda$-extension})
\begin{equation}
  \label{ext:cyc}
(\RR[w],0) \longrightarrow (S(V \oplus V[1] \oplus \RR w), d_c) \longrightarrow (S(V \oplus
V[1]), d_f)
\end{equation}
corresponding to the homotopy fiber sequence
\begin{equation*}
\mc{L} Z \to \mc{L}_c Z \to  B S^1 .
\end{equation*}
Here
$
V[1] = \bigoplus_{n > 0} V[1]^n
$
is the \emph{truncated desuspension}, the graded vector space with
components
\vspace{-2mm}
\(
\label{truncation}
V[1]^n : =
\begin{cases}
  V^{n+1} & \text{for $n > 0$},\\
0 & \text{for $n \le 0$}.
\end{cases}
\)
The truncation affects only the non-simply connected case, when $V^1
\ne 0$, as it truncates only this graded component. The differentials
$d_f$ and $d_c$ may be described as follows. Let
\begin{align*}
  s: V & \longrightarrow V[1],\\
  v & \longmapsto
  \begin{cases}
    v & \text{for $v \in V^n$, $n > 1$},\\
    0 & \text{for $v \in V^1$},
    \end{cases}  \end{align*}
be the natural map, a surjection of degree $-1$. Then, for all $v \in V$,
\begin{align}
 \nonumber
 d_f v  &:= dv, \\
 \label{df}
 d_f sv  &:= -s dv,
\end{align}
 where $s$ is extended to a unique degree-$(-1)$ derivation
of $S(V \oplus V[1])$ such that $s (u) = 0$ for all $u \in V[1]$. Since this derivation $s$ is a version of a graded polynomial de Rham differential, we have $s^2 =0$. Then, for all $ v \in V$, we define
\vspace{-2mm}
\begin{align}
\nonumber d_c v &:= d v + sv \cdot w,\\[-2pt]
  \label{dc}
  d_c sv  &:= -s d v,\\[-2pt]
  \nonumber
  d_c w  &:= 0. \qedhere
\end{align}
\end{proof}

\medskip 
Similarly, the Sullivan minimal model of the free loop space $\mc{L}
Z$ is a Halperin extension
\begin{equation}
  \label{ext:free}
(S(V), d) \longrightarrow (S(V \oplus V[1]), d_f) \longrightarrow (S(V[1]), 0)
\end{equation}
corresponding to the fiber sequence \eqref{Based-unbased}. 

 \begin{example}[Reduction on a circle and  the cyclification
    $\mc{L}_c S^4$] 
    \label{example1}
From $M(S^4)$ in \eqref{S^4}, the Sullivan minimal model $ M(\mc{L}_c S^4)$ of the cyclic loop
space of $S^4$ can be presented as (cf. \cite[Ex. 3.3]{FSS17}\cite[Ex. 2.7]{FSS-L00})
\begin{gather*}
  M(\mc{L}_c S^4) = (\RR[g_4, g_7, sg_4, sg_7, w], d)\;,\\
  dg_4 = sg_4 \cdot w, \qquad dg_7 = -\tfrac{1}{2} g_4^2 + sg_7 \cdot w\;,\\
  d sg_4 = 0, \quad dsg_7 = sg_4 \cdot g_4, \quad d w = 0\;.
\end{gather*}
As in \cite[Ex. 2.7]{FSS-L00},
making the change of variables $f_2=w$, $h_3=sg_4$, $f_4=g_4$, 
$f_6=sg_7$, and $h_7= g_7$, this can be rewritten as
\[
M(\mathcal{L}_cS^4)=\big(\mathbb{R}[f_2,f_4,f_6,h_3,h_7],\ df_2=0,\, dh_3=0,\, df_4=h_3f_2,\, df_6=h_3f_4,\,
dh_7= - \tfrac{1}{2} {f_4}^2 + f_2f_6\big).
\]
Being defined for $\mathcal{L}_cS^4$, these equations are universal, and we obtain the corresponding ones in
spacetime by pulling back, 
giving the datum of a closed 3-form $H_3$ and of 2-, 4- and 6-forms $F_2$, $F_4$ and $F_6$ on $X$ such that
\[
dF_2=0;\qquad dF_4=H_3\wedge F_2;\qquad dF_6= H_3\wedge F_4,
\]
together with a 7-form $H_7$, which is a potential for a certain
8-form:
\[
d H_7 = - \tfrac{1}{2} F_4\wedge F_4 + F_2 \wedge F_6;
\]
cf.\ \cite{Campbell}\cite{Huq}\linebreak[0]\cite{Giani} for the classical theory and \cite{BNS04}
for the duality-symmetric formulations of type IIA $D=10$ supergravity. 
Hence, if $Y\to X$ is
rationally a principal $S^1$-bundle, then a
map $Y \to S^4$ in the rational category
will induce, by the hofiber/cyclification adjunction \eqref{adjunction}, such a
set of differential forms on the base $X$.  As explained in
\cite{FSS17}\linebreak[0]\cite{FSS-L00}, the above equations for the differentials
of the $F_{2n}$'s are precisely (a subset of) the equations for a
$H_3$-twisted cocycle $\sum_n F_{2n}u^n$ in
$(\Omega^\bullet(X)(\!(u)\!),d_{H_3})$ with $F_0=0$, corresponding
to the EOMs and Bianchi identities captured by twisted
(rational) even K-theory, as appropriate for type IIA string theory. 
We consider the type IIA and IIB perspectives in \cref{IIB}. 
\end{example}

Proposition \ref{Prop-LcZconn} can be iterated to give the same result for 
$\mc{L}_c^2Z$, so that we can now continue with further dimensional reduction.
\footnote{
We will have multiple circle fiber directions and corresponding labels on 
the contractions $s_i$ and the classes of the circles $w_i$. We realize that
the notation is not fully in parallel with the convention of using such labels to indicate the degree, but choosing another notation such as $s_{(i)}$ might overload the expressions when multiple such occur below. We hope the distinction 
will be clear from the context.}

\begin{example}[Reduction on a 2-torus and the double cyclification
    $\mc{L}_c^2 S^4$]
  \label{example2}
\begin{sloppypar}
For the Sullivan minimal model
$M(\mc{L}^2_c S^4)$ of the double cyclification of the sphere $S^4$, we have
  \begin{align*}
  M(\mc{L}^2_c S^4) & = \left(\RR[g_4, g_7, s_1 g_4, s_1 g_7, w_1, s_2 g_4,
    s_2 g_7, s_2 s_1 g_4, s_2 s_1 g_7, s_2 w_1, w_2], d\right),\\
  dg_4  &= s_1 g_4 \cdot w_1 + s_2 g_4 \cdot w_2, \qquad dg_7 = -\tfrac{1}{2} g_4^2 + s_1 g_7 \cdot w_1 + s_2 g_7 \cdot w_2,\\
  d s_1 g_4 & = s_2 s_1 g_4 \cdot w_2, \quad ds_1 g_7 = s_1 g_4 \cdot g_4 + s_2 s_1 g_7 \cdot w_2, \\
  d s_2 g_4 & = - s_2 s_1 g_4 \cdot w_1 + s_1 g_4 \cdot s_2 w_1, \quad ds_2 g_7 = s_2 g_4 \cdot g_4 - s_2 s_1 g_7 \cdot w_1 - s_1 g_7 \cdot s_2 w_1, \\
  d s_2 s_1 g_4 & = 0, \quad d s_2 s_1 g_7 = -s_2 s_1 g_4 \cdot g_4 + s_1 g_4 \cdot s_2 g_4, 
  \\
   d w_1 & =  s_2 w_1 \cdot w_2,
  \quad d s_2 w_1 = 0, \quad d w_2 =  0\;.
  \end{align*}
  \end{sloppypar}
\noindent These equations are again universal, and we again obtain the corresponding ones in
spacetime by pullback. These are the EOMs and Bianchi identities
of type II string theory
at low energy, i.e., type II supergravity in 9 dimensions in the 
duality-symmetric formulation. A more common physics notation for the above fields (once pulled back to the 9d spacetime $(X/S^1)/S^1$, but omitting the pullback notation) is
\vspace{-2mm} 
$$
\begin{array}{cccccc}
g_4=F_4, & g_7=H_7, & s_1 g_4=H_3^{(1)}, & s_1 g_7=F_6^{(1)}, &
s_2g_4=H_3^{(2)}, &   s_2 g_7=F_6^{(2)}, 
\\
s_2 s_1 g_4=\mathcal{F}_2, & s_2 s_1 g_7=\mathcal{F}_5, &  w_1=F_2^{(1)},&
w_2=F_2^{(2)}, 
& s_2 w_1 = F_1^{(2)}.
\end{array} 
$$
The classical EOMs are given in \cite{BHO95}\cite{DR}, so the above can be
viewed as a duality-symmetric extension. 
\end{example}

\begin{remark}[Iterated $S^1$ vs.\ direct $T^2 = S^1 \times S^1$ reduction]  
\label{Rem-2S1vsT2}
We compare the two settings:
\item {\bf (i)} In the iterated case, $M(\mc{L}_c^2 S^4)$, notice the appearance of the \emph{axion} $s_2 w_1$, which would be absent in the direct reduction on $T^2$
 corresponding to the ``toroidification'' $M(\Map(T^2, S^4) \dslash T^2)$.

\item {\bf (ii)} Note that $dw_1 = s_2 w_1 \cdot w_2$ above, whereas in the direct $T^2$-reduction, we would have $d w_1 = d w_2 = 0$. Likewise, note the two terms in the differential $ds_2g_4$ above, while in the direct reduction, it
  would be on an equal footing with $ds_1 g_4$, that is, $ds_2 g_4=s_1 s_2
  g_4 \cdot w_1$. 
\end{remark}

\begin{example}[Reduction on a 3-torus and the triple cyclification
    $\mc{L}_c^3 S^4$]
  \label{example3}
Since the double cyclification $\mc{L}_c^2 S^4$ is not simply
connected and its Sullivan minimal model has one generator, $s_2 w_1$,
of degree one, for the triple cyclification $\mc{L}_c^3 S^4$, a new
phenomenon happens: the desuspension $V[1]$ becomes truncated, see \eqref{truncation}. We will
describe  the Sullivan minimal model
$M(\mc{L}_c^3 S^4)$. The list of generators will include a new
generator, $w_3$, and those of the previous case,
\Cref{example2}, as well as their desuspensions, $s_3 g$, where $g$ is
a former generator of $M(\mc{L}_c^2 S^4)$, except for $g = s_2 w_1$,
which gets truncated: we may just as well assume that
\begin{equation}
  \label{s_3}
s_3 s_2 w_1 = 0\,.
\end{equation}
This truncation will also affect the equations for the differentials
in the following way. In accordance with \eqref{dc}, the differentials
of the former generators $g$ of $M(\mc{L}_c^2 S^4)$ from 
Example  \ref{example2} above will all acquire an extra term $s_3 g \cdot w_3$, such as
\[
d g_4 = s_1 g_4 \cdot w_1 + s_2 g_4 \cdot w_2 + s_3 g_4 \cdot w_3\,,
\]
except for the equation
\(
\label{eq-ds2w1}
d s_2 w_1 = 0\,,
\)
which will remain intact. 
The differentials of the desuspensions $s_3
g$ of the former generators $g$ of $M(\mc{L}_c^2 S^4)$ will work by
the expression in \eqref{dc}, such as
\[
d s_3 g_4 = - s_3 s_1 g_4 \cdot w_1 + s_1 g_4 \cdot s_3 w_1 - s_3 s_2
g_4 \cdot w_2 + s_2 g_4 \cdot s_3 w_2\,,
\]
but those generators $g$ whose differentials contained $s_2 w_1$ will
be affected in the way that in \eqref{dc}, we shall impose the
relation \eqref{s_3}:
\begin{gather*}
d s_3 w_1 = s_2 w_1 \cdot s_3 w_2, \qquad
d s_3 s_2 g_4 = - s_3 s_2 s_1 g_4 \cdot w_1 - s_2 s_1 g_4 \cdot s_3
w_1 - s_3 s_1 g_4 \cdot s_2 w_1\;,\\
d s_3 s_2 g_7 = - s_3 s_2 g_4 \cdot g_4 + s_2 g_4 \cdot s_3 g_4 + s_3
s_2 s_1 g_7 \cdot w_1 + s_2 s_1 g_7 \cdot s_3 w_1 + s_3 s_1 g_7 \cdot
s_2 w_1\;.
  \end{gather*}
Let us list all the $d$-closure equations for the generators of
$M(\mc{L}_c^3 S^4)$, as this will be important for us later:
\begin{equation*}
 d s_3 s_2 s_1 g_4 = 0, \qquad   d s_2 w_1 = 0, \qquad d s_3 w_2 = 0,
  \qquad d w_3 = 0\;.
  \end{equation*}
The above equations are again universal, and we obtain the corresponding ones in
spacetime by pullback, and capture the equations of motion and Bianchi identities 
of type II string theory
at low energy, i.e., type II supergravity in 8 dimensions in the 
duality-symmetric formulation, extending, for instance, \cite{AT85}. 
\end{example}

\begin{example}[Reduction on $T^k$ and 
$k$-fold cyclifications $\mc{L}_c^k S^4$ for $k \ge 3$]
  \label{examplek}
  
The pattern of Example \ref{example3}, as predicted by Equations
\eqref{ext:cyc}--\eqref{dc}, pertains. The $d$-closed generators,
which will play an important role later, consists of 

\vspace{-3mm} 
\begin{enumerate}[{\bf (i)}]
 \setlength\itemsep{-2pt}
     \item 
$k$ elements of
degree one:
\[
s_3 s_2 s_1 g_4, \quad s_2 w_1, \quad s_3 w_2, \quad \dots, \quad s_k w_{k-1}\,,
\]
\item and an element of degree two:
$
w_k$.
\end{enumerate} 

\vspace{-2mm} 
\noindent The number $n_k$ of generators does not follow the recursion $n_{k} =
2 n_{k-1} +1$ seemingly suggested by \eqref{ext:cyc}, because of the
truncations. However, the set of generators is easy to account for:
\begin{align*}
s_{i_l} \dots s_{i_1} g_4,& \qquad \text{where $0 \le l \le 3$ and $1 \le
  i_1 < \dots < i_l \le k$},\\
s_{i_l} \dots s_{i_1} g_7,& \qquad \text{where $0 \le l \le
  6$ and $1 \le i_1 < \dots < i_l \le k$},\\
w_i,& \qquad 1 \le i \le k,\\
s_j w_i,& \qquad 1 \le i < j \le k.
\end{align*}
The pullbacks of their differentials to spacetime 
correspond, likewise, to the EOMs and Bianchi identities of 
duality-symmetric low energy string theory/supergravity in dimensions $11-k$.
Supplying these leads to a duality-symmetric extension of 
the non-duality symmetric versions studied, e.g., in \cite{LP96}\linebreak[0]\cite{LLP98},
and surveyed in \cite{CDF}\cite{Ta98}. 
\end{example}

\subsection{Type IIB and T-duality} 
\label{IIB}

The equivalence of real Sullivan minimal models $M(X) \cong M(Y)$ is in
correspondence with the underlying topological spaces $X$ and $Y$
being rationally (in fact, under our formalism, ``really") homotopy
equivalent. This will allow us to obtain a candidate for the real model
for type IIB without having to immediately  work out a topological space analogue 
for IIB of the cyclification of $S^4$ on the type IIA side. 

\medskip 
Starting with a topological model of the real homotopy type for type
IIA, let us call that $IIA$, and one for type IIB, let us call that
$IIB$, it would be ideal to have an equivalence upon dimensional reduction of both to
nine dimensions, i.e., in the spirit of the approach of \cite{FSS-pbranes}, upon
cyclification, $M(\mathcal{L}_c IIA) \cong M(\mathcal{L}_c IIB)$. This
would correspond to a real homotopy equivalence $\mathcal{L}_c IIA \sim
\mathcal{L}_c IIB$, a universal version of \emph{T-duality}. The situation here is subtler, and this section is devoted to discussing the intricacies of type IIB, versions of the equivalence $\mathcal{L}_c IIA \sim
\mathcal{L}_c IIB$ of real homotopy types, and the discrepancy between them.

\medskip 
Now starting with $S^4$ as the universal space for M-theory (as in 
\cite{Sati13} and \eqref{S^4}), we get the cyclification $\mathcal{L}_c S^4$ 
as the model for type IIA in ten dimensions (as in \cite{FSS17} and Example \ref{example1}). 
Dimensionally reducing 
further amounts to double cyclification 
$\mathcal{L}^2_c S^4$ in nine dimensions (as in Example \ref{example2}). 
On the other hand, dimensionally reducing 
type IIB to nine dimensions leads to the once-cyclified space 
$\mathcal{L}_c IIB$. In terms of $S^4$, the expected equivalence in nine dimensions  
would hence amount to the equivalence of Sullivan models 
$$
M(\mathcal{L}_c IIB) \cong M(\mathcal{L}^2_c S^4)
$$
and a real homotopy equivalence at the level of spaces
$$
\mathcal{L}_c IIB \sim \mathcal{L}^2_c S^4.
$$
While we will not work out the decyclification, we do have a path along which to proceed:
$$
\xymatrix{
&& M \ar[d]_{\mathcal{L}_c}  \ar@/_-3pc/[ddl]^{\mathcal{L}_c^2}
\\
IIB \ar[dr]^{\mathcal{L}_c}  && IIA  \ar[dl]_{\mathcal{L}_c}
\\
& 9d
}
$$
This diagram is not only motivated by physics but also by
Mysterious Duality and the del Pezzo story, in which we have
$$
\xymatrix{
&& \CC \PP^2 \ar[d]_{b}  \ar@/_-3pc/[ddl]^{b^2}
\\
\CC \PP^1 \times \CC \PP^1 \ar[dr]^{b}  && \BB_1  \ar[dl]_{b}
\\
& \BB_2
}
$$
where $b$ denotes the process of blowing up (rather than the blowup map, which would go in the opposite direction).

\paragraph{The Sullivan minimal model $M(IIB)$ of the real homotopy type $IIB$.}
Consider the free graded commutative algebra $\RR[\omega_1, h_3, \omega_3, \omega_5, h_7, \omega_7]$, the subscripts denoting the degrees of the respective elements. Define a differential by the following equations:
\begin{align}
\nonumber
d\omega_1&=0,\;\;\;\; \qquad \qquad \quad dh_3  =0,   
\\
\label{htypeIIB}
 d \omega_3&=h_3 \omega_1,  \qquad \qquad \;\; d\omega_5  =h_3 \omega_3,
\\
\nonumber
dh_7& =\omega_3 \omega_5 + \omega_7 \omega_1, \quad d\omega_7 =h_3 \omega_5.
\end{align}
Pulling back these universal differential forms to spacetime, 
we get the EOMs of $D=10$ type IIB supergravity
in the duality-symmetric formulation,
 and without imposing self-duality;
 cf.\  \cite{SW83}\cite{HW84}\cite{Schwarz83} for the classical formulation
 and \cite{DLS97}\cite{DLT98}
for the duality-symmetric formulation. 
Note that we do not include fields of degree greater than seven,
as in the dual picture of type IIA  (see \Cref{example2}),
this would require parametrized homotopy theory \cite{BSS}.
 Additionally, there are several subtleties in type IIB which makes a purely topological 
 perspective delicate due to the mixing between geometry and topology,
 in the sense that some of the fields arise from the metric, as we explain 
 further below (see \cite{BSS}).

\paragraph{T-duality: comparing $M(\mathcal{L}_c IIA)$ and $M(\mathcal{L}_c IIB)$.}
Computing $M(\mathcal{L}_c IIB)$ using the recipe of \cref{Sullmm} and changing the variables $ sh_3, w, s\omega_3, s\omega_5, s\omega_7$ in the notation of \eqref{ext:cyc} and \eqref{dc} to
\begin{gather*}
  c_2 := s h_3, \qquad \tilde{c}_2 := w, \qquad \omega_2 := s \omega_3,\\
    \omega_4 := s \omega_5, \qquad \omega_6 := s\omega_7,
  \end{gather*}
more common in physics, we get the following presentation.
 \begin{prop}[Cyclification of type IIB to $d=9$ ]
    \begin{gather}
 \nonumber
 M(\mathcal{L}_c IIB) = \left(\RR[\omega_1, c_2,  \tilde{c}_2, \omega_2, h_3, \omega_3, \omega_4, \omega_5, sh_7, \omega_6, h_7, \omega_7], d \right)\\
  \nonumber
d\omega_1 = 0, \qquad d c_2 = 0, \qquad d \tilde{c}_2 = 0,\\
 \nonumber
   d \omega_2 = - c_2 \cdot \omega_1, \quad d h_3 = c_2 \cdot \tilde{c}_2, \quad d \omega_3 = h_3 \cdot \omega_1 + \omega_2 \cdot \tilde{c}_2,\\
\label{LIIB}
  d \omega_4 = - c_2 \cdot \omega_3 + h_3 \cdot \omega_2, \qquad d \omega_5 = h_3 \cdot \omega_3 + \omega_4 \cdot \tilde{c}_2,\\
  \nonumber
 d sh_7 = - \omega_2 \cdot \omega_5 + \omega_3 \cdot \omega_4 - \omega_6 \cdot \omega_1, \quad d \omega_6 = - c_2 \cdot \omega_5 + h_3 \cdot  \omega_4, \\
 \nonumber
  d h_7 = \omega_3 \cdot \omega_5 + \omega_7 \cdot \omega_1 + sh_7 \cdot \tilde{c}_2, \quad d\omega_7 =h_3 \cdot \omega_5 + \omega_6 \cdot \tilde{c}_2\,.
  \end{gather}
\end{prop}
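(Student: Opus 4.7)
The plan is to apply directly the cyclification recipe from \Cref{Prop-LcZconn}, specifically the formulas \eqref{ext:cyc}--\eqref{dc}, to the Sullivan minimal model $M(IIB)$ defined in \eqref{htypeIIB}, and then to match the output against the stated presentation \eqref{LIIB} via the change of variables announced in the proposition. The statement is essentially a direct computation; the only conceptual input beyond mechanical application of \eqref{dc} is the truncation rule \eqref{truncation}, which is triggered because $IIB$ is not simply connected.

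First I would enumerate the generators of $M(\mathcal{L}_c IIB)$. By \eqref{ext:cyc} these consist of the six generators of $M(IIB)$, their desuspensions $sv$, and the new class $w$ in degree 2. Since $\omega_1$ has degree 1 and $V[1]^0 = 0$ by \eqref{truncation}, the formal desuspension $s\omega_1$ is killed, while $sh_3, s\omega_3, s\omega_5, sh_7, s\omega_7$ all survive. The renaming $c_2 := sh_3$, $\tilde c_2 := w$, $\omega_2 := s\omega_3$, $\omega_4 := s\omega_5$, $\omega_6 := s\omega_7$ then recovers exactly the twelve generators listed in \eqref{LIIB}, with the correct degrees.

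Next I would compute the differentials from \eqref{dc}. For each original generator $v$, the formula $d_c v = dv + sv\cdot w$ reproduces, after substitution, the equations involving $\tilde c_2 = w$ on the right-hand side of \eqref{LIIB}; in particular $s\omega_1 = 0$ enforces $d\omega_1 = 0$ and $d\tilde c_2 = 0$. For each desuspension $sv$, the formula $d_c sv = -s(dv)$ is evaluated by extending $s$ as a degree-$(-1)$ derivation with $s(sv') = 0$ and $s\omega_1 = 0$, using the Leibniz rule $s(ab) = s(a)b + (-1)^{|a|} a\, s(b)$. For instance, $d\omega_4 = -s(h_3\omega_3) = -c_2\omega_3 + h_3\omega_2$, and $d(sh_7) = -s(\omega_3\omega_5 + \omega_7\omega_1) = -\omega_2\omega_5 + \omega_3\omega_4 - \omega_6\omega_1$, where the last term uses $s\omega_1 = 0$. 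The remaining entries of \eqref{LIIB} are produced in exactly the same way.

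The only real obstacle is bookkeeping: keeping track of Koszul signs in the Leibniz expansions of $s$ and remembering that $s\omega_1$ vanishes by truncation. There is no conceptual difficulty, since minimality of the resulting Sullivan algebra and $d^2 = 0$ on every generator are guaranteed by \Cref{Prop-LcZconn}; nevertheless, rechecking $d^2 = 0$ on the mixed expressions $d\omega_3$, $d\omega_5$, $dh_7$, and $d(sh_7)$ serves as a good sanity test that the signs in the Leibniz expansion have been tracked correctly.
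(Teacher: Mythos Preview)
Your proposal is correct and follows exactly the approach indicated in the paper: the paper does not give a formal proof but simply states that the result is obtained by ``computing $M(\mathcal{L}_c IIB)$ using the recipe of \cref{Sullmm}'' together with the indicated change of variables, which is precisely what you outline. Your explicit handling of the truncation $s\omega_1 = 0$ and the sample Leibniz computations for $d\omega_4$ and $d(sh_7)$ are accurate and match the stated formulas.
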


We would like to match it with $M(\mc{L}_c IIA) = M(\mc{L}_c^2 S^4)$, see \Cref{example2}. Writing 
\begin{gather*}
  \omega_2 := w_1, \qquad h_3:= s_1 g_4, \qquad \omega_4 := g_4 ,\\
    \omega_6 := s_1 g_7, \qquad h_7 := g_7,\\
  \omega_1 := s_2 w_1, \qquad c_2 := s_2 s_1 g_4, \qquad \tilde{c}_2 := w_2,\\
    \omega_3 := s_2 g_4, \qquad  \omega_5 := s_2 s_1 g_7, \qquad sh_7 := s_2 h_7\, ,
  \end{gather*}
we get a compatible presentation for the DGCA $M(\mc{L}_c IIA)$.
\begin{prop}[Cyclification of type IIA to $d=9$ ]
\begin{gather}
\nonumber
  M(\mc{L}_c IIA) = \left( \RR[\omega_1, c_2, \tilde{c}_2, \omega_2, h_3,  \omega_3, \omega_4, \omega_5, s h_7, \omega_6,  h_7], d \right),\\
 \nonumber
 d \omega_1 = 0, \qquad d c_2 = 0, \qquad d \tilde{c}_2 = 0, \\
 \nonumber
 d \omega_2 = \omega_1 \cdot \tilde{c}_2 , \quad dh_3 = c_2 \cdot \tilde{c}_2, \quad d \omega_3 = - c_2 \cdot \omega_2 + h_3 \cdot  \omega_1,\\
\label{LIIA}
d \omega_4 = h_3 \cdot \omega_2 + \omega_3 \cdot \tilde{c}_2, \qquad d \omega_5 = - c_2 \cdot \omega_4 + h_3 \cdot \omega_3,\\
  \nonumber
 d sh_7 = \omega_3 \cdot \omega_4 - \omega_5 \cdot \omega_2 -  \omega_6 \cdot \omega_1, \quad d \omega_6 = h_3 \cdot \omega_4 + \omega_5 \cdot \tilde{c}_2, \\
  \nonumber
 d h_7 = -\tfrac{1}{2} \omega_4^2 + \omega_6 \cdot \omega_2 + sh_7 \cdot \tilde{c}_2\,.
  \end{gather}
\end{prop}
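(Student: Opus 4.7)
The plan is to exploit the fact that, by the preceding discussion (see in particular Example \ref{example1} where $IIA$ is identified with the real model $\mc{L}_c S^4$), one has a real homotopy equivalence $\mc{L}_c IIA \simeq_\RR \mc{L}_c^2 S^4$, so that the DGCA $M(\mc{L}_c IIA)$ is already given explicitly by Example \ref{example2}. The proposition is then the assertion that the change of variables
\begin{gather*}
\omega_2 := w_1, \quad h_3 := s_1 g_4, \quad \omega_4 := g_4, \quad \omega_6 := s_1 g_7, \quad h_7 := g_7, \\
\omega_1 := s_2 w_1, \quad c_2 := s_2 s_1 g_4, \quad \tilde{c}_2 := w_2, \quad \omega_3 := s_2 g_4, \quad \omega_5 := s_2 s_1 g_7, \quad sh_7 := s_2 g_7
\end{gather*}
converts the presentation of $M(\mc{L}_c^2 S^4)$ from Example \ref{example2} into the presentation \eqref{LIIA}. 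Since this renaming is a linear bijection on generators that preserves degrees (one checks: $|w_1|=2$, $|s_1 g_4|=3$, $|g_4|=4$, etc., matching the subscripts on $\omega_i$, $c_2$, $\tilde{c}_2$, $h_3$, $h_7$, $sh_7$), it defines an isomorphism of the underlying free graded commutative algebras; it remains only to verify that the differential $d$ transports to the displayed formulas.

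The verification is a case-by-case substitution into the eleven differential equations of Example \ref{example2}. For instance, $d\omega_2 = dw_1 = s_2 w_1 \cdot w_2 = \omega_1 \cdot \tilde{c}_2$; $dh_3 = ds_1 g_4 = s_2 s_1 g_4 \cdot w_2 = c_2 \cdot \tilde{c}_2$; $d\omega_3 = ds_2 g_4 = -s_2 s_1 g_4 \cdot w_1 + s_1 g_4 \cdot s_2 w_1 = -c_2 \cdot \omega_2 + h_3 \cdot \omega_1$; and similarly for $d\omega_4, d\omega_5, d\omega_6$. The two degree-7 equations follow the same pattern: $dh_7 = dg_7 = -\tfrac{1}{2} g_4^2 + s_1 g_7 \cdot w_1 + s_2 g_7 \cdot w_2 = -\tfrac{1}{2}\omega_4^2 + \omega_6 \cdot \omega_2 + sh_7 \cdot \tilde{c}_2$, and $d(sh_7) = ds_2 g_7 = s_2 g_4 \cdot g_4 - s_2 s_1 g_7 \cdot w_1 - s_1 g_7 \cdot s_2 w_1 = \omega_3 \cdot \omega_4 - \omega_5 \cdot \omega_2 - \omega_6 \cdot \omega_1$. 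Finally, $d\omega_1 = ds_2 w_1 = 0$, $dc_2 = ds_2 s_1 g_4 = 0$, and $d\tilde{c}_2 = dw_2 = 0$ are all in Example \ref{example2}.

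The only real subtlety to track is the sign bookkeeping induced by graded commutativity when rearranging monomials such as $s_1 g_7 \cdot s_2 w_1$ into the canonical form $\omega_6 \cdot \omega_1$. Since $|s_1 g_7| = 6$ and $|s_2 w_1| = 1$, the Koszul sign $(-1)^{6 \cdot 1} = +1$ is trivial; more generally, every reordering required here involves at least one even-degree factor (the degree-2 classes $\omega_1, \omega_2, \tilde{c}_2, c_2, \omega_4, \omega_6$ appear pervasively), so no signs are produced. With these reorderings taken care of, each of the eleven defining relations of \eqref{LIIA} matches exactly the corresponding relation of Example \ref{example2} under the renaming, completing the proof. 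Invariance $d^2 = 0$ is inherited automatically from $M(\mc{L}_c^2 S^4)$, but can also serve as a consistency check on the displayed formulas.
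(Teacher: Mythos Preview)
Your proof is correct and follows exactly the paper's approach: the paper introduces the same change of variables from the generators of $M(\mc{L}_c^2 S^4)$ in Example~\ref{example2} immediately before stating the proposition, and the proposition is simply the result of that substitution. (One harmless slip: in your parenthetical list of ``degree-2 classes'' you include $\omega_1$, which has degree $1$; but you had already correctly noted $|s_2 w_1|=1$ in the preceding sentence, and the sign analysis is unaffected.)
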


Now it is time to compare $M(\mc{L}_c IIA)$ with $M(\mc{L}_c IIB)$. 
Here is a striking conclusion:

\begin{prop}
[Matching in 9 dimensions]
\label{Prop-Tdual}
Up to replacement
\vspace{-2mm} 
\[
c_2 \leftrightarrow - \tilde{c}_2,
\]
all the equations for the differential of $M(\mc{L}_c IIA)$ in \eqref{LIIA} exactly match those of $M(\mc{L}_c IIB)$ in \eqref{LIIB}, except for the following mismatch for the generators of degree $7$:
\begin{center}
\begin{tabular}{ccc} 
\hline 
\bf Model & \bf $M(\mc{L}_c IIA)$ & \bf $M(\mc{L}_c IIB)$ 
\\ 
\hline 
\hline 
\rowcolor{lightgray}
 Generators & $h_7$ & $h_7$, $\omega_7$\\
 \multirow{2}{2.2cm}{Differentials} & \multirow{2}{5.2cm}{$dh_7 = -\tfrac{1}{2} \omega_4^2 + \omega_6 \cdot \omega_2 + s h_7 \cdot \tilde{c}_2$} & $d h_7 = \omega_3 \cdot \omega_5 + \omega_7 \cdot \omega_1 - s h_7 \cdot c_2$,\\
 & & $d\omega_7 = h_3 \cdot \omega_5 - \omega_6 \cdot c_2$\\ 
 \hline 
\end{tabular}
\end{center}
\end{prop}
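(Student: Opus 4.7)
The plan is to perform a direct, term-by-term comparison of the two presentations \eqref{LIIA} and \eqref{LIIB}, since both are already written out explicitly in the same variable names. The substitution $c_2 \leftrightarrow -\tilde{c}_2$ (which induces $c_2 \mapsto -\tilde{c}_2$ and $\tilde{c}_2 \mapsto -c_2$) applied to the IIB formulas of \eqref{LIIB} should, up to the Koszul sign rule for reordering products, recover the IIA formulas of \eqref{LIIA}, and any residual discrepancies should then be isolated.

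\textbf{Step 1: Match of $d$-closed generators and low-degree equations.} First I would observe that both models share the same generating set in degrees $1$ through $6$, namely $\omega_1, c_2, \tilde{c}_2, \omega_2, h_3, \omega_3, \omega_4, \omega_5, \omega_6$, and the equations $d\omega_1 = dc_2 = d\tilde{c}_2 = 0$ are trivially invariant under the substitution. For $d\omega_2$, $dh_3$, $d\omega_3$, $d\omega_4$, $d\omega_5$, $d\omega_6$, I would apply $c_2 \mapsto -\tilde{c}_2$, $\tilde{c}_2 \mapsto -c_2$ to the IIB expressions, then use graded commutativity to reorder factors so they appear in the IIA conventions. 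For example, $d_{IIB}\omega_2 = -c_2 \cdot \omega_1 \mapsto \tilde{c}_2 \cdot \omega_1 = \omega_1 \cdot \tilde{c}_2$ (since $|\omega_1||\tilde{c}_2| = 2$ is even), which matches $d_{IIA}\omega_2$, and similarly for each of the remaining equations below degree $7$.

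\textbf{Step 2: The $sh_7$ equation.} The generator $sh_7$ exists in both models and its differential $d sh_7 = -\omega_2\cdot\omega_5 + \omega_3\cdot\omega_4 - \omega_6\cdot\omega_1$ in IIB involves no $c_2$ or $\tilde{c}_2$, so the substitution leaves it unchanged. A reordering using graded commutativity ($\omega_2\cdot\omega_5 = \omega_5\cdot\omega_2$, since $|\omega_2||\omega_5|=10$ is even) shows agreement with the IIA expression. This confirms the match in all degrees $\le 6$ and for $sh_7$.

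\textbf{Step 3: Isolating the degree $7$ mismatch.} Here the obstacle is intrinsic rather than computational: IIA has a single generator $h_7$ in degree $7$, while IIB has two, $h_7$ and $\omega_7$. The substitution $c_2 \leftrightarrow -\tilde{c}_2$ applied to the IIB equations yields
\begin{align*}
dh_7 &= \omega_3\cdot\omega_5 + \omega_7\cdot\omega_1 - sh_7\cdot c_2, \\
d\omega_7 &= h_3\cdot\omega_5 - \omega_6\cdot c_2,
\end{align*}
which is exactly what appears in the right column of the table. Comparing with $d_{IIA}h_7 = -\tfrac{1}{2}\omega_4^2 + \omega_6\cdot\omega_2 + sh_7\cdot\tilde{c}_2$, one sees immediately that the number of generators, the quadratic terms, and the role of $\tilde{c}_2$ vs.\ $c_2$ all differ. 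Thus the mismatch is precisely and only the one tabulated in the statement, completing the verification.

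The computations are all routine given the Koszul sign rule and the explicit formulas; the main conceptual point (and the only real obstacle) is recognizing that there is no further change of variables in degree $7$ that can reconcile the single IIA generator with the pair of IIB generators, reflecting the well-known fact that the full matching of type II theories in $9$ dimensions requires additional input (self-duality, parametrized homotopy theory, or geometric data beyond the topological model) as alluded to in the paragraph preceding \eqref{htypeIIB}.
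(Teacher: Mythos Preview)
Your proposal is correct and follows exactly the approach the paper takes: the paper states the proposition without a separate proof, since the explicit presentations \eqref{LIIA} and \eqref{LIIB} are already written in matching variables and the claim is a direct term-by-term comparison. Your verification spells out precisely that comparison, including the Koszul reorderings, and correctly isolates the degree-$7$ discrepancy.
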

With relations imposed algebraically, this means
that
\begin{equation}
\label{T-duality}
M(\mc{L}_c IIA)/(h_7, dh_7) \cong M(\mc{L}_c IIB)/(h_7, \omega_7, d h_7, d \omega_7)\,,
\end{equation}
that is to say, the two DGCAs, $M(\mc{L}_c IIA)$ and $M(\mc{L}_c IIB)$, are isomorphic modulo the differential ideals generated by the generators of degree 7. Perhaps, topologically more interesting is an isomorphism between the dg-subalgebras $M_A \subset M(\mc{L}_c IIA)$ and $M_B \subset M(\mc{L}_c IIB)$ generated by all the generators except those of degree 7:
\begin{equation}
\label{T-duality-sub}
\xymatrix{
{}_{\phantom{A}_{\phantom{A}}}{M_A} \;\; \ar@{^{(}->}[d] \ar[r]^\sim &
\;\;\; M_B  {}_{\phantom{A}_{\phantom{A}}} \ar@{^{(}->}[d]\\
M(\mc{L}_c IIA) & M(\mc{L}_c IIB)\, .
}
\end{equation}
These dg-subalgebras are minimal Sullivan, and their isomorphism morally corresponds to the rational equivalence over $\RR$ of quotient spaces of $\mc{L}_c IIA$ and $\mc{L}_c IIB$.

\smallskip 
Another way to evade the above mismatch is to introduce ``stable'' models of types IIA and IIB, which turn out to be perfectly compatible with T-duality, at the expense of using spectra in lieu of spaces.
See details in \cite{FSS-pbranes}.
We plan to give a physical interpretation of the discrepancy in an upcoming paper \cite{SV:M-theory}.

\section{Toroidal symmetries}
\label{Sec-TorSymm} 
Here we describe the toroidal symmetry of the iterated cyclic loop spaces
$\mc{L}^{k}_c S^4$ for $k \ge 0$ 
from \cref{Sullmm}, for which 
we provide topological interpretation.

\subsection{Toroidal symmetries of minimal algebras}
\label{tor-symm}

We will be interested in real toroidal symmetries of minimal Sullivan
algebras $M$, i.e., diagonalizable actions $T \to \Aut M$ of a
\emph{real split torus}, an affine algebraic group $T$ isomorphic over
$\RR$ to the group $\GG_m^k$ for some $k \ge 1$, where
$\GG_m = \GL(1)$ is the multiplicative group.
Here $\Aut M$ is the group of automorphisms of $M$ as a DGCA. When $M$
has strong finite type, $\Aut M$ is an affine algebraic group over
$\RR$, because it is defined by the invertibility of the Jacobian
condition in the affine $\RR$-variety $\End M$; see
\cite{Su77}\cite{renner-thesis}. Thus, by an \emph{action} $T \to \Aut M$ above, we mean
a morphism of algebraic groups defined over $\RR$.

\medskip 
We will consolidate essentially all toroidal symmetries of $M$, which
could be done by considering a \emph{maximal $\RR$-split torus} $T
\subseteq \Aut M$. In our study, a maximal split torus will play a
role similar to that of a maximal torus in the theory of compact Lie
groups or that of a Cartan subalgebra in the theory of complex
semisimple Lie algebras. Indeed, a real split torus $T \subseteq \Aut M$ gives rise to a
\emph{weight decomposition}:
\begin{equation}
    \label{weight-d}
M = \bigoplus_{\alpha \in \mathfrak{X}(T)} M_\alpha
\end{equation}
indexed by the \emph{character group} $\mathfrak{X}(T) = \Hom_\RR (T, \GG_m)$ of
real algebraic group homomorphisms from $T$ to the multiplicative group
$\GG_m$, so that $T$ acts on each \emph{weight space} $M_\alpha$ by
the character $\alpha$:
\[
M_\alpha = \big\{m \in M \; | \; t \cdot m = \alpha(t) m \quad \text{for
  all $t \in T$}\big\}.
\]
If $T$ is a real split torus of dimension $n \ge 0$, then $\mathfrak{X}(T) \cong
\ZZ^n$; see \cite[Corollary 8.2 and Proposition 8.5]{borel}. Since
automorphisms of $M$ have to respect the DGCA structure, i.e.,
the $\ZZ$-grading, differential, and multiplication on $M$, the weight
decomposition is automatically compatible with it. That is, we have
\vspace{-1mm} 
  \begin{enumerate}[{\bf (1)}]
  \setlength\itemsep{-1pt}
  \item $M^n = \bigoplus_{\alpha \in \mathfrak{X}(T)} (M_\alpha \cap M^n)$ for
    each $n \ge 0$;
  \item $d(M_\alpha ) \subseteq M_\alpha$ for each $\alpha \in \mathfrak{X}(T)$;
    \item $M_\alpha \cdot M_\beta \subseteq M_{\alpha+\beta}$ for all
      $\alpha, \beta \in \mathfrak{X}(T)$.
    \end{enumerate}
    
\vspace{-1mm}     
\noindent Here and henceforth we write the group law in the character group
$\mathfrak{X}(T)$ additively and employ the exponential notation:
\[
t^\alpha := \alpha(t) \qquad \text{for $t \in T$, $\alpha \in \mathfrak{X}(T)$.}
\]
Specifically for minimal Sullivan algebras, weight decompositions have
been considered by L.~Renner in his Master's thesis
\cite{renner-thesis} on automorphism groups of minimal algebras.

\medskip 
Maximal split tori are unique up to conjugation in a real agebraic group; see
\cite[Theorem 15.14]{borel}. This implies that the weight
decompositions corresponding to different maximal split tori are related by automorphisms of $M$.

\medskip
Given an abstract $\RR$-split torus $T$, a weight decomposition \eqref{weight-d} defines an obvious action of $T$ on $M$.

\medskip 
A weight decomposition of a minimal Sullivan algebra $M = S(V)$
induces a weight decomposition 
$$
V = \bigoplus_{\alpha \in \mathfrak{X}(T)}
V_\alpha
$$ 

\vspace{-3mm} 
\noindent 
of the generating space $V$, because the latter may be canonically
identified with the \emph{space of indecomposables}: $I(M) = M^+/(M^+)^2 = S^{\ge 1}(V)/ S^{\ge 2}(V)$, and $M^+$ and $(M^+)^2$ are split
into weight spaces by definition, whereas a diagonalizable action
diagonalizes on invariant subquotients;
cf.\ \cite[Proposition 3.4.2]{renner-thesis}. This does not mean that $V$ is necessarily a $T$-invariant subspace of $M$. For that matter, we will distinguish the generating space $V \subset S(V) = M$ and the subquotient $I(M)$ of the indecomposables of $M$.

\subsection{Toroidal symmetries of a cyclification}
\label{symm-cycl}

Here we apply the results of \Cref{tor-symm} to a
``cyclification'', i.e., to the cyclic loop space $\mc{L}_c Z$ of a space $Z$
considered in \Cref{cyclification}.
Let us start with a few observations.
\begin{prop}
[Split tori of a Sullivan model]
\label{kerd}
Suppose an $\RR$-split torus $T$ acts on a minimal Sullivan algebra $M = (S(V), d)$ of strong finite type. 
\vspace{-2mm} 
\begin{enumerate}[{\bf (i)}]
   \setlength\itemsep{-2pt}
    \item
Then the weights of the action, that is to say, the characters $\alpha \in \mathfrak{X}(T)$ for which the weight space $M_\alpha$, see \eqref{weight-d}, is nontrivial, are generated multiplicatively on
\[
P(M) := Z(M^+)/Z(M^+) \cap (M^+)^2 \subseteq I(M),
\]
where $Z(M^+) := \ker d|_{M^+} = \{ x \in M^+ \; | \; dx = 0 \}$.

\item The action of an $\RR$-split torus on $M$ is determined by its action on $P(M)$. In other words, if $T \subseteq \Aut M$ is an $\RR$-split torus, then the composition $T \subseteq \Aut M \to \GL(P(M))$ is injective. In particular, $\dim T \le \dim P(M)$.

\item
\label{maxdim}
If $T$ is an $\RR$-split torus of $\Aut M$ of dimension  $\dim T = \dim P(M)$, then $T$ is a maximal split torus of $\Aut M$.

\item
\label{quadr}
If the differential on $M = S(V)$ is quadratic, then the natural map $\ker d \cap V = Z(M^+) \cap V \to P(M)$ is an isomorphism.
\end{enumerate}
\end{prop}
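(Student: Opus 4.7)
My plan is to prove the four parts in the order (iv), (ii), (i), (iii). Part (iv) furnishes a concrete representative for classes in $P(M)$; part (ii) is the technical heart; (i) is a parallel weight-theoretic induction; and (iii) is then immediate. For (iv), I exploit the symmetric-degree decomposition $M = \bigoplus_{n \ge 0} S^n(V)$, noting that $(M^+)^2 = S^{\ge 2}(V)$ and that a quadratic differential sends $S^n(V)$ into $S^{n+1}(V)$. Writing any $x \in Z(M^+)$ as $x = v + y$ with $v \in V$ and $y \in S^{\ge 2}(V)$, the identity $dx = 0$ splits by bidegree into $dv = 0$ and $dy = 0$ separately, so $v \in V \cap Z(M^+)$ represents the class of $x$ in $P(M)$, giving surjectivity. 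Injectivity is the identity $V \cap (M^+)^2 = 0$, built into the definition of $V$ as a graded complement of $(M^+)^2$ in $M^+$.

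For (ii), because $T$ is reductive, the short exact sequence $0 \to (M^+)^2 \to M^+ \to I(M) \to 0$ splits $T$-equivariantly, so $V$ may be chosen $T$-invariant. Then the prescription $V(0) := V \cap Z(M^+)$ and $V(k+1) := \{v \in V : dv \in S(V(k))\}$ defines a $T$-invariant Sullivan filtration: $d(V(k+1)) \subseteq S(V(k))$ is built in, and strong finite type together with comparison with any abstract minimal Sullivan filtration (which is term-by-term contained in this one) ensures $V = \bigcup_k V(k)$. Now let $t \in T$ act trivially on $P(M)$. I induct on $k$ that $t$ fixes $V(k)$ pointwise: the base case follows from the injection $V(0) \hookrightarrow P(M)$ (same argument as in (iv)). For the step, $v \in V(k+1)$ has $dv \in S(V(k))$ fixed by $t$, so $d(tv - v) = 0$; by $T$-invariance of $V$, the element $tv - v$ lies in $V \cap Z(M^+) \hookrightarrow P(M)$, where the class $t\cdot[v] - [v]$ vanishes because $t$ acts trivially on $P(M)$, forcing $tv = v$. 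Thus $t$ is the identity on $V$, hence on $M = S(V)$, and the injection $T \hookrightarrow \Aut M$ gives $t = 1$.

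For (i), I run the same induction tracking weights. Let $H \le \mathfrak{X}(T)$ be the subgroup generated by the weights appearing in $P(M)$. By induction on $k$ I claim all weights of $V(k)$ lie in $H$: the base case is immediate from $V(0) \hookrightarrow P(M)$; for the step, a weight vector $v \in V(k+1)$ of weight $\alpha$ either has $dv = 0$, in which case $v$ represents a class of weight $\alpha$ in $P(M)$ and $\alpha$ is itself a weight of $P(M)$, or $dv \ne 0$, in which case $dv \in S(V(k))$ is a nonzero weight-$\alpha$ element and $\alpha$ is a sum of weights of $V(k)$, all of which lie in $H$ by the inductive hypothesis. Since the weights of $M = S(V)$ are non-negative integer combinations of weights of $V$, they all lie in $H$ as desired.

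Finally, (iii) is immediate from (ii): any $\RR$-split torus $T^* \subseteq \Aut M$ containing $T$ acts faithfully on $P(M)$, so $\dim T^* \le \dim P(M) = \dim T$; since tori are connected, $T^* = T$, so $T$ is maximal. The main technical obstacle is the construction of the $T$-invariant Sullivan filtration underlying (ii) and (i); the definition $V(k+1) := \{v \in V : dv \in S(V(k))\}$ is manifestly $T$-stable, but verifying the Sullivan condition and exhaustion requires reductivity of $T$, strong finite type of $V$, and a careful comparison with an abstract Sullivan filtration produced by minimality of $M$.
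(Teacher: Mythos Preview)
Your arguments for (iii) and (iv) are correct and essentially identical to the paper's, and your weight-vector induction for (i) is fine.

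There is a real slip in the inductive step for (ii). Having shown $tv - v \in V \cap Z(M^+)$, you claim ``the class $t\cdot[v] - [v]$ vanishes because $t$ acts trivially on $P(M)$'', but $v$ is not closed, so $[v]$ has no meaning in $P(M)$, and the hypothesis gives you nothing about $[tv - v]$ directly. The fix mirrors what you already do in (i): reduce to a weight vector $v$ of weight $\alpha$ (legitimate since $V(k+1)$ is $T$-stable and $T$ is diagonalizable). Then $t\cdot dv = \alpha(t)\,dv$ while also $t\cdot dv = dv$ by the inductive hypothesis, so either $\alpha(t)=1$, or $dv=0$ and hence $v \in V(0)\hookrightarrow P(M)$, where triviality of $t$ on $P(M)$ now legitimately forces $\alpha(t)=1$ (as $[v]\ne 0$).

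Strategically, the paper differs in two ways. First, instead of invoking reductivity of $T$ to choose a $T$-invariant generating space $V$ and then building the filtration $V(k)$ on it, the paper uses the Bousfield--Gugenheim double filtration $M(n,p)$, which is defined intrinsically from the DGCA structure of $M$ and is therefore automatically $T$-invariant; this bypasses the (standard, but only implicitly invoked in your sketch) fact that your particular $T$-invariant $V$ admits an exhaustive Sullivan filtration. Second, the paper proves (i) first and then obtains (ii) in one line: if $t$ is trivial on $P(M)$ then every generating weight $\alpha$ satisfies $\alpha(t)=1$, hence by (i) every weight of $M$ does, so $t$ acts trivially. Your route is more elementary in that it avoids the Bousfield--Gugenheim machinery, at the cost of the equivariant-splitting step and a separate induction for (ii); with the repair above, both approaches are valid.
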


\begin{proof}
{\bf (i)} This part generalizes Proposition 3.4.2 of Renner's thesis \cite{renner-thesis}, and so does the proof. However, our generalization to the non-simply connected and non-algebraically closed case requires new ideas and more work.

We will need to use a more invariant version of the Sullivan nilpotence condition \Cref{SMM}(i), due to Bousfield and Gugenheim \cite{Bousfield-Gugenheim}. Every minimal Sullivan algebra $M$ admits a canonically defined double filtration
\begin{gather*}
M(0) \subseteq M(1) \subseteq M(2) \subseteq \dots,\\
M(n,0) \subseteq M(n,1) \subseteq M(n,2) \subseteq \dots,
\end{gather*}
$n \ge 1$, by dg-subalgebras, such that $M(0) = \RR$, $\bigcup_n M(n) = M$, and
\begin{equation}
\label{Mnp}
\bigcup_p M(n,p) = M(n) \qquad \text{for each $n \ge 1$}.
\end{equation}
The subalgebra $M(n)$ is defined as the dg-subalgebra generated by $M^i$ for $1 \le i \le n$. The subalgebra $M(n,p)$ is defined inductively, starting from $M(n,0) = M(n-1)$, as the subalgebra generated by $M(n,p-1)$ and the elements $x \in M^n$ such that $dx \in M(n,p-1)$. Conversely, every connected DGCA $M$ which is free as a graded commutative algebra and for which the subalgebras defined above satisfy \eqref{Mnp} is a minimal Sullivan algebra, see \cite[Proof of Proposition 7.5]{Bousfield-Gugenheim}. Bousfield and Gugenheim simply call connected, free as graded commutative algebras DGCAs satisfying \eqref{Mnp} minimal. The equivalence of the two types of models, the minimal Sullivan model and Bousfield-Gugenheim's minimal model follows from the existence and uniqueness theorems for each type, see \cite{Su77} and \cite{Bousfield-Gugenheim}, respectively.

The augmentation ideal $M^+ \subset M$ and its powers $(M^+)^n = S^{\ge n} (V)$ are $T$-invariant. Moreover, the weight decomposition of $M$ maps isomorphically to the weight decomposition of the associated graded algebra $\gr M := \bigoplus_{n \ge 0} (M^+)^n/(M^+)^{n+1} = S(I(M))$. The subquotient $I(M)$ of $M$ is $T$-invariant and inherits a weight decomposition. Thus, the weights of $M$ are generated multiplicatively by the weights of $I(M)$.
Similarly, $P(M)$ is a $T$-invariant subquotient of $M$ and inherits a weight decomposition. Note that the subalgebras $M(n)$ and $M(n,p)$, defined intrinsically by using the multiplicative and dg-structures on $M$, are $T$-invariant and get a weight decomposition.

We will show that the weights of the subalgebras $M(n,p)$ are multiplicatively generated by the weights of $P(M)$ and $M(n,p-1)$ and run a double induction on $n$ and $p$. By definition, the subalgebra $M(n,p)$ is generated by $M(n,p-1)$ and $x \in M^n$ such that $dx \in M(n,p-1)$. Its subspace $Z(M^n) \subseteq M^n$ of $n$-cycles fits into a short exact sequence of $T$-invariant spaces:
\[
0 \longrightarrow Z(M^n) \cap (M^+)^2 \longrightarrow Z(M^n) \longrightarrow P(M^n) \longrightarrow 0\,.
\]
Observe that $Z(M^n) \cap (M^+)^2 \subseteq M(n-1) = M(n,0)$, because the component of degree $n$ of $(M^+)^2$ is spanned by products of elements in $M^i$ for $1 \le i \le n-1$.
On the other hand, the weights of a $T$-invariant complement $C(M^n)$ to $Z(M^n)$ in the space $\{x \in M^n \; | \;  dx \in M(n,p-1)\}$ are among the weights of $M(n,p-1)$, because for $x \ne 0 \in C(M^n)$, $dx$ will be nonzero and have the same weight.

\vspace{1mm} 
\noindent {\bf (ii)}  From Part (i), we can deduce that if an $\RR$-split torus $T$ acts faithfully on $M$, that is to say, $T \subseteq \Aut M$, then it would also act faithfully on $P(M)$. Indeed, if there is a $t \in T$ acting on $P(M)$ trivially, then by Part 1, for all weights $\alpha$ of $M$, we have $\alpha(t) =0$, which means $t$ acts trivially on $M$. Thus, $T$ embeds into the general linear group $\GL(P(M))$ of $P(M)$ regarded as a vector space. Since the maximal torus of $\GL(P(M))$ has dimension equal to $\dim P(M)$, we conclude that $\dim T \le \dim P(M)$.

\vspace{1mm} 
\noindent 
{\bf (iii)} Follows
from (ii).

\vspace{1mm} 
\noindent 
{\bf (iv)}  Note that the composition $V \subseteq M^+ \to I(M)$ is injective. Therefore, $\ker d|_V = Z(M^+) \cap V$ maps injectively 
to $P(M)$. To show that this map is surjective, let $\bar x \in P(M)$ and $x \in M^+ = S^{\ge 1} (V)$, $dx = 0$, represent $\bar x$. 
Decompose $x = x_1 + x_2 + \dots$, with $x_i \in S^i(V)$. Then $dx = 0$ implies $dx_1 = dx_2 = \dots = 0$, because when the differential 
$d$ is quadratic, $dx_i \in S^{i+1} (V)$. Therefore, $x_1 \in V$ also represents $\bar x$.
\end{proof}

Within the context of 
\Cref{Sullmm}, we now establish the following results. For the rest of this section, let $Z$ be a path-connected and nilpotent space and 
$M(Z) = (S(V), d)$ its real Sullivan minimal model, which we assume to have strong finite type. Consider the real Sullivan minimal model 
$M(\mc{L}_c Z) = (S(V \oplus V[1] \oplus \RR w), d_c)$ of the cyclic loop space $\mc{L}_c Z$ of $Z$. 

\begin{lemma}[The growth of dimension of the space of closed generators]
\label{fund.lemma}
If the differential $d$ on $M(Z)$ is quadratic, then
\[
\dim \left( \ker d_c \cap (V \oplus V[1] \oplus \RR w) \right) = \dim (\ker d \cap V) + 1.
\]
\end{lemma}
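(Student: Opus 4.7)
The plan is to work directly on the generating subspace, exploiting the fact that since $d$ is quadratic on $M(Z)$, the cyclification differential $d_c$ is also quadratic on $M(\mc{L}_c Z)$, and to decouple the closedness condition by a tri-grading.

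First, I would parametrize an element of $V \oplus V[1] \oplus \RR w$ as $\xi = v + u + \lambda w$ with $v \in V$, $u \in V[1]$, and $\lambda \in \RR$. Using that $s \colon V^{\geq 2} \to V[1]$ is an isomorphism and $sV^1 = 0$, I may write $u = sv'$ for a unique $v' \in V^{\geq 2}$. Plugging into formulas \eqref{dc} gives
\[
d_c \xi = dv + sv \cdot w - s(dv'),
\]
whose three summands sit in three pairwise-distinct bigraded pieces of $S^2(V \oplus V[1] \oplus \RR w)$, namely $S^2(V)$, $V[1] \cdot \RR w$, and $V \cdot V[1]$. So $d_c \xi = 0$ decouples into three independent conditions $dv = 0$, $sv = 0$, and $s(dv') = 0$, while $\lambda$ is unconstrained.

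The first two conditions pin $v$ down to $\ker d \cap V^1$. The key step is to verify that $s(dv') = 0$ already forces $dv' = 0$, so that $u$ ranges over $s(\ker d \cap V^{\geq 2})$. To this end, I would decompose $S^2(V) = \Lambda^2(V^1) \oplus (V^1 \cdot V^{\geq 2}) \oplus S^2(V^{\geq 2})$ and observe that $s$ annihilates $\Lambda^2(V^1)$, is injective on $V^1 \cdot V^{\geq 2}$ (since $s|_{V^{\geq 2}}$ is an isomorphism), and is injective on $S^2(V^{\geq 2})$ by the polynomial-algebra Poincar\'e lemma applied to the acyclic bigraded complex $(S(V^{\geq 2} \oplus V[1]), s)$. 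Because the images of the latter two pieces land in the complementary summands $V^1 \cdot V[1]$ and $V^{\geq 2} \cdot V[1]$ of $V \cdot V[1]$, one concludes $\ker s |_{S^2(V)} = \Lambda^2(V^1)$, which is concentrated in total cohomological degree $2$. As $dv'$ has degree $|v'| + 1 \geq 3$, the condition $s(dv') = 0$ forces $dv' = 0$.

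Adding up the three independent contributions yields
\[
\ker d_c \cap (V \oplus V[1] \oplus \RR w) = (\ker d \cap V^1) \oplus s(\ker d \cap V^{\geq 2}) \oplus \RR w,
\]
whose dimension equals $\dim(\ker d \cap V^1) + \dim(\ker d \cap V^{\geq 2}) + 1 = \dim(\ker d \cap V) + 1$. The main technical obstacle is the Poincar\'e-lemma computation of $\ker s$ on $S^2(V)$; once the tri-grading is in place, it is essentially a direct bookkeeping argument.
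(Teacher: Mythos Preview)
Your proof is correct and follows essentially the same route as the paper's: both arguments reduce to showing that $s(dv')=0$ forces $dv'=0$ via a Poincar\'e-lemma computation identifying $\ker s$ on $S^2(V)$ with $\Lambda^2(V^1)=S^2(V^1)$, followed by the degree argument $|dv'|\ge 3$. The paper phrases this as the super (relative) Poincar\'e lemma giving $\ker s = S(V^1)$ and splits into simply connected and general cases, whereas you treat both at once and make the decoupling of the three conditions explicit via the tri-grading of $S^2(V\oplus V[1]\oplus\RR w)$; these are cosmetic differences, not substantive ones.
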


\begin{proof}
We will start with a simpler case of simply connected $Z$ and make necessary adjustments in the more general case.

\noindent \textbf{\underline{Simpler Case}: $Z$ is simply connected}. At the level of minimal Sullivan model $M = M(Z) = S(V)$, 
this means that $V^1 = 0$.

We claim that
\begin{equation}
  \label{claim}
\ker d_c \cap (V \oplus V[1] \oplus \RR w) = (\ker d \cap V)[1] \oplus
\RR w.
\end{equation}
Indeed, from formulas \eqref{dc}, we see that no nonzero
element of $V = V^{> 1}$ could be in $\ker d_c$, whereas $w$ is always in $\ker
d_c$. We claim that $\ker d_c \cap V[1] = (\ker d \cap V)[1]$. Indeed, to justify the inclusion
$(\ker d \cap V)[1] \subseteq \ker d_c \cap V[1]$, we start with a
$d$-closed element $v \in V \subseteq M(Z)$ and observe that $d_c sv =
- s d v = 0$, again from \eqref{dc}.

Let us prove the opposite inclusion: $\ker d_c \cap V[1] \subseteq
(\ker d \cap V)[1]$. If $d_c sv =0$ for some $v \in V$, then $s dv
=0$. Note that $dv$ is in $S(V)$, which may be thought of as the
algebra of polynomial functions on the graded manifold $\mathbb{V}^* =
\Spec S(V)$ or, passing to the grading modulo 2, on the supermanifold
$\mathbb{V}^*$, the affine superspace associated with the super vector
space $V^*$. The minimality condition on $d$ moreover implies that $dv \in
S^{\ge 2}(V)$. Note also that the differential $s$ may be identified
with the de Rham differential of the affine superspace $\mathbb{V}^*$:
$$
s = d_{\dR}: \Omega^0(\mathbb{V}^*) = S(V) \; \longrightarrow \;
\Omega^1(\mathbb{V}^*) = S(V) \otimes V[1]\;.
$$ 
(In $\ZZ$-graded
geometry, one usually has $V[-1]$ for the cotangent space, but it is
the same as $V[1]$ under mod 2 grading.)  So, by the super
Poincar\'{e} lemma \cite[Proposition 3.4.5]{manin:gauge}, $\ker s =
S^0(V) = \RR$. Therefore, if $sdv =0$ then $dv \in S^{\ge 2}(V) \cap
S^0(V) = 0$, i.e., $v \in \ker d \cap V$. We conclude that
$\ker d_c \cap V[1] = (\ker d \cap V)[1]$. Summing up all the parts of
$V \oplus V[1] \oplus \RR w$, we see that \eqref{claim} holds.

\medskip 
\noindent \textbf{\underline{General Case}: $Z$ is not necessarily simply connected}. We will adjust formula \eqref{claim}. Now we claim that
\[
\ker d_c \cap (V \oplus V[1] \oplus \RR w) = (\ker d \cap V^1) \oplus
(\ker d \cap V^{\ge 2})[1] \oplus \RR w.
\]

So, it is again clear from \eqref{dc} that $w \in \ker d_c$. Given that $sv
= 0 $ for any $v \in V^1$, it is also clear that we have 
$\ker d_c \cap V = \ker d \cap V^1$. What remains to be proven is that $\ker d_c \cap
V[1] = (\ker d \cap V^{\ge 2})[1]$.

Let us start with showing $(\ker d \cap V^{\ge 2})[1] \subseteq \ker
d_c \cap V[1]$. If $v \in V^{\ge 2}$ is such that $dv = 0$, then $d_c
s v = - sdv = 0$.

For the opposite inclusion, $ \ker d_c \cap V[1] \subseteq (\ker d
\cap V^{\ge 2})[1]$, suppose that $d_c sv = 0$ for some $v \in V^{\ge
  2}$. (We ignore $v \in V^1$, as in this case $sv=0$.) The second
formula \eqref{dc} then implies $sdv=0$. Note that $dv \in S^{\ge
  2}(V)$ because of the minimality of $(S(V), d)$. Another observation
is that since $sv =0$ for any $v \in V^1$, the differential $s: S(V)
\to S(V) \otimes V^{\ge 2}[1]$ may now be identified with the relative
de Rham differential of the relative affine superspace $\mathbb{V}^* =
(\mathbb{V}^1)^* \times (\mathbb{V}^{\ge 2})^*$ over
$(\mathbb{V}^1)^*$:
\vspace{-2mm} 
\[
s = d_{\dR}: \Omega^0_{\mathbb{V}^*/(\mathbb{V}^1)^*}(\mathbb{V}^*) =
S(V) \; \longrightarrow \;
\Omega^1_{\mathbb{V}^*/(\mathbb{V}^1)^*}(\mathbb{V}^*) = S(V)
\otimes V^{\ge 2}[1].
\]

\vspace{-2mm} 
\noindent As in Case 1, the super Poincar\'{e} lemma implies that $\ker s = S(V^1)$.

Getting back to our $v \in V^{\ge 2}$ such that $d_c sv =0$, we see
that $dv \in S^{\ge 2}(V) \cap S(V^1) = S^{\ge 2} (V^1)$. Since the
degree of $v$ is at least two, the degree of $dv$ must be at least
three. This means that $dv \in S^{\ge 3} (V^1)$. But the differential
is assumed to be quadratic: $dv \in S^2(V)$, whence $dv =0$. Lemma is proven.
\end{proof}

\begin{prop}[Automorphisms of the Sullivan model of a cyclification]
\label{AutL}
The automorphisms of $M(Z)$ extend naturally to automorphisms of $M(\mc{L}_c Z)$. Moreover, one has a natural inclusion
  \begin{equation}
      \label{inclusion}
  \Aut M(Z) \times \GG_m \subseteq \Aut M(\mc{L}_c Z).
 \end{equation}
\end{prop}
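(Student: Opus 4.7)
My plan is to define the extension $\tilde\phi$ of any $\phi \in \Aut M(Z)$ explicitly on generators, exploiting the functoriality inherent in the Vigué-Poirrier–Burghelea formulas \eqref{ext:cyc}–\eqref{dc}. Concretely: set $\tilde\phi(v) := \phi(v)$ for $v \in V$, $\tilde\phi(sv) := s(\phi(v))$ (which makes sense since $\phi(v) \in S(V)$ and $s$ is extended as a degree $-1$ derivation to all of $S(V \oplus V[1])$), and $\tilde\phi(w) := w$, then extend multiplicatively. The first check is that $\tilde\phi \circ s = s \circ \tilde\phi$: both sides are degree $-1$ derivations and agree on all generators — on $V$ by construction, on $V[1]$ since both vanish by virtue of $s^2 = 0$, and on $\RR w$ once we adopt the natural convention $sw := 0$ — so they coincide everywhere.

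Next I would check $\tilde\phi \circ d_c = d_c \circ \tilde\phi$ on generators. On $V$: $d_c \tilde\phi(v) = d(\phi v) + s(\phi v) \cdot w = \phi(dv) + s\phi(v) \cdot w = \tilde\phi(d_c v)$. On $V[1]$: $d_c \tilde\phi(sv) = d_c(s\phi(v)) = -s(d\phi(v)) = -s(\phi(dv)) = \tilde\phi(-s\,dv) = \tilde\phi(d_c sv)$, where the second equality invokes the identity $d_c \circ s + s \circ d_c = 0$ on $S(V)$. On $w$: both sides vanish. The quoted identity is the small technical heart of the argument: with the convention $sw := 0$, both $d_c s$ and $s d_c$ are degree-$0$ derivations of $S(V \oplus V[1] \oplus \RR w)$, so it suffices to check cancellation on each generator, which is a direct calculation from $d_c v = dv + sv \cdot w$, $d_c sv = -s\,dv$, and $d_c w = 0$, together with $s^2 = 0$. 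This yields the homomorphism $\Aut M(Z) \to \Aut M(\mc{L}_c Z)$, $\phi \mapsto \tilde\phi$; injectivity is immediate since $\tilde\phi|_V = \phi$.

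For the second factor, I would define a $\GG_m$-action on $M(\mc{L}_c Z)$ by declaring $V$ of weight $0$, $V[1]$ of weight $-1$, and $w$ of weight $+1$. The defining equations are then weight-homogeneous: $dv$ and $sv \cdot w$ both have weight $0$, $-s(dv)$ has weight $-1$ matching that of $sv$, and $d_c w = 0$ is trivial. So this $\GG_m$ sits inside $\Aut M(\mc{L}_c Z)$. The two actions commute because each preserves the direct-sum decomposition $V \oplus V[1] \oplus \RR w$ of generators and acts diagonally on it. Their intersection is trivial since the image of $\Aut M(Z)$ fixes $w$ pointwise while any nontrivial $t \in \GG_m$ scales $w$ nontrivially. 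Combining the two commuting actions gives the desired inclusion \eqref{inclusion}. The only real obstacle is bookkeeping — pinning down the signs and the convention $sw = 0$ so that the identity $d_c s + s d_c = 0$ holds cleanly; once this is in place, the rest is a routine verification on generators of a free graded commutative algebra.
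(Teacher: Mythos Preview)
Your proof is correct and follows essentially the same approach as the paper's: extend $\phi$ explicitly on generators by $\tilde\phi(v)=\phi(v)$, $\tilde\phi(sv)=s(\phi(v))$, $\tilde\phi(w)=w$, and define the $\GG_m$-action by assigning weights to $V$, $V[1]$, and $w$ (the paper uses the opposite sign convention, $t\cdot sv = t\,sv$ and $t\cdot w = t^{-1}w$, which is immaterial). One small slip: $d_c s$ and $s d_c$ are not individually derivations --- only their graded commutator $d_c s + s d_c$ is --- but that is exactly what you need to reduce the vanishing to a check on generators, so the argument goes through unchanged.
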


\begin{proof}
In general, knowing that the group of automorphisms of the Sullivan minimal model of a space is isomorphic to the group of rational homotopy self-equivalences thereof and the functoriality of the construction $Z \mapsto \mc{L}_c Z$, we get a morphism $\Aut M(Z) \to \Aut M(\mc{L}_c Z)$. This morphism is injective, because if $g \in \Aut M(Z)$ acts trivially on $M(\mc{L}_c Z)$, then it will act trivially on every subquotient, including $M(Z)$, see \eqref{ext:cyc} and \eqref{ext:free}.

It will be useful to have explicit formulas for this extension, $\rho: \Aut M(Z) \hookrightarrow \Aut M(\mc{L}_c Z)$. Suppose that $g: M(Z) \to M(Z)$ is an automorphism of $M(Z) = S(V)$. Then $g$ defines an automorphism $\rho(g)$ of $M(\mc{L}_c Z) = S(V \oplus V[1] \oplus \RR w)$ by acting on the free generators $V \oplus V[1] \oplus \RR w$ as follows:
\begin{align*}
    \rho(g) v & := gv \qquad & \text{for $v \in V$},\\
    \rho(g) (sv) & := s(gv) \qquad & \text{for $sv \in V[1]$},\\
    \rho(g) w & := w.
\end{align*}
Now, for $t \in \GG_m$, define an action
\begin{align}
\nonumber
    t \cdot v & := v \qquad & \text{for $v \in V$},\\
\label{inverse}
    t \cdot (sv) & := t (sv) \qquad & \text{for $sv \in V[1]$},\\
\nonumber
    t \cdot w & := t^{-1} w.
\end{align}
These formulas define an explicit inclusion \eqref{inclusion}.
\end{proof}

\begin{theorem}[Maximal split tori of the Sullivan model of a  cyclification] 
 \label{step}
Let us also assume that the differential in the Sullivan minimal model $M(Z) = (S(V),d)$ is \emph{quadratic, i.e.,} the restriction of $d$ to $V$ maps $V$ to $S^2(V)$:
$
d|_V: V \to S^2(V).
$
Suppose there is an $\RR$-split torus $T \subseteq \Aut M(Z)$ such that $\dim T = \dim P(M(Z))$, see $\Cref{kerd}$.

Then $T$ is a maximal split torus of $\Aut M(Z)$, $\dim (T \times \GG_m) = \dim P(M(\mc{L}_c Z))$ and $T \times \GG_m$ is a maximal split torus of $\Aut M(\mc{L}_c Z)$.
\end{theorem}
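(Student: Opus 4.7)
The plan is to combine the three ingredients already in place: Proposition \ref{kerd}(iii) (which gives a dimension-based maximality criterion), Proposition \ref{kerd}(iv) (which identifies $P(M)$ with $\ker d \cap V$ in the quadratic case), Lemma \ref{fund.lemma} (which computes how $\dim(\ker d \cap V)$ grows under cyclification), and Proposition \ref{AutL} (which produces the extended torus). The first assertion, that $T$ is maximal in $\Aut M(Z)$, is immediate from Proposition \ref{kerd}(iii), so essentially everything reduces to the cyclified algebra $M(\mc{L}_c Z)$.

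The first substantive step is to verify that the differential $d_c$ on $M(\mc{L}_c Z) = (S(V \oplus V[1] \oplus \RR w), d_c)$ is again quadratic. Inspecting the defining formulas \eqref{dc}, I see that for $v \in V$, $d_c v = dv + sv \cdot w$ lies in $S^2(V \oplus V[1] \oplus \RR w)$ since $dv \in S^2(V)$ by hypothesis and $sv \cdot w$ is a product of two generators; for $sv \in V[1]$, $d_c sv = -s\,dv$ lies in $S^2(V \oplus V[1])$ because $s$ is a degree $-1$ derivation applied to the quadratic expression $dv$; and $d_c w = 0$. Hence $d_c$ is quadratic.

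With quadraticity in hand, Proposition \ref{kerd}(iv) applied to $M(\mc{L}_c Z)$ identifies $P(M(\mc{L}_c Z))$ with $\ker d_c \cap (V \oplus V[1] \oplus \RR w)$. Combining this with Lemma \ref{fund.lemma} gives
\[
\dim P(M(\mc{L}_c Z)) \;=\; \dim(\ker d \cap V) + 1,
\]
and a second application of Proposition \ref{kerd}(iv), this time to $M(Z)$, rewrites the right-hand side as $\dim P(M(Z)) + 1 = \dim T + 1 = \dim(T \times \GG_m)$. This establishes the middle assertion of the theorem.

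For the last assertion, Proposition \ref{AutL} embeds $T \times \GG_m$ as a subgroup of $\Aut M(\mc{L}_c Z)$; the image is an $\RR$-split torus because the $T$-action is diagonalizable (it extends the diagonalizable action on $M(Z)$ to $V[1]$ by applying $s$ weight-by-weight, and fixes $w$) and the extra $\GG_m$ acts diagonally on $V[1]$ and $\RR w$ by \eqref{inverse}, so the two commute and their joint action is diagonalizable over $\RR$. Maximality then follows from one more application of Proposition \ref{kerd}(iii), since the dimension of this split torus equals $\dim P(M(\mc{L}_c Z))$. The main obstacle is really concentrated in Lemma \ref{fund.lemma}, which has already been proved; once quadraticity of $d_c$ is checked, the rest of the argument is a formal bookkeeping of dimensions.
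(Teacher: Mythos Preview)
Your proof is correct and follows essentially the same approach as the paper: invoke Proposition~\ref{kerd}(iii) for maximality of $T$, check that $d_c$ is quadratic, apply Proposition~\ref{kerd}(iv) on both sides together with Lemma~\ref{fund.lemma} to get the dimension count, and then use Proposition~\ref{AutL} plus Proposition~\ref{kerd}(iii) again for maximality of $T \times \GG_m$. Your version is slightly more explicit in verifying quadraticity of $d_c$ and in spelling out why the extended action is diagonalizable, but the logical structure is identical.
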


\begin{proof}
By \Cref{kerd}(\ref{maxdim}), $T$ is a maximal split torus of $\Aut M(Z)$. \Cref{AutL} implies that $T \times \GG_m$ is a split torus in $\Aut M(\mc{L}_c Z)$. To prove that it is maximal, using the \Cref{kerd}(\ref{maxdim}) once again, it is enough to show that $\dim P(M(\mc{L}_c Z)) = \dim P(M(Z)) + 1$, which is equal to $\dim (T \times \GG_m)$ by assumption. Applying \Cref{kerd}(\ref{quadr}), we see that $\ker d \cap V \cong P(M(Z))$. Formulas \eqref{dc} for the differential $d_c$ of $M(\mc{L}_c Z)$ show that the differential is also quadratic. Therefore, we have $\ker d_c \cap (V \oplus V[1] \oplus \RR w) \cong P(M(\mc{L}_c Z))$, and conclude with \Cref{fund.lemma}.
\end{proof}

Given the  decomposition of the Sullivan minimal model into weight spaces of the previous section, \cref{tor-symm}, the toroidal symmetries of the theorem yield the following statement.

\begin{cor}[Weight decomposition in Sullivan models]
  \label{corr}
If $M(Z) =
\bigoplus_{\alpha \in \mathfrak{X}(T)} M_\alpha$ is the weight decomposition
corresponding to the action of a maximal split torus $T$, which
induces a weight decomposition $I(M(Z)) = \bigoplus_{\alpha \in \mathfrak{X}(T)}
I(M(Z))_\alpha$ on the space $I(M(Z))$ of indecomposables, then the weight
decomposition of $M(\mc{L}_c Z)$ corresponding to its maximal split
torus $T_c = T \times \GG_m$ induces the following weight
decomposition on its space $I (M(\mc{L}_c Z)) = I(M(Z)) \oplus I(M(Z))[1] \oplus \RR w$ of indecomposables:
\vspace{-2mm} 
\begin{enumerate}[{\bf (i)}]
   \setlength\itemsep{-1pt}
    \item 
  The weight of $w \in M(\mc{L}_c Z)$ is $\epsilon_1 = (0,-1) \in
    \mathfrak{X}(T_c) = \mathfrak{X}(T) \times \mathfrak{X}(\GG_m)$;
  \item If $v \in (I(M(Z))_\alpha$ for some weight $\alpha \in
    \mathfrak{X}(T)$, then the image $v$ in $I(M(Z)) \subseteq I(M(\mc{L}_c Z))$ has
    weight $\alpha := (\alpha, 0) \in \mathfrak{X}(T) \times \mathfrak{X}(\GG_m)$;
  \item If $v \in (I(M(Z))_\alpha$ for some weight $\alpha \in
    \mathfrak{X}(T)$, then the image $sv$ in $(I(M(Z))[1] \subseteq I(M(\mc{L}_c Z))$ has
    weight $\alpha - \epsilon_1 \in \mathfrak{X}(T) \times \mathfrak{X}(\GG_m)$.
\end{enumerate}
\end{cor}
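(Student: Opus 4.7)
The plan is to read off the weights directly from the explicit formulas for the inclusion $T \times \GG_m \hookrightarrow \Aut M(\mc{L}_c Z)$ that were set down in the proof of \Cref{AutL}, and then transfer the answer from the space of generators $V \oplus V[1] \oplus \RR w$ to the indecomposable quotient $I(M(\mc{L}_c Z))$.

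First, I would recall the action formulas from the proof of \Cref{AutL}: for $(g,t) \in T \times \GG_m = T_c$, the automorphism $\rho(g,t)$ of $M(\mc{L}_c Z) = S(V \oplus V[1] \oplus \RR w)$ acts on the generating space by $(g,t) \cdot v = g v$ for $v \in V$, $(g,t) \cdot (sv) = t \, s(gv)$ for $sv \in V[1]$, and $(g,t) \cdot w = t^{-1} w$. By \Cref{step}, $T_c$ is a maximal split torus of $\Aut M(\mc{L}_c Z)$, so these formulas really do describe the maximal-torus action used to define the weight decomposition.

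Now each of the three claims is a one-line reading. Taking $v \in V$ with $T$-weight $\alpha$, so $gv = g^{\alpha} v$, the formula $(g,t) \cdot v = g^{\alpha} v$ shows $v$ carries the $T_c$-character $(\alpha, 0)$, proving (ii). Applying $s$ and using $(g,t) \cdot (sv) = t \cdot s(gv) = t \, g^\alpha \, sv$, we get the character $(\alpha, 1) = (\alpha, 0) - (0,-1) = \alpha - \epsilon_1$, proving (iii). Finally $(g,t) \cdot w = t^{-1} w$ is precisely the character $(0,-1) = \epsilon_1$, proving (i).

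The last step is to justify that these weights on the generators $V \oplus V[1] \oplus \RR w$ are the weights of the corresponding classes in the quotient $I(M(\mc{L}_c Z)) = M(\mc{L}_c Z)^+/(M(\mc{L}_c Z)^+)^2$. Since $T_c$ acts by DGCA automorphisms, the augmentation ideal and its square are $T_c$-invariant, so $I(M(\mc{L}_c Z))$ inherits a canonical weight decomposition as in \Cref{tor-symm}; and the composite $V \oplus V[1] \oplus \RR w \hookrightarrow M(\mc{L}_c Z)^+ \twoheadrightarrow I(M(\mc{L}_c Z))$ is a $T_c$-equivariant isomorphism of graded vector spaces, because $M(\mc{L}_c Z)$ is free graded commutative on these generators. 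I do not foresee a real obstacle: the only thing to double-check is a sign convention on the character $\epsilon_1$, namely that $t \cdot w = t^{-1} w$ corresponds to the character $t \mapsto t^{-1}$, i.e., $\epsilon_1 = (0,-1)$ rather than $(0,1)$, and that the desuspension map $s$ is $T$-equivariant so that the $T$-component of the weight of $sv$ equals that of $v$.
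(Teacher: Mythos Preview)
Your proposal is correct and is exactly the intended argument: the paper does not write out a proof of this corollary, treating it as an immediate unpacking of the explicit action formulas \eqref{inverse} from \Cref{AutL} together with the passage to indecomposables described in \cref{tor-symm}. Your reading of the three weights and the transfer to $I(M(\mc{L}_c Z))$ is precisely that unpacking.

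One small refinement worth making explicit: when you write ``$gv = g^{\alpha} v$'' for $v \in V$, you are tacitly assuming the generating space $V \subset M(Z)$ is itself $T$-invariant, which \cref{tor-symm} warns need not hold in general. The fix is painless: the hypothesis is that the \emph{class} of $v$ in $I(M(Z))$ has weight $\alpha$, i.e.\ $gv - g^{\alpha} v \in (M(Z)^+)^2$; since $(M(Z)^+)^2 \subset (M(\mc{L}_c Z)^+)^2$ this already gives (ii), and for (iii) one notes that $s$ is a derivation, so $s\big((M(Z)^+)^2\big) \subset (M(\mc{L}_c Z)^+)^2$, whence $s(gv) - g^{\alpha}\, sv \in (M(\mc{L}_c Z)^+)^2$ as well. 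With that adjustment your argument is complete.
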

\subsection{Toroidal symmetries of the minimal algebras of cyclifications of $S^4$}
\label{symm-cycl-concrete}

Here we apply the results of the previous section to the
``cyclifications,'' the iterated cyclic loop spaces $\mc{L}^k_c S^4$
of the 4-sphere $S^4$, $k \ge 0$, from \cref{cyclification}.
The resulting symmetries of the Sullivan minimal model hold universally for fields of $k$-dimensional reductions of M-theory and may be interpreted as trombone and torus rescaling symmetries discussed in \cite{CDF}\cite{CLPS} and \cite{DS}.
Let us start with an immediate consequence of \Cref{step}.

\begin{cor}[Toroidal symmetries]
\label{split-rank}
The maximal $\RR$-split torus of the real algebraic group $\Aut M (\mc{L}_c^k S^4)$ for $k \ge 0$ is $T^{k+1}$, isomorphic to $\GG_m^{k+1}$ over $\RR$. The structure of $\mc{L}_c^k S^4$ as an iterated cyclification $($or the structure of $M(\mc{L}_c^k S^4)$ as a sequence of Halperin extensions \eqref{ext:cyc} and \eqref{ext:free}$)$ determines a canonical splitting $T^{k+1} \cong \GG_m^{k+1}$.
\end{cor}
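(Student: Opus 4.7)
The proof proposal is an induction on $k$, with \Cref{step} doing essentially all the work; the base case $k=0$ handles $S^4$ itself, and the inductive step simply iterates \Cref{step}.

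For the base case $k = 0$, I would check the hypotheses of \Cref{step} for $Z = S^4$. The Sullivan minimal model $M(S^4) = (\RR[g_4, g_7], d)$ with $dg_4 = 0$, $dg_7 = -\tfrac{1}{2} g_4^2$ has quadratic differential. Using \Cref{kerd}(\ref{quadr}), $P(M(S^4))$ is identified with $\ker d \cap V$, which is spanned by $g_4$, so $\dim P(M(S^4)) = 1$. One exhibits a $1$-dimensional split torus $T^1 = \GG_m \subseteq \Aut M(S^4)$ by rescaling $g_4 \mapsto t \cdot g_4$, $g_7 \mapsto t^2 \cdot g_7$ (the exponents forced by compatibility with $dg_7 = -\tfrac{1}{2} g_4^2$); then $\dim T^1 = \dim P(M(S^4))$, so \Cref{kerd}(\ref{maxdim}) declares $T^1$ to be a maximal split torus of $\Aut M(S^4)$, and the splitting $T^1 \cong \GG_m$ is canonical from this rescaling action.

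For the inductive step, assume that $T^{k+1} \cong \GG_m^{k+1}$ is a maximal $\RR$-split torus of $\Aut M(\mc{L}_c^k S^4)$ with $\dim T^{k+1} = \dim P(M(\mc{L}_c^k S^4))$, and that the differential on $M(\mc{L}_c^k S^4)$ is quadratic. Inspecting the formulas \eqref{dc} for $d_c$ shows that the new differential is again quadratic: each of $d_c v = dv + sv \cdot w$, $d_c sv = -s dv$, and $d_c w = 0$ lies in $S^2$ of the generating space (using that $d$ was quadratic inductively). Therefore \Cref{step} applies to $Z := \mc{L}_c^k S^4$, yielding that $T^{k+1} \times \GG_m$ is a maximal split torus of $\Aut M(\mc{L}_c^{k+1} S^4)$ of dimension equal to $\dim P(M(\mc{L}_c^{k+1} S^4))$ (this last equality being exactly the content of \Cref{fund.lemma} combined with \Cref{kerd}(\ref{quadr})). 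The new factor $\GG_m$ is the one constructed explicitly in \Cref{AutL} via \eqref{inverse}: it fixes $V$, rescales $V[1]$ by $t$, and rescales $w$ by $t^{-1}$.

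The canonical splitting $T^{k+1} \cong \GG_m^{k+1}$ is then read off directly from the recursion: at the $j$-th iteration of cyclification ($1 \le j \le k$), the Halperin extension \eqref{ext:cyc} produces a new $\GG_m$-factor acting by the explicit rule \eqref{inverse} on the generators $sv$ and $w_j$ introduced at that stage, together with the original $\GG_m$ from the base case acting on $(g_4, g_7)$. Each of these factors is intrinsically attached to a specific step in the tower of Halperin extensions, so the product decomposition $T^{k+1} \cong \GG_m \times \GG_m \times \cdots \times \GG_m$ is canonical given the iterated-cyclification structure. I do not anticipate any real obstacle in this argument: the nontrivial work has already been done in \Cref{fund.lemma}, \Cref{AutL}, and \Cref{step}; the only thing to check carefully is the preservation of the quadraticity hypothesis under the passage $M(Z) \mapsto M(\mc{L}_c Z)$, which is immediate from \eqref{dc}.
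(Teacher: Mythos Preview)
Your proposal is correct and follows essentially the same approach as the paper: induction on $k$ using \Cref{step}, with the base case $k=0$ handled exactly as in \Cref{k=0}. You are slightly more explicit about re-verifying quadraticity at each stage, but the paper already absorbs that check into the proof of \Cref{step} itself, so your argument and the paper's coincide.
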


\begin{proof}
The statement follows from \Cref{step} by induction. The base case $k=0$ is done in \Cref{k=0} below.
\end{proof}

\begin{example}[$k=0$: The 4-sphere $S^4$]
\label{k=0}

We start with the automorphism group $\Aut M(S^4)$ of the Sullivan minimal model $M(S^4)$
of $S^4$; see \eqref{S^4}. By the degree argument,
an automorphism of $M(S^4)$ must take $g_4$ to a scalar multiple of itself:

\vspace{-3mm} 
\begin{equation}
  \label{rg_4}
g_4 \mapsto tg_4
\end{equation}

\vspace{-1mm} 
\noindent for some $t \in \GG_m(\RR)$, and this determines the action of the
automorphism on $g_7$:
\vspace{-1mm} 
\begin{equation*}
  g_7 \mapsto t^2 g_7,
\end{equation*}

\vspace{-1mm} 
\noindent
and thereby on the whole DGCA $M(S^4)$. This gives an identification
$\Aut M(S^4) \cong \GG_m$ over $\RR$, and, therefore, $\Aut M(S^4)$
automatically coincides with its maximal split torus $T$ and $\dim T = \dim P(M(S^4)) = 1$, as $P(M(S^4)) \cong \ker d \cap (\RR g_4 \oplus \RR g_7) = \RR g_4$, see \Cref{kerd}(\ref{quadr}). Note that this
identification is unique up to automorphism of $\GG_m$, which could
only be $t \mapsto t^{-1}$ if not trivial. Thus, we get a weight
decomposition determined by
\begin{equation*}
g_4 \in M(S^4)_{\epsilon_0}, \qquad g_7 \in M(S^4)_{2 \epsilon_0},
\end{equation*}
with the weights defined up to common sign, that is to say,
\vspace{-2mm} 
\begin{equation}
  \label{S4weights}
t^{\epsilon_0} = t
\end{equation}

\vspace{-2mm} 
\noindent
or $t^{\epsilon_0} = t^{-1}$.
(Again, the weights are just determined by the weight of $g_4$, as
that is the only $d$-closed generator).  We can always normalize this
ambiguity so as to have positive weights and assume \eqref{S4weights} is valid. This choice also has 
topological motivation, as we will see in \cref{natural0}.

\end{example}


\begin{example}[$k=1$: The cyclification $\mc{L}_c
S^4$ of the sphere $S^4$]
\label{k=1}

We now look at the maximal real split torus of $\Aut M(\mc{L}_c
S^4)$ compatible with the structure of $M(\mc{L}_c S^4)$ as the
Sullivan minimal model of the cyclic loop space of $S^4$; see
Example \ref{example1}. 
\Cref{split-rank} canonically identifies the
maximal split torus of $\Aut M(\mc{L}_c S^4)$ as $\GG_m \times
\GG_m$, acting on $M(\mc{L}_c S^4) = \RR[g_4, g_7, sg_4, sg_7,
  w]$ as follows:
\begin{gather}
  \label{tg_4}
  t \cdot g_4 = t^{\epsilon_0} g_4, \qquad   t \cdot g_7 = t^{2\epsilon_0} g_7,\\
  \nonumber
  t \cdot sg_4 = t^{\epsilon_0 - \epsilon_1} sg_4,  \qquad
  t \cdot sg_7 = t^{2\epsilon_0 - \epsilon_1} sg_7,\\
\nonumber
t \cdot w =  t^{\epsilon_1} w,
  \end{gather}
where $t \in (\GG_m \times \GG_m) (\RR)$. This corresponds to a weight
decomposition determined by
\begin{gather*}
g_4 \in M(\mc{L}_c S^4)_{\epsilon_0}, \qquad g_7 \in M(\mc{L}_c S^4)_{2
  \epsilon_0},\\ sg_4 \in M(\mc{L}_c S^4)_{\epsilon_0 - \epsilon_1}, \qquad
sg_7 \in M(\mc{L}_c S^4)_{2 \epsilon_0 - \epsilon_1},\\
w \in M(\mc{L}_c S^4)_{\epsilon_1},
  \end{gather*}
as per Corollary \ref{corr}.
\end{example}

\begin{example}[$k=2$: The double cyclification $\mc{L}^2_c
  S^4$]
  \label{k=2}
We now consider Example \ref{example2}. 
Again, iterating the application of Theorem \ref{step}, the maximal
torus of $\Aut M(\mc{L}^2_c S^4)$ is identified canonically with the
product $\GG_m^3 = (\GG_m \times \GG_m ) \times \GG_m$, where the
first factor $\GG_m \times \GG_m$ refers to the maximal torus of $\Aut
M(\mc{L}_c S^4)$ identified in the previous example. Continuing the
use of notation of Corollary \ref{corr}, we obtain a weight
decomposition of $M(\mc{L}^2_c S^4)$, which is determined on its
generators (see Example \ref{example2}) as follows:
\begin{gather}
  \nonumber
  g_4 \in M_{\epsilon_0}, \qquad g_7 \in M_{2
  \epsilon_0},\\
  \nonumber
s_1 g_4 \in M_{\epsilon_0 - \epsilon_1}, \qquad s_1 g_7 \in M_{2 \epsilon_0 - \epsilon_1},\\
\label{w1}
w_1 \in M_{\epsilon_1},\\
  \nonumber
s_2 g_4 \in M_{\epsilon_0 - \epsilon_2}, \qquad s_2 g_7 \in M_{2 \epsilon_0 - \epsilon_2},\\
  \nonumber
s_2 s_1 g_4 \in M_{\epsilon_0 - \epsilon_1 - \epsilon_2}, \qquad s_2 s_1 g_7 \in M_{2 \epsilon_0 - \epsilon_1 - \epsilon_2},\\
  \nonumber
s_2 w_1 \in M_{\epsilon_1 -\epsilon_2}, \qquad w_2 \in M_{\epsilon_2},
  \end{gather}
where, for brevity, we have been writing $M$ for $M(\mc{L}^2_c S^4)$.
\end{example}

\begin{example}[$k=3$: The triple cyclification $\mc{L}^3_c
  S^4$]
  \label{k=3}
Now we consider Example \ref{example3}.
Again the maximal torus of $\Aut M(\mc{L}^3_c S^4)$ splits canonically to
become $\GG_m^4 = (\GG_m^3) \times \GG_m$, where the first factor
comes from the double cyclification. The resulting weight decompositon
of $M = M(\mc{L}^3_c S^4)$ repeats the formulas \eqref{w1} verbatim
for the weights of those generators which are the generators of
$M(\mc{L}^2_c S^4)$. The weight of $w_3$ is $\epsilon_3$. For the
weights of generators of the type $s_3 g$, where $g$ is a generator on
the list \eqref{w1}, the formulas are the same as \eqref{w1}, except
that weight $\epsilon_3$ gets subtracted, e.g.,
\begin{gather*}
  s_3 g_4 \in M_{\epsilon_0 - \epsilon_3}, \qquad s_3 w_2 \in M_{\epsilon_2-\epsilon_3}.
  \end{gather*}
The weight $\epsilon_1 - \epsilon_2 -\epsilon_3$ will not be present, as
$s_3 s_2 w_1$ gets truncated to zero.
\end{example}

\begin{example}[$k\ge 3$: The $k$-fold cyclification $\mc{L}^k_c
  S^4$]
  \label{kgeq3}

Let us say a few words on the general pattern we see for $k \ge
3$. All the weights of the generating space $V$ for $S(V) =
M(\mc{L}^k_c S^4)$ will be of the form
\begin{align*} 
\epsilon_0 - & \sum_{j=1}^l \epsilon_{i_j}, \qquad \text{where $0 \le l \le
  3$ and $1 \le i_1 < \dots < i_l \le k$},
\\
 2\epsilon_0 - & \sum_{j=1}^l \epsilon_{i_j}, \qquad \text{where $0 \le l \le
  6$ and $1 \le i_1 < \dots < i_l \le k $},
\\
 & \epsilon_i, \; \qquad \qquad 1 \le i \le k,
\\
 \epsilon_i & -  \epsilon_j, \qquad \quad  1 \le i < j \le k.
\end{align*}
Each of the corresponding weight spaces in $V$ will be
one-dimensional; see Example \ref{examplek}.
\end{example}

\subsection{Topological interpretation of toroidal symmetries}
\label{Sec-toptor}

In this section, we interpret the toroidal symmetries \eqref{inverse}
and \eqref{rg_4} as rational (real) homotopy equivalences. The idea of
doing that was suggested by A.~Bondal. These symmetries have physical
interpretation of trombone and torus rescaling symmetries, mentioned in 
\cref{symm-cycl-concrete}.
 
\subsubsection{Toroidal symmetries coming from $S^4$}
 \label{natural0}

Let us start with the symmetry \eqref{rg_4}. Given an integer $n \in \ZZ$,
define $\varphi_0(n): S^4 \to S^4$ to be any continuous map of
degree $n$. Such a map induces a homomorphism, given by multiplication by $n$:
\begin{align*}
  \varphi_0(n)_*: \pi_4(S^4)  & \longrightarrow \pi_4(S^4)\\[-2pt]
x & \longmapsto n x
\end{align*}

\vspace{-2mm}
\noindent on the degree-four homotopy group $\pi_4(S^4)$ of $S^4$.  It also
induces the identity morphism on $\pi_0(S^4)$.
Recall that $\pi_7(S^4) \cong \ZZ \oplus \ZZ_{12}$, where the free
part may be canonically identified with the subgroup $\ZZ y \subset
\pi_7(S^4)$, $y$ being the class of the Hopf fibration $S^7 \to
S^4$. We know how $\varphi_0(n)$ acts on $y$:
\begin{align*}
  \varphi_0(n)_*: y & \mapsto n^2 y,
\end{align*}
because the Whitehead square $[x,x]$ of the generator $x \in
\pi_4(S^4) \cong \ZZ x $, the homotopy class of $\id: S^4 \to S^4$, is
twice the generator $y$ of $\ZZ y \subset \pi_7(S^4)$ (see \cite{FSS19b} for explanation in this context):
\[
  [x,x] = 2y.
\]
When we pass to rational homotopy groups, all torsion disappears, and we
have $\pi_4 (S^4) \otimes \QQ = \QQ x$ and $\pi_7 (S^4) \otimes \QQ = \QQ y$. Since there are no other rational
homotopy groups $\pi_i (S^4) \otimes \QQ$, $i \ge 1$, we see that, when $n$ is nonzero, $\varphi_0(n):
S^4 \to S^4$ is a rational homotopy equivalence. Hence, it has an inverse
$\varphi_0(n)^{-1}: S^4 \to S^4$ in the rational homotopy category. We may
denote this morphism $\varphi_0(1/n) := \varphi_0(n)^{-1}$, given that
it acts on the generator $x \in \pi_4(S^4) \otimes \RR$ as
\vspace{-2mm} 
\[
x \mapsto \frac{1}{n} x.
\]
Composing a map $\varphi_0(p): S^4 \to S^4$ of degree $p \in \ZZ
\mathbin{\vcenter{\hbox{$\scriptscriptstyle\mathrlap{\setminus}{\hspace{.2pt}\setminus}$}}} \{0\}$ 
and a map $\varphi_0(1/q)$ for $q \in \NN$, we are
getting an automorphism $\varphi_0(p/q): S^4 \to S^4$ in the rational
homotopy category. This way we get a group homomorphism
\begin{align*}
  \QQ^\times & \longrightarrow \Aut_\QQ S^4,\\
  p/q & \longmapsto \varphi_0(p/q),
\end{align*}
where $\Aut_\QQ S^4$ stands for the automorphism group in the rational
homotopy category. Given our description of $\Aut M(S^4)$ in  Example \ref{k=0} and the
fact that the rational homotopy category (of rational, nilpotent, finite-type spaces) is equivalent (via a
contravariant functor) to the category of minimal Sullivan algebras over $\QQ$,
we see that \eqref{aut} actually defines an isomorphism
\[
  \QQ^\times \xrightarrow{\; \sim \;} \Aut_\QQ S^4 \xrightarrow{\; \sim \;} \Aut
  M(S^4)(\QQ),
  \]
where $\Aut M(S^4)(\QQ)$ is the group of rational points of the algebraic
group $\Aut M(S^4)$.
Since the action formulas are polynomial, the group isomorphism defines an isomorphism of algebraic groups over $\QQ$:
\vspace{-2mm} 
\begin{align}
  \label{aut}
  \GG_m & \longrightarrow \Aut M(S^4),\\
  \nonumber
  r & \longmapsto \varphi_0(r).
\end{align}
Via the action on the target $S^4$ of $\mc{L}^k_c
S^4$, the isomomorphism \eqref{aut} may be canonically lifted, as in \Cref{AutL}, to a $\QQ$-isomomorphism
\[
  \GG_m \xrightarrow{\; \sim \;} \Aut
  M(\mc{L}^k_c S^4),
  \]
which is exactly the action of the multiplicative group $\GG_m$ on $M(\mc{L}^k_c S^4)$ coming from the action of $\GG_m$ on
$g_4 \in M(\mc{L}^k_c S^4)$, as in \eqref{rg_4} and \eqref{tg_4}.
Summarizing, we have:

\begin{prop}[Toroidal symmetries from $S^4$]
\label{prop-torS4}
Consider the degree $n$ maps $\varphi_0(n): S^4 \to S^4$, for $n \in \ZZ\setminus \{0\}$.

\vspace{-3mm} 
  \begin{enumerate}[{\bf (i)}]
   \setlength\itemsep{-1pt}
  \item
These maps are invertible in the rational homotopy category.

\item The compositions of these maps with their inverses
 gives a group isomorphism 
$\Q^\times  \xrightarrow{\sim} \Aut
  M(S^4)(\Q)$, which defines naturally an isomorphism of algebraic groups over $\QQ$
$$
\GG_m \xrightarrow{\; \sim \;} \Aut
  M(S^4)\;.
  $$

\vspace{-2mm}
\item  This lifts canonically to an algebraic-group homomorphism
$
  \GG_m  \xrightarrow{\; \sim \;} \Aut
  M(\mc{L}^k_c S^4)
$
for $k \ge 1$, which provides an action of $\GG_m$ on the Sullivan minimal models of the cyclifications of $S^4$.
\end{enumerate}
\end{prop}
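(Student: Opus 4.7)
The plan for \textbf{(i)} is to invoke the standard criterion that a continuous map between path-connected, nilpotent spaces of finite rational type is a rational homotopy equivalence if and only if it induces an isomorphism on all rational homotopy groups. Since $\pi_i(S^4) \otimes \QQ$ vanishes except for $i = 4$ and $i = 7$, we only have two conditions to check. On $\pi_4(S^4) \otimes \QQ = \QQ x$, the map $\varphi_0(n)$ acts by multiplication by $n$, invertible over $\QQ$ whenever $n \neq 0$. On $\pi_7(S^4) \otimes \QQ = \QQ y$, where $y$ is the rational class of the Hopf fibration, $\varphi_0(n)$ acts by $n^2$ via the Whitehead product identity $[x,x] = 2y$, again invertible. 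Hence $\varphi_0(n)$ is an isomorphism in the rational homotopy category, and we may formally define its inverse $\varphi_0(1/n)$ and, by composition, $\varphi_0(p/q)$ for all $p/q \in \QQ^\times$.

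For \textbf{(ii)}, the assignment $p/q \mapsto \varphi_0(p/q)$ defines a group homomorphism $\QQ^\times \to \Aut_\QQ S^4$. Composing with the contravariant equivalence between the rational homotopy category of nilpotent, finite-type spaces and the category of minimal Sullivan algebras over $\QQ$, we get a group homomorphism $\QQ^\times \to \Aut M(S^4)(\QQ)$. By \Cref{k=0}, any automorphism of $M(S^4) = \RR[g_4, g_7]$ is determined by $g_4 \mapsto tg_4$, $g_7 \mapsto t^2 g_7$, for some $t \in \GG_m$, so on rational points $\Aut M(S^4)(\QQ) \cong \QQ^\times$, and the two constructions are inverse to each other. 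Because the action of $t$ on the (finite-dimensional) space of indecomposables is given by polynomial formulas, the group isomorphism on $\QQ$-points upgrades to an isomorphism of algebraic groups $\GG_m \xrightarrow{\sim} \Aut M(S^4)$ defined over $\QQ$ (indeed over $\ZZ$).

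For \textbf{(iii)}, we apply \Cref{AutL} inductively. That proposition provides, for every path-connected nilpotent $Z$ of strong finite type, a natural injection $\Aut M(Z) \hookrightarrow \Aut M(\mc{L}_c Z)$ given by explicit polynomial formulas on the free generators $V \oplus V[1] \oplus \RR w$, hence a morphism of algebraic groups. Starting from the $\GG_m \cong \Aut M(S^4)$ of part (ii) and iterating $k$ times along the cyclification tower, we obtain the desired algebraic-group homomorphism $\GG_m \to \Aut M(\mc{L}_c^k S^4)$, which is faithful because each inclusion \eqref{inclusion} is faithful and the actions on $g_4$ and its iterated desuspensions never become trivial. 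This is precisely the ``last factor'' $\GG_m$ in the canonical splitting $T^{k+1} \cong \GG_m^{k+1}$ of \Cref{split-rank}, matching the explicit weight assignments $t \cdot g_4 = t^{\epsilon_0} g_4$, etc., recorded in \Cref{k=1,k=2,k=3}.

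The main obstacle, such as it is, lies in \textbf{(ii)}: we must pass from a group isomorphism on $\QQ$-points to an isomorphism of algebraic group schemes. This is handled by the observation that both $\varphi_0(r)$ and the inverse identification $t \in \Aut M(S^4) \mapsto (g_4 \mapsto tg_4)$ are polynomial (in fact, monomial) in the parameter, so no Zariski-density argument beyond inspection of the defining equations is required. Parts (i) and (iii) are essentially bookkeeping on top of well-known inputs (rational homotopy groups of $S^4$ and \Cref{AutL}, respectively).
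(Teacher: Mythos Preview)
Your proof is correct and follows essentially the same approach as the paper: you check that $\varphi_0(n)$ acts by $n$ on $\pi_4^\QQ$ and by $n^2$ on $\pi_7^\QQ$ via the Whitehead identity $[x,x]=2y$, invoke the equivalence between the rational homotopy category and minimal Sullivan algebras together with the explicit description of $\Aut M(S^4)$ from \Cref{k=0}, and then lift to the cyclifications via \Cref{AutL}. The paper's argument is identical in structure and content.
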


\subsubsection{Toroidal symmetries coming from $S^1$}
 \label{natural1}

The situation with the action \eqref{inverse} is subtler. Let us consider the
general case of $\mc{L}_c Z$. For $n \in \ZZ$, the $n$-fold
winding map
\vspace{-2mm} 
\begin{align}
  \nonumber
  S^1 & \longrightarrow S^1\\[-3pt]
  \label{power}
  z & \longmapsto z^n
\end{align}

\vspace{-2mm} 
\noindent
induces a continuous map $\psi(n): \mc{L} Z \to \mc{L} Z$:
\vspace{-2mm} 
\[
\psi(n) (f) (z) := f(z^n)
\]
for $f \in \mc{L} Z = \Map(S^1, Z)$ and $z \in S^1 \subset \CC$. For $n \ne 0$, the
map $\psi(n)$ is a rational (and real) homotopy equivalence, because
so is the power map \eqref{power}. However, $\psi(n)$ is not
$S^1$-equivariant, unless $n=1$, as, for instance, for the right action $(f
\cdot z') (z) := f(z' z)$ of $S^1$ on $\mc{L} Z$, we have
\vspace{-1mm} 
\[
\psi(n) (f \cdot z') (z) = f (z' z^n) \ne f((z'z)^n) = (\psi(n) (f)
\cdot z') (z).
\]

\vspace{-1mm} 
\noindent
Moreover, we can say that
\[
\psi(n) (f \cdot z') = \psi(n) (f) \cdot \sqrt[n]{z'}
\]
\newpage 
\noindent for any choice of the $n$th root, or, better, just using the rational
homotopy inverse of the rational equivalence \eqref{power}.
Accordingly, the map $\psi(n)$ would not induce a map $\mc{L}_c Z \to
\mc{L}_c Z$ of the homotopy quotient $\mc{L}_c Z = \mc{L} Z
\times_{S^1} ES^1$. Indeed, by definition of the quotient $\mc{L} Z
\times_{S^1} ES^1$, a point $(f \cdot z', e) \in \mc{L}_c Z$ is
equivalent to the point $(f, z' \cdot e)$, but $ (\psi(n) (f \cdot z'), e) =
(\psi(n) (f) \cdot \sqrt[n]{z'}, e) \sim (\psi(n) (f), \sqrt[n]{z'}
e)$, which is not equivalent to $ (\psi(n) (f), z' \cdot e)$.

\medskip 
What saves the situation is that the map $\psi(n)$ extends in the
rational homotopy category to a
morphism $\mc{L} Z \times ES^1 \to \mc{L} Z \times ES^1$ that respects the equivalence relation
\begin{equation}
  \label{equiv-rel}
 (f \cdot z', e) \sim (f, z' \cdot e)
\end{equation}
and thereby descends to the quotient $\mc{L} Z \times_{S^1}
ES^1$. Indeed, note that the topological group morphism \eqref{power}
induces a continuous map $\rchi(n): ES^1 \to ES^1$ of the total space
$ES^1$ of the universal bundle $ES^1 \to BS^1$ by functoriality. In
the standard simplicial model of $ES^1$, this map $\rchi(n)$ can be
expressed as $[z_0, \dots, z_p] \mapsto [z_0^n, \dots, z_p^n]$, where
$[z_0, \dots, z_p]$, $z_i \in S^1$, is a $p$-simplex of $ES^1$. The
map $\rchi(n): ES^1 \to ES^1$ is not $S^1$-equivariant, either. Say,
for the (left) action $z' \cdot [z_0, \dots, z_p] := [z' z_0, \dots,
  z' z_p]$ of $S^1$ on $ES^1$, we have
  \vspace{-1mm} 
\[
\rchi(n) (z' \cdot [z_0, \dots, z_p]) = [(z' z_0)^n, \dots, (z' z_p)^n]
\ne [z' z_0^n, \dots, z' z_p^n] = z' \cdot \rchi(n) ([z_0, \dots,
  z_p]).
\]

\vspace{-1mm} 
\noindent
What we have is
\[
\rchi(n) ( z' \cdot e) = (z')^{n} \cdot \rchi(n) (e), \qquad e \in ES^1.
\]
For $n \ne 0$, the map $\rchi(n)$ is a rational (and real) homotopy equivalence and
therefore has a rational homotopy inverse $\rchi(n)^{-1}$, so that
\[
\rchi(n)^{-1} ( z' \cdot e) = \sqrt[n]{z'} \cdot \rchi(n)^{-1} (e).
\]
Now, the morphism
\[
\psi(n) \times \rchi(n)^{-1}: \mc{L} Z \times ES^1  \to \mc{L} Z \times ES^1,
\]
which is invertible in the rational (real) homotopy category, respects
the equivalence relation \eqref{equiv-rel}:
\vspace{-2mm} 
\begin{multline*}
\psi(n) \times \rchi(n)^{-1} ( f \cdot z', e)) = (\psi(n) (f) \cdot
\sqrt[n]{z'}, \rchi(n)^{-1}(e)) 
\\
\sim \;\;\;
(\psi(n) (f), 
\sqrt[n]{z'}  \cdot  \rchi(n)^{-1}(e))
 = \psi(n) \times \rchi(n)^{-1} ( f ,  z' \cdot e),
\end{multline*}
and therefore induces a rational automorphism of $\mc{L}_c Z$,
which we denote by
\[
\varphi_1(n): = \psi(n) \times \rchi(n)^{-1}: \mc{L}_c Z 
\longrightarrow \mc{L}_c Z\,.
\]

As in the case of $\varphi_0(n)$ in \cref{natural0}, the rational homotopy equivalence
$\varphi_1 (n): \mc{L}_c Z \xrightarrow{\sim} \mc{L}_c Z$ extends
to a group homomorphism
\vspace{-2mm} 
\begin{align*}
  \QQ^\times & \longrightarrow \Aut_\QQ  \mc{L}_c Z,\\
  r & \longmapsto \varphi_1(r).
\end{align*}
From this, we get a $\QQ$-algebraic-group homomorphism
\[
\GG_m \longrightarrow \Aut M( \mc{L}_c Z),
\]
so as $r \in \GG_m$ acts on $w \in H^2(BS^1; \R) \subset M(BS^1)$
as $r^{-1} w$, because $\rchi(n)$ induces the action $w \mapsto nw$ on
degree-two cohomology and we used $\rchi(n)^{-1}$, and $r \in
\GG_m$ acts on $sv \in V[1] \subset M(\mc{L} Z)$ as $r(sv)$. This
motivates formulas \eqref{inverse}.
Summarizing, we have: 

\begin{prop}[Toroidal symmetries from $S^1$]
 \label{prop-torS1}
The $n$-fold winding maps $S^1  \to S^1$, $n \in \ZZ \setminus \{0\}$, induce morphisms $\varphi_1(n): \mc{L}_c Z \to \mc{L}_c Z$ in the rational homotopy category by the construction above. 

\vspace{-2mm} 
  \begin{enumerate}[{\bf (i)}]
   \setlength\itemsep{-2pt}
  \item
These morphisms are invertible, and the compositions of them with their inverses
give a group homomorphism 
$\Q^\times  \xrightarrow{\sim} \Aut
  M(\mc{L}_c Z)(\Q)$, which defines naturally a morphism of $\QQ-algebraic groups$
$$
\GG_m  \xrightarrow{\;\; \sim \;\;} \Aut M( \mc{L}_c Z)\;.
$$

\vspace{-1mm} 
\item  For an iterated cyclic loop space $\mc{L}_c^k Z$, $k \ge 1$, the algebraic-group morphisms corresponding to different iterations commute and thereby define an algebraic-group morphism
$
  (\GG_m)^k  \xrightarrow{\sim} \Aut
  M(\mc{L}^k_c Z),
 $ which provides an action of $(\GG_m)^k$ on the Sullivan minimal model of the $k$-fold cyclification $\mc{L}_c^k Z$ of $Z$.
 \end{enumerate} 
\end{prop}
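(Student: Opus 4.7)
The plan is to implement the construction sketched in the paragraphs immediately preceding the statement and organize it as a proof. First, I would verify that the $n$-fold winding map $z \mapsto z^n$ on $S^1$ is a rational homotopy equivalence for $n \in \ZZ \setminus \{0\}$: it induces multiplication by $n$ on $\pi_1(S^1) \cong \ZZ$, which becomes an isomorphism after tensoring with $\QQ$. By functoriality of $\Map(-,Z)$ in the source, this yields a rational equivalence $\psi(n): \mc{L} Z \to \mc{L} Z$, and by functoriality of the classifying-space construction applied to the topological group endomorphism $z \mapsto z^n$ of $S^1$, a rational equivalence $\rchi(n) : ES^1 \to ES^1$. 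Both therefore admit rational-homotopy inverses $\psi(n)^{-1}$ and $\rchi(n)^{-1}$.

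Next I would carry out the descent step. Although neither $\psi(n)$ nor $\rchi(n)$ is strictly $S^1$-equivariant, the text already displays the key computations
\[
\psi(n)(f \cdot z') \;=\; \psi(n)(f) \cdot \sqrt[n]{z'}, \qquad \rchi(n)^{-1}(z' \cdot e) \;=\; \sqrt[n]{z'} \cdot \rchi(n)^{-1}(e),
\]
understood in the rational homotopy category, and these together imply that $\psi(n) \times \rchi(n)^{-1}$ respects the diagonal equivalence relation \eqref{equiv-rel} on $\mc{L} Z \times ES^1$. Hence it descends to a morphism $\varphi_1(n): \mc{L}_c Z \to \mc{L}_c Z$ in the rational homotopy category. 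Invertibility follows by constructing $\varphi_1(n)^{-1} := \psi(n)^{-1} \times \rchi(n)$ in the same manner and checking descent and composition.

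For the group-homomorphism property, a direct unwinding gives $\psi(m) \circ \psi(n) = \psi(mn)$ (since $\psi(m)(\psi(n)(f))(z) = f(z^{mn})$) and analogously for $\rchi$, so $\varphi_1(m) \circ \varphi_1(n) = \varphi_1(mn)$; this extends from $\ZZ \setminus \{0\}$ to $\QQ^\times$. Via the (contravariant) equivalence between the rational homotopy category of nilpotent finite-type spaces and the category of minimal Sullivan algebras over $\QQ$, one obtains $\QQ^\times \to \Aut M(\mc{L}_c Z)(\QQ)$. The formulas \eqref{inverse}, computed on generators using the Halperin extension \eqref{ext:cyc}, show that the action is polynomial in the parameter, and thus promotes to a morphism $\GG_m \to \Aut M(\mc{L}_c Z)$ of algebraic groups over $\QQ$. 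For part \textbf{(ii)}, at each iteration $\mc{L}_c^k Z = \mc{L}_c(\mc{L}_c^{k-1} Z)$ the $i$-th $\GG_m$ factor arises from the winding map on the $i$-th $S^1$. Since winding on the $i$-th and $j$-th circles involves independent $S^1$ factors and $ES^1$ factors, their constructions commute on the nose at the level of $\mc{L} Z \times ES^1$ products, and hence the descended morphisms $\varphi_1(n_i)$ and $\varphi_1(n_j)$ commute; assembling them produces the desired $\GG_m^k \to \Aut M(\mc{L}_c^k Z)$.

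The main obstacle is the descent step: the cancellation of non-equivariance between $\psi(n)$ and $\rchi(n)^{-1}$ only makes literal sense after passing to the rational homotopy category, since $\rchi(n)^{-1}$ does not exist on the nose. One must therefore phrase the whole argument inside the rational homotopy category (or, equivalently, at the Sullivan-model level through the antiequivalence), and verify that the polynomial formulas \eqref{inverse} on generators are compatible with the algebraic-group structure on $\GG_m$. The commutativity claim in part \textbf{(ii)} requires slightly more care when two consecutive cyclifications introduce generators of type $s_j w_i$ that mix weights $\epsilon_i$ and $\epsilon_j$; however, \Cref{corr} shows these mixed weights are precisely $\epsilon_i - \epsilon_j$, which is symmetric under the two factor actions, so commutativity is preserved.
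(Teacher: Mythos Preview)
Your proposal is correct and follows essentially the same route as the paper: the proposition is stated there as a summary of the preceding construction, and you have reconstructed that construction faithfully (winding map, $\psi(n)$ and $\rchi(n)$, the compensating product $\psi(n)\times\rchi(n)^{-1}$ descending through the Borel relation, extension to $\QQ^\times$, and promotion to $\GG_m$ via the polynomial formulas \eqref{inverse}). One small wobble: in your last paragraph the weight $\epsilon_i-\epsilon_j$ is not ``symmetric under the two factor actions''; commutativity holds simply because each $\GG_m$ factor acts diagonally in the weight decomposition, so the actions commute automatically.
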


\subsection{Toroidal symmetries of type IIB}

Since the type IIB model (see \cref{IIB}) falls out of the previous sequence of cyclifications of $S^4$, 
we need to treat it separately. We will work with our ``unstable'' model $M(IIB) = (S(V),d)$ of type IIB; 
see \eqref{htypeIIB}. 

\medskip 
By \Cref{kerd}, to identify a maximal $\RR$-split torus $T^B$ of $\Aut M(IIB)$, we need to start with computing $\ker d \cap V = \RR h_3 \oplus \RR \omega_1$ in the notation of the system \eqref{htypeIIB}. 
Thus, a maximal  $\RR$-split torus is at most 2-dimensional. The explicit formulas below identify a 2-dimensional split torus acting faithfully on $M(IIB)$, which has to be maximal by the dimension argument.

\medskip 
Take the $\RR$-split torus $T^B := \GG_m^2$ acting as the group of diagonal matrices on the real plane spanned by $h_3$ and $\omega_1$.
Denote by $\beta_0 = (1,0)$ the weight of $h_3$ and by $\beta_1 = (0,1)$ the weight of $\omega_1$, so that
we have the action 
\begin{equation}
\label{linear}
    t \cdot h_3 = t^{\beta_0} h_3, \qquad t \cdot \omega_1 = t^{\beta_1} \omega_1, \qquad \qquad t \in T^B = \GG_m^2.
\end{equation}

Then, with \eqref{htypeIIB}, we have the following:

\begin{prop}[Toroidal symmetry in type IIB] 
\label{prop-torIIB}
The formulas 
\begin{gather}
\label{tIIB}
  t \cdot \omega_{\, 3} = t^{\beta_0 + \beta_1} \omega_{\, 3}, 
  \qquad   t \cdot \omega_{\, 5} = t^{2\beta_0 + \beta_1} \omega_{\, 5},\\
  \nonumber
  t \cdot h_7 = t^{3\beta_0 + 2 \beta_1} h_7,  \qquad
  t \cdot \omega_{\, 7} = t^{3\beta_0 + \beta_1} \omega_{\, 7}
  \end{gather}
  extend the action \eqref{linear} to an action of the torus $T^B = \GG_m^2$ on $M(IIB)$. This identifies $T^B$ as a maximal $\RR$-split torus of $\Aut M(IIB)$.
\end{prop}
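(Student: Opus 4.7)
The plan is to verify directly that the formulas define a well-defined DGCA action of $T^B = \GG_m^2$ on $M(IIB)$, confirm faithfulness, and then appeal to \Cref{kerd} to conclude maximality.

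First, I would check that the prescribed action extends to a DGCA automorphism by verifying compatibility with the differential on each generator in \eqref{htypeIIB}. Since $T^B$ acts by scalars on each generator, it suffices to check that both sides of $t \cdot dv = d(t \cdot v)$ have the same weight for each $v \in \{\omega_1, h_3, \omega_3, \omega_5, h_7, \omega_7\}$. For $\omega_1$ and $h_3$ this is immediate. For the remaining generators, one reads off:
\begin{align*}
\text{wt}(h_3 \omega_1) &= \beta_0 + \beta_1 = \text{wt}(\omega_3),\\
\text{wt}(h_3 \omega_3) &= 2\beta_0 + \beta_1 = \text{wt}(\omega_5),\\
\text{wt}(\omega_3 \omega_5) = \text{wt}(\omega_7 \omega_1) &= 3\beta_0 + 2\beta_1 = \text{wt}(h_7),\\
\text{wt}(h_3 \omega_5) &= 3\beta_0 + \beta_1 = \text{wt}(\omega_7).
\end{align*}
In particular, the two terms on the right-hand side of $dh_7 = \omega_3\omega_5 + \omega_7\omega_1$ carry the same weight, so the sum transforms homogeneously. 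Thus \eqref{linear} together with \eqref{tIIB} extends to a morphism $T^B \to \Aut M(IIB)$ of algebraic groups over $\RR$.

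Next, I would verify faithfulness. Since $\beta_0$ and $\beta_1$ form a $\ZZ$-basis of $\mathfrak{X}(T^B) = \ZZ^2$, an element $t \in T^B$ acting trivially on $h_3$ and $\omega_1$ must satisfy $t^{\beta_0} = t^{\beta_1} = 1$, forcing $t$ to be the identity. Hence $T^B \hookrightarrow \Aut M(IIB)$ is a faithful $\RR$-split torus of dimension $2$.

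Finally, maximality follows from \Cref{kerd}. Observe that the differential on $M(IIB) = (S(V), d)$ is quadratic: each generator maps either to zero or to an element of $S^2(V)$, as seen in \eqref{htypeIIB}. By \Cref{kerd}(\ref{quadr}), the natural map $\ker d \cap V \to P(M(IIB))$ is an isomorphism. Inspecting \eqref{htypeIIB}, the only generators annihilated by $d$ are $h_3$ and $\omega_1$, so $\dim P(M(IIB)) = 2$. By \Cref{kerd}(\ref{maxdim}), any $\RR$-split torus of $\Aut M(IIB)$ has dimension at most $2$, and equality forces maximality. Since $\dim T^B = 2$, the torus $T^B$ is a maximal $\RR$-split torus of $\Aut M(IIB)$, completing the proof.

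No step here poses real difficulty; the one thing to be careful about is the compatibility check for $dh_7$, where two distinct monomials appear on the right-hand side and must share the same weight, which is exactly what the weight assignments in \eqref{tIIB} are designed to enforce.
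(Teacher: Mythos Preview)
Your proof is correct and follows essentially the same approach as the paper: the paragraph preceding the proposition already computes $\ker d \cap V = \RR h_3 \oplus \RR \omega_1$ and invokes \Cref{kerd} for the dimension bound, leaving the explicit verification of compatibility with the differential implicit. You have simply filled in those weight checks in detail, which is exactly what is needed.
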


\section{The $E_k$ symmetry of iterated cyclic loop spaces}
Here we unravel the $E_k$ symmetry of the iterated cyclic loop spaces
$\mc{L}^{k}_c S^4$, described in \cref{Sullmm}, 
where $E_k$ for $k \ge 0$ is understood in the sense of \Cref{table1}
and \Cref{table2}
in the Introduction. Our goal is to use the toroidal symmetries of the cyclic loop
spaces $\mc{L}_c^k S^4$ from \cref{Sec-TorSymm} 
and build certain canonical combinatorial data: ``a lattice
$N_k$ with an inner product $(-,-)$ and a distinguished element
$K_k^* \in N_k$'', similar to the triple $(N_k, (-,-), \cK_k)$
in the theory of del Pezzo surfaces, see below. This will
automatically produce the $E_k$ root system, see \Cref{rootdata}.

\subsection{Reminder: the combinatorial data from del Pezzo surfaces}
\label{dPtriple}

Let us recall how the triple $(N_k, (-,-), \cK_k)$ shows up in the del Pezzo theory, for the sake of motivation and setting up notation. The Picard group $\Pic \BB_k$ happens to be isomorphic to the 2nd cohomology group $H^2(\mathbb{B}_k; \Z)$. This is a lattice with basis $\cH, \cE_1, \dots, \cE_k$:
$$
H^2(\mathbb{B}_k; \Z) \cong \Z \cH \oplus \Z \cE_1 \oplus \cdots \oplus \Z \cE_k,
$$
where $\cH$ is the class of the proper transform of the line
(here also a hyperplane) $\CC \PP^1$ in $\mathbb{C P}^2$ 
and $\cE_i$ is the class of the exceptional divisor over the blowup point $x_i \in \mathbb{C P}^2$.
See \cite{Man}\cite{Demazure}\cite{Beauville}\cite{KSC}. 
The 2nd integral cohomology has a natural inner product given by the intersection form:
\(
\label{HE-inter}
\cH \cdot \cH=1, \quad \cH\cdot \cE_i=0, \quad \cE_i \cdot \cE_j = - \delta_{ij}, \qquad 1 \leq i, j \leq k\;. 
\)
Thus the intersection matrix of $\mathbb{B}_k$ is given by the Lorentzian form 
$
Q={\rm diag}(1, -1, -1, \dots, -1)
$.
Hence $H^2(\mathbb{B}_k;\ZZ) \linebreak[0] \cong \ZZ^{1,k}$
and $\mathbb{B}_k$ has Betti numbers $b_2^+=1$, $b_2^{-}=k$, with signature $\sigma=1-k$.

\medskip 
The \emph{canonical class} of $\BB_k$ may be expressed as
\begin{equation}
\label{k-canclass}
\cK_k := \Omega^2_{\BB_k} = - 3 \cH + \cE_1 + \dots + \cE_k,
\end{equation}
while the ample \emph{anticanonical class} becomes
 \(
 \label{AntiK}
-\cK_k = 3 \cH - \cE_1 - \dots - \cE_k.
\)
The \emph{degree of a divisor $\cD \in \Pic (\BB_k)$} is measured with respect to the anticanonical map as:
\(
\label{eqDegD}
\deg \cD := -\cK_k \cdot \cD\;.
\)

The ``outlier'' del Pezzo surface $\BB_1' := \CC \PP^1 \times \CC \PP^1$
of degree 8 has Picard group of rank 2:
$$
\Pic (\CC \PP^1 \times \CC \PP^1) \cong H^2 (\CC \PP^1 \times \CC \PP^1; \ZZ)  
\cong \ZZ \mathfrak{l}_1 \oplus \ZZ \mathfrak{l}_2\;,
$$
where $\mathfrak{l}_1$ and $\mathfrak{l}_2$ are the classes of the two $\CC \PP^1$ factors. The intersection pairing works as follows (cf. \eqref{HE-inter}):
\vspace{-1mm}  
\[
\mathfrak{l}_1 \cdot \mathfrak{l}_1 = \mathfrak{l}_2 \cdot \mathfrak{l}_2 = 0, \quad \mathfrak{l}_1 \cdot \mathfrak{l}_2 = 1 \;. 
\]
The anticanonical class is given by $- \cK_{\BB_1'} = 2 \mathfrak{l}_1 + 2 \mathfrak{l}_2$. 

\subsection{The ``Cartan subalgebra'' and weight lattice}


In this section, we will present the first element of the triple arising from $\mc{L}_c^k S^4$, the lattice. The idea is to use the weight lattice coming from the weight
decompositions of Corollary \ref{corr} and of
\cref{symm-cycl-concrete} and use the Lie algebra $\h_k = \Lie
(T^{k+1})$ of the maximal real split torus $T^{k+1}$ of $\Aut M(
\mc{L}_c^{k+1} S^4)$ with its canonical factorization $T
\xrightarrow{\sim} \GG_m^{k+1}$, see \Cref{split-rank}. This Lie algebra constitutes the
infinitesimal symmetries corresponding to the toroidal symmetries of
 \cref{symm-cycl}--\cref{symm-cycl-concrete} and acts on the
Sullivan minimal model $ M( \mc{L}_c^{k+1} S^4)$ by derivations.
That is, we have a Lie algebra homomorphism:
\vspace{-2mm} 
\[
\h_k \longrightarrow \Der M(\mc{L}_c^k S^4),
\]
which comes from taking the differential of the action
\[
T^{k+1} \longrightarrow \Aut M(\mc{L}_c^k S^4).
\]

Under the action of the Lie algebra $\h_k$ on $M = M(\mc{L}_c^k S^4)$,
the weight decomposition of \cref{tor-symm} and Corollary
\ref{corr} becomes
\[
M = \bigoplus_{\alpha \in \PPP(\h_k)} M_{\alpha},
\]
where $\PPP(\h_k) \subseteq \h_k^* = \Hom_\RR (\h_k, \RR)$ is the
\emph{weight lattice}, the image of the character group 
$$
\mathfrak{X}(T) =
\Hom_\RR (T, \GG_m) \linebreak[0] \cong \ZZ^{k+1}
$$
under the differential map
\vspace{-3mm} 
\begin{align*}
  \mathfrak{X}(T) &\longrightarrow \h_k^*,\\
  \beta & \longmapsto d \beta.
\end{align*}
The Lie algebra $\h_k$ acts on each weight space $M_\alpha$ with the
weight $\alpha$:
\[
M_{\alpha} = \{m \in M \; | \; h \cdot m = \alpha(h) m \quad
\text{for all $h \in \h_k$}\}.
\]

As we will see soon, the Lie algebra $\h_k$ is an avatar of the Cartan
subalgebra of the Lie algebra of type $E_k$ of ``hidden'' symmetries
of the cyclic loop spaces of the four-sphere.

\begin{theorem}[Bases for the Lie algebra of symmetries and its dual]
\label{thm-bases}
$\,$

\vspace{-3mm} 
  \begin{enumerate}[{\bf (i)}]
   \setlength\itemsep{-1pt}
  \item
    The $(k+1)$-dimensional real abelian Lie algebra $\h_k = \Lie
    (T^{k+1}) \subseteq \Der M(\mc{L}_c^k S^4)$ of the maximal
    $\RR$-split torus $T = T^{k+1}$ of the algebraic group $\Aut
    M(\mc{L}_c^k S^4)$ has a canonical basis $\{h_0, h_1,
    \linebreak[0] \dots, \linebreak[1] h_k\}$.
\item
  The weights $\epsilon_0$ of $g_4$ and $\epsilon_i$ of $w_i$ for $1 \le i
  \le k$ give a canonical basis $\{\epsilon_0, \epsilon_1, \dots,
  \epsilon_k\}$ of the vector space $\h_k^*$. This is also a basis of
  the weight lattice $\PPP(\h_k) \subseteq \h_k^*$.
  \end{enumerate}
\end{theorem}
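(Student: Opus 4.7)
The plan is to exploit the canonical splitting $T^{k+1} \cong \GG_m^{k+1}$ produced by \Cref{split-rank}. This splitting arises inductively: the base case $k=0$ is provided by \Cref{k=0}, which canonically identifies $\Aut M(S^4) \cong \GG_m$, and each subsequent step uses the natural inclusion $\Aut M(\mc{L}_c^{k-1} S^4) \times \GG_m \hookrightarrow \Aut M(\mc{L}_c^k S^4)$ of \Cref{AutL} combined with \Cref{step} to conclude that the resulting $(k+1)$-dimensional split torus is maximal. This canonical product decomposition is what will let me name distinguished generators on both $\h_k$ and $\h_k^*$.

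For part (i), I would take the Lie algebra of each factor $\GG_m$ in $T^{k+1} \cong \GG_m^{k+1}$. Since $\Lie(\GG_m)$ carries a canonical generator (the derivation $t\,\partial_t$, equivalently the unique element whose image under the exponential map is the identity character of $\GG_m$), I denote by $h_0$ the generator coming from the $S^4$-factor and by $h_i$, for $1 \le i \le k$, the generators coming from the $k$ cyclification factors in the order prescribed by the iterated construction. This yields the desired canonical basis $\{h_0, h_1, \dots, h_k\}$ of $\h_k$.

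For part (ii), I first observe that for any $\RR$-split torus $T \cong \GG_m^{k+1}$, the differentiation map $\mathfrak{X}(T) \otimes_\ZZ \RR \to \h_k^*$ is an isomorphism which integrally identifies the weight lattice $\PPP(\h_k)$ with $\mathfrak{X}(T) \cong \ZZ^{k+1}$. Thus it suffices to exhibit $\epsilon_0, \epsilon_1, \dots, \epsilon_k$ as the standard basis of $\mathfrak{X}(T)$ under the canonical splitting. This is precisely what \Cref{corr} asserts when applied inductively, as verified explicitly in Examples \ref{k=1}--\ref{kgeq3}: the weight of $g_4 \in M(\mc{L}_c^k S^4)$ is the character $\epsilon_0$ of the first $\GG_m$-factor, tracked unchanged through each cyclification by the construction of \Cref{AutL}, while the weight of each newly introduced $w_i$ is the character $\epsilon_i$ of the $i$th cyclification factor, as dictated by the formulas \eqref{inverse}. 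The collection $\{\epsilon_0, \dots, \epsilon_k\}$ is then manifestly dual to $\{h_0, \dots, h_k\}$ and forms a $\ZZ$-basis of $\PPP(\h_k)$.

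The main subtlety, rather than an outright obstacle, will be verifying that the inductive embedding $\Aut M(\mc{L}_c^{k-1} S^4) \times \GG_m \hookrightarrow \Aut M(\mc{L}_c^k S^4)$ indeed realizes a \emph{canonical} splitting of $T^{k+1}$, as opposed to one depending on auxiliary choices. This canonicity ultimately rests on two ingredients: the explicit formulas \eqref{inverse}, which pin down the new $\GG_m$-action by its weight $-\epsilon_k$ on the class $w_k$ of the cyclifying circle, and the fact that $g_4$ is never itself a ``new'' generator introduced by cyclification and therefore retains its weight $\epsilon_0$ throughout the entire tower. Once this is in hand, the two bases are canonical mutually dual bases, and both statements of the theorem follow.
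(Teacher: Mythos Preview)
Your approach is essentially the same as the paper's: both exploit the canonical splitting $T^{k+1}\cong\GG_m^{k+1}$ from \Cref{split-rank}, take the standard generator of each $\Lie(\GG_m)$ factor to get $\{h_0,\dots,h_k\}$, and identify $\epsilon_0,\dots,\epsilon_k$ as the weights of $g_4,w_1,\dots,w_k$.

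There is one slip worth flagging. You assert that $\{\epsilon_0,\dots,\epsilon_k\}$ is ``manifestly dual'' to $\{h_0,\dots,h_k\}$, but this is not quite right: by \eqref{inverse} the $i$th $\GG_m$ acts on $w_i$ via $t\cdot w_i = t^{-1}w_i$, so if $h_i$ is the canonical generator $t\,\partial_t$ of that factor then $\epsilon_i(h_i)=-1$ for $i\ge 1$, not $+1$. The paper makes this explicit by differentiating the action and recording $\epsilon_0(h_0)=1$, $\epsilon_j(h_i)=-\delta_{ij}$ for $i,j\ge 1$, and zero otherwise. Your conclusion that $\{\epsilon_i\}$ is a basis of $\h_k^*$ and of $\PPP(\h_k)$ survives, since a sign flip on some basis vectors is harmless; but the precise pairing matters downstream when the paper defines the Lorentzian inner product via $h_i\mapsto\epsilon_i$ (which is then \emph{not} the duality isomorphism). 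Your own final paragraph hints at the sign when you write ``weight $-\epsilon_k$ on the class $w_k$'', which contradicts your earlier claim that the weight of $w_i$ is $\epsilon_i$; you should reconcile these.
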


\begin{proof}
(i) The factorization $T = \GG_m^{k+1}$ of the maximal split torus of
$\Aut M(\mc{L}_c^k S^4)$ was canonically defined from compatibility
with the iterated cyclic loop space structure (see 
\Cref{split-rank}):
\[
T = \{ (t_0, t_1, \dots, t_k) \; | \; t_i \in \GG_m \}\;,
\]
with $t_0$ acting on $g_4$ by $t_0 g_4$ and trivially on $w_1, \dots,
w_k$ and $t_i$ acting on $w_i$ by $t_i^{-1} w_i$ and trivially on the
other $w_j$'s and $g_4$. This factorization implies a canonical
factorization of the tangent space $\h_k$ at $\id \in T$: $\h_k =
\RR^{k+1}$. It determines a basis $\{h_0, h_1, \dots, h_k\}$, the
standard basis of $\RR^{k+1}$.

(ii) The identification of $T$ as $\GG_m^{k+1}$ in \Cref{split-rank} was derived iteratively from
Theorem \ref{step} as coming from the action of the torus on the generators
$g_4$, $w_1$, \dots, and $w_k$ of $M(\mc{L}_c^k S^4)$. Therefore, the
weights of these generators provide a natural set of weights
$\epsilon_0, \epsilon_1, \dots, \epsilon_k$.  Differentiating the action of
$T$ on these generators (see the previous paragraph), we obtain
\vspace{-3mm} 
\begin{align*}
h_0 \cdot g_4 &= g_4, \qquad h_0 \cdot w_i = 0 \qquad \;\, \text{ for $i =
  1, \dots, k$},\\
h_i \cdot g_4  &= 0, \qquad \; \; h_i \cdot w_j = -\delta_{ij} \quad \text{
  for $i, j = 1, \dots, k$}.
\end{align*}
This implies that
\vspace{-3mm}
\begin{align*}
\epsilon_0 (h_0)& = 1, \qquad \epsilon_i (h_0) = 0 \qquad \;\, \text{ for $i = 1,
  \dots, k$},\\
\epsilon_0 (h_i) & = 0, \qquad \epsilon_j (h_i) = -\delta_{ij} \quad \text{
  for $i,j = 1, \dots, k$}.
\end{align*}
From these equations, we conclude that $\{ \epsilon_0, \epsilon_1, \dots,
\epsilon_k\}$ form a basis of the dual vector space $\h_k^*$ as well as
the weight lattice $\PPP(\h_k) \subseteq \h_k^*$.
\end{proof}

From  \cref{symm-cycl-concrete}, we can compute the action of
the Lie algebra $\h_k$ on the other generators of the minimal DGCA
$M(\mc{L}_c^k S^4)$. For instance:

\begin{example}[Action of the Lie algebra on the cyclic loop space in type IIA]
For $k =1$, we have
\begin{gather*}
    h \cdot g_4 = \epsilon_0(h) g_4, \quad h \cdot g_7 = 2 \epsilon_0(h) g_7, \quad h \cdot
    sg_4 = (\epsilon_0(h) - \epsilon_1(h)) sg_4,\\
    h \cdot sg_7 = (2 \epsilon_0(h) - \epsilon_1(h)) sg_7, \quad h \cdot w =
  \epsilon_1(h) w,
\end{gather*}
where
\vspace{-3mm}
\begin{align*}
  \epsilon_0(h_0)& = 1, \qquad \epsilon_0 (h_1)= 0,\\
  \epsilon_1(h_0)& = 0, \qquad \epsilon_1 (h_1) = -1.
\end{align*}
\end{example}

\begin{cor}[The dual lattices and inner product]
$\,$

\vspace{-3mm} 
\begin{enumerate}[{\bf (i)}]
 \setlength\itemsep{-1pt}
\item
  The lattice $\h_k^\ZZ := \ZZ h_0 \oplus \ZZ h_1 \oplus \dots \oplus
  \ZZ h_k \subset \h_k$ is the dual of the weight lattice
  $\PPP(\h_k) = \ZZ \epsilon_0 \oplus \dots \oplus \ZZ \epsilon_k \subset \h_k^*$.
\item
  The nondegenerate bilinear form $\h_k \otimes \h_k \to \RR$
  determined by the isomorphism
  \vspace{-2mm}
  \begin{align*}
    \h_k &\to \h_k^*\\[-1pt]
    h_i & \mapsto \epsilon_i
  \end{align*}
  
  \vspace{-3mm}
\noindent  provides the vector space $\h_k$ with a canonical Lorentzian inner
  product $(-,-)$. This inner product satisfies the formulas:
  \vspace{-1mm}
  \[
  (h_0, h_0) = 1, \qquad (h_i, h_j) = - \delta_{ij} \quad \text{ for \; 
    $i \ge 0$, $j \ge 1$.}
  \]
 \item
The inner product induced on the dual space $\h_k^*$ is given by the formulas:
 \vspace{-1mm}
  \[
  (\epsilon_0, \epsilon_0) = 1, \qquad (\epsilon_i, \epsilon_j) = - \delta_{ij} \quad \text{ for \; 
    $i \ge 0$, $j \ge 1$.}
  \]
  \end{enumerate}
\end{cor}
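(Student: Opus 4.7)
The plan is to reduce everything to the evaluation matrix $\bigl(\epsilon_i(h_j)\bigr)_{0\le i,j\le k}$ already recorded in the proof of Theorem~\ref{thm-bases}: $\epsilon_0(h_0)=1$, $\epsilon_0(h_j)=\epsilon_i(h_0)=0$ for $i,j\ge 1$, and $\epsilon_j(h_i)=-\delta_{ij}$ for $i,j\ge 1$. After that, everything will be a direct unpacking of the definitions of dual lattice and induced inner product in these bases.

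For Part (i), I will first note that every entry of the evaluation matrix lies in $\ZZ$, so each $h_j$ pairs integrally with each $\epsilon_i$ and hence $\h_k^\ZZ \subseteq \PPP(\h_k)^\vee := \{h \in \h_k \mid \alpha(h)\in\ZZ \text{ for every } \alpha \in \PPP(\h_k)\}$. For the reverse inclusion, given $h = \sum_j c_j h_j$ with $\alpha(h)\in\ZZ$ for all $\alpha\in\PPP(\h_k)$, I will apply $h$ against each $\epsilon_i$: the evaluation matrix solves the resulting linear system as $c_0 = \epsilon_0(h)$ and $c_j = -\epsilon_j(h)$ for $j \ge 1$, forcing each $c_j \in \ZZ$ and hence $h \in \h_k^\ZZ$.

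For Part (ii), the linear map $h_i \mapsto \epsilon_i$ is an isomorphism because both collections are bases, and by the very definition of the bilinear form induced on $\h_k$ by an identification $\h_k \xrightarrow{\sim} \h_k^*$ one has $(h_i, h_j) = \epsilon_i(h_j)$. Substituting the evaluation matrix produces the Gram matrix $\mathrm{diag}(1,-1,\dots,-1)$, from which symmetry, nondegeneracy, Lorentzian signature $(1,k)$, and the stated formulas all follow at once. Part (iii) is then immediate: transporting the form on $\h_k$ through the isometry $h_i \leftrightarrow \epsilon_i$ gives $(\epsilon_i, \epsilon_j) = (h_i, h_j)$, yielding the same Gram matrix on $\h_k^*$.

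I expect no genuine obstacle to this proof. The one point that does merit a sentence of justification is that the chosen identification $h_i \leftrightarrow \epsilon_i$ actually produces a \emph{symmetric} bilinear form, since an arbitrary isomorphism between dual spaces need not. This works here precisely because the evaluation matrix in our canonical bases is already diagonal, which in turn reflects the canonical compatibility in Corollary~\ref{split-rank} of the splitting of the maximal $\RR$-split torus with the iterated cyclification structure.
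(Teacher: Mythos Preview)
Your proof is correct and is exactly the natural unpacking of what the paper treats as an immediate corollary of Theorem~\ref{thm-bases}: the paper states this result without proof, relying implicitly on the evaluation formulas $\epsilon_0(h_0)=1$, $\epsilon_i(h_0)=\epsilon_0(h_i)=0$, $\epsilon_j(h_i)=-\delta_{ij}$ established there, and your argument simply makes explicit the linear algebra that those formulas encode. Your remark that symmetry of the resulting bilinear form is not automatic for an arbitrary identification $\h_k\cong\h_k^*$ but follows here because the evaluation matrix is diagonal is a worthwhile clarification beyond what the paper says.
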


\subsection{The ``anticanonical class''}
\label{can-section}

In this section, we identify a distinguished element $-K_k \in
\h_k^\ZZ$, analogous to the anticanonical class $-\cK_k$ (see the Introduction,
\cref{intro}, and \cref{dPtriple}) of the del Pezzo surface $\BB_k$. Recall, that the
anticanonical class $-\cK_k$ of the del Pezzo surface ``acts'' on the
Picard group $\Pic (\BB_k) \cong H^2 (\BB_k;
\Z)$ by degree:
\begin{equation}
  \label{degree}
\deg \cD := - \cK_k \cdot \cD, \qquad \cD \in \Pic (\BB_k).
\end{equation}
Here the ``action'' is understood as the intersection product $\Pic
(\BB_k) \otimes \Pic (\BB_k) \to \Z$. The \emph{degree of a divisor}
$\cD \in \Pic (\BB_k)$ is defined using the anticanonical morphism
$f: \BB_k \to \CC \PP^d$, and 
$$
d = h^0 (\BB_k, -\cK_k) - 1 = (-\cK_k) \linebreak[0] \cdot (-\cK_k) =
9 - k
$$ 
is known as the \emph{degree of the del Pezzo surface $\BB_k$},
whence $-\cK_k$ is the pullback $f^* \mc{H}$ of the hyperplane class $\mc{H} \in  \Pic ( \CC \PP^d)$ and formula \eqref{degree} makes sense; see, e.g., \cite{Dolgachev}
and also \cref{dPtriple}.

\medskip
In the case of cyclic loop spaces, we have been dealing with one
notion of degree, in the sense of $\Z$-grading of the Sullivan minimal
model $M(\mc{L}_c^k S^4)$. There is another one, natural for the
Quillen minimal model. The significance of that other notion of degree
for us is coming from the fact that it corresponds to the degree of
the C-fields, i.e., the potentials $C_3$ and $C_6$ of the basic
fields $G_4$ and $G_7$, see \eqref{C-fields},
and thereby to the dimension of the corresponding
branes, the M2- and M5-branes, respectively. This other notion of degree
just differs from the degree we have been using on the generators of
the Sullivan minimal model by one, but has a homotopy-theoretic
origin, as we  now explain.

\medskip 
We have seen the Quillen model in \cref{Sec-Quillen} and at the end of \cref{cyclification}. The maximal split torus $T^{k+1}$ and its Lie algebra $\h_k$ act on the
Quillen minimal model $Q(\mc{L}_c^k S^4)$ with the same weights as on
the generators of the Sullivan minimal model $M = M(\mc{L}_c^k
S^4)$. Indeed, the weights on the space $V = M^+/(M^+)^2$ of
generators, its dual $V^*$ and its degree shift $V^*[1] = Q(\mc{L}_c^k
S^4)$ will just be the same as those on $V$. Let us denote the
elements of the basis of $V^*[1] = Q(\mc{L}_c^k S^4)$ dual to the
basis 
$$
\{g_4, \, g_7, \, w_1, \, s_1 g_4, \, s_1 g_7, \, s_1 w_1, \dots, \, w_k\} 
\qquad \text{by}
\qquad 
e_3, \, e_6, \, x_1,  \, s_1 e_3, \, s_1 e_6, \, s_1 x_1, \dots, \, x_k,
$$
respectively. From the fact that $S(Q(Z)[-1]^*), d)$ is the
Chevalley-Eilenberg cochain complex of the graded Lie algebra $Q(Z)$,
one can deduce that the subspace dual to $\ker d \cap V[1] = \ker d
\cap Q(Z)^*$ generates the Quillen minimal model as a graded Lie
algebra.

\medskip 
A remarkable fact is that the \emph{degree operator}
\begin{equation}
\label{degree-operator}
x \mapsto \abs{x} \cdot x, \qquad x \in Q(\mc{L}_c^k S^4),
\end{equation}
in the Quillen minimal model singles out a distinguished element of the Lie
algebra $\h_k$. Here is a more precise statement.

\begin{theorem}[Degree in the $k$-fold cyclic loop space of $S^4$]
\label{thm-deg}
There is a unique element of the Lie algebra $ \h_k$, namely,
\begin{equation}
  \label{elt}
-K_k := 3h_0 - h_1 - \dots - h_k,
\end{equation}
which acts on the Quillen minimal model $ Q(\mc{L}_c^k S^4)$ as the
degree operator:
\[
-K_k \cdot x = \abs{x} \cdot x.
\]
\end{theorem}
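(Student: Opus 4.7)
The strategy is direct verification on a basis, made possible by the observation (from the setup preceding the theorem) that $\h_k$ acts on each Quillen basis vector dual to a Sullivan generator $v$ with the \emph{same} weight $\alpha$ as on $v$, while its Quillen degree is $\abs{v} - 1$. The theorem therefore reduces to establishing the scalar identity
\[
\alpha(-K_k) = \abs{v} - 1
\]
for every Sullivan generator $v$ of weight $\alpha$, after which uniqueness will be a linear-algebra triviality.

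First, I would evaluate $-K_k = 3h_0 - h_1 - \cdots - h_k$ on the basis $\{\epsilon_0, \epsilon_1, \dots, \epsilon_k\}$ of $\h_k^*$ from Theorem \ref{thm-bases}, giving $\epsilon_0(-K_k) = 3$ and $\epsilon_i(-K_k) = 1$ for $1 \le i \le k$; these already match the Quillen degrees of the duals $e_3$ of $g_4$ and $x_i$ of $w_i$. Next, I would run through the four families of Sullivan generators catalogued in Example \ref{examplek}, with weights from Example \ref{kgeq3}:
\[
\begin{array}{l|c|c|c}
\text{generator } v & \abs{v} & \text{weight } \alpha & \alpha(-K_k) \\ \hline
s_{i_l}\cdots s_{i_1} g_4 & 4-l & \epsilon_0 - \sum_j \epsilon_{i_j} & 3-l \\
s_{i_l}\cdots s_{i_1} g_7 & 7-l & 2\epsilon_0 - \sum_j \epsilon_{i_j} & 6-l \\
w_i & 2 & \epsilon_i & 1 \\
s_j w_i & 1 & \epsilon_i - \epsilon_j & 0
\end{array}
\]
In each row the last column equals $\abs{v} - 1$, so $-K_k$ acts as the degree operator on every basis vector of $Q(\mc{L}_c^k S^4)$, hence on the whole Quillen model.

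For uniqueness, any $h \in \h_k$ acting by the degree operator must in particular agree with $-K_k$ on the Quillen duals $e_3$ and $x_1, \dots, x_k$ of $g_4, w_1, \dots, w_k$, so $\epsilon_i(h) = \epsilon_i(-K_k)$ for $0 \le i \le k$; since $\{\epsilon_0, \dots, \epsilon_k\}$ is a basis of $\h_k^*$ by Theorem \ref{thm-bases}(ii), this forces $h = -K_k$. The only real obstacle is bookkeeping: one must trust the uniform census of generators in Example \ref{examplek} (with its degree-one truncations $s_{i}w_{i-1}$ and triple-cap on desuspensions of $g_4$) and not miss a family. The striking content of the theorem, mirroring the intersection identity \eqref{AntiK} for the anticanonical class of $\BB_k$, is that a \emph{single} linear form on $\h_k$ records the correct degree simultaneously on all four families, which is precisely what justifies the name ``anticanonical class'' for $-K_k$ on the topological side of the triality.
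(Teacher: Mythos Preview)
Your proof is correct and follows essentially the same approach as the paper's: both verify directly on the generators that $-K_k$ acts by the Quillen degree, and both establish uniqueness by noting that the values on the duals of $g_4, w_1, \dots, w_k$ already determine an element of $\h_k$ since $\epsilon_0, \dots, \epsilon_k$ is a basis. The only cosmetic difference is that the paper handles the desuspended generators recursively (observing once that each $s_i$ lowers both the degree and the value $\alpha(-K_k)$ by $1$), whereas you tabulate the four families from Examples~\ref{examplek} and~\ref{kgeq3} explicitly; both are the same computation.
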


\begin{proof}
Indeed, the operator $3h_0$ acts on $g_4$ and thereby $e_3$ with
weight 3 and on $x_i$, $i = 1, \dots, k$, with weight zero:
\begin{equation*}
3h_0 \cdot e_3 = 3 e_3, \qquad 3 h_0 \cdot x_i = 0,
\end{equation*}
whereas the operator $-h_i$ acts on $g_4$, $e_3$, and $x_j$ by zero,
except for $x_i$, on which it acts by 1:
\[
- h_i \cdot x_i = x_i.
\]
Likewise, the element \eqref{elt} acts on $e_6$ via multiplication by
its degree, which is 6. Also the degree of $s_ix$ for $x \in Q(Z)$ will be
one less than the degree of $x$ and the weight of $s_ix$ will be
$\alpha - \epsilon_i$ for $x$ of weight $\alpha$. Since $\epsilon_i(3h_0) =
0$ and $\epsilon_i (-h_j) = \delta_{ij}$, the element \eqref{elt} will
act on $s_i x$ by its degree, provided we know that it acts on $x$ by
the degree of $x$.
This way we get a complete matching between the action of $3 h_0 - h_1
- \dots - h_k$ and the degree operator on the Quillen minimal model
$Q(\mc{L}_c^k S^4)$.

The uniqueness of an element of $\h_k$ which acts on $Q(\mc{L}_c^k
S^4)$ by degree comes from the fact that an arbitrary element $a_0 h_0
+ a_1 h_1 + \dots + a_k h_k \in \h_k$ acts on $e_3$ by $a_0$ and on
each $x_i$ by $-a_i$, which forces $a_0$ to be 3 and each $a_i$ to be
$-1$, because the degree of $e_3$ is 3 and the degree of $x_i$ is 1.
\end{proof}

Recall from \eqref{degree} that the anticanonical class $- \cK_k$ of the del Pezzo
surface $\BB_k$ acts on the Picard group via intersection pairing by degree.  Thus, it makes
all sense to use the element
\[
K_k = -3h_0 + h_1 + \dots + h_k \; \in \; \h_k
\]
as a distinguished element, the analogue of the canonical class. 
Extending the analogy with del Pezzo surfaces, 
\begin{defn}
[Degree of cyclification]
We define the \emph{degree of the cyclic loop space $\mc{L}_c^k S^4$ of the four-sphere} as
\begin{equation}
\label{deg-L}
\deg \mc{L}_c^k S^4 = (-K_k, -K_k) = 9 - k.
\end{equation}
\end{defn}

\subsection{The $E_k$ root system and its Weyl group}
\label{Sec-Ek} 

We now explain the role of the Weyl group and how we obtain the exceptional root data from 
$\mc{L}_c^k S^4$.

\begin{theorem}[Exceptional root data from cyclic loop spaces of the 4-sphere]
\label{rootdata}
$\,$

\noindent {\bf (i)}  For each $k \ge 0$,
  the data 
   $$
  \big(\h_k^*, \linebreak[1] \{\epsilon_o, \epsilon_1, \linebreak[0] \dots,
  \epsilon_k\}, (-,-), K_k^*\big)
  $$ 
  associated to the cyclic loop space
  $\mc{L}^{k}_c S^4$ and its Sullivan minimal model $M(\mc{L}^{k}_c
  S^4)$ consists of
  \vspace{-2mm} 
  \begin{itemize}
   \setlength\itemsep{-1pt}
  \item[{\bf (a)}] a real vector space $\h_k^*$ with a basis $\{\epsilon_0, \epsilon_1, \dots,
    \epsilon_k\}$, which generates a lattice $\PPP(\h_k) \subset \h_k^*$;
  \item[{\bf (b)}] a symmetric bilinear form $\h_k^* \otimes \h_k^* \to \R$
    given by
    \vspace{-2mm}
    \[(\epsilon_0, \epsilon_0) = 1, \qquad (\epsilon_i, \epsilon_j) = -\delta_{ij}, \qquad i > 0, j \ge 0;
    \]
    \vspace{-8mm} 
    \item[{\bf (c)}] a distinguished element $K_k^* = -3 \epsilon_0 + \epsilon_1 + \dots +
      \epsilon_k$.
        \end{itemize}
        
\vspace{-2mm}         
\noindent {\bf (ii)} 
This data replicates the data
\vspace{-3mm} 
$$\big(H^2(\BB_k; \R), \{\cH, \cE_1, \dots,
\cE_k\}, (-, -), \cK_k \big)$$ 

\vspace{-1mm} 
\noindent 
determined by the rational surface $\BB_k$,
considered as the blowup of $\CC \PP^2$ at $k$ points; see
\cref{dPtriple} in the del Pezzo case, when $k \le 8$. For $k \le 8$, the data produces the root system
\vspace{-2mm}         
\begin{equation}
\label{roots}
R_k := \big\{ \alpha \in \PPP(\h_k) \; | \; (\alpha, K_k^*) = 0 , (\alpha, \alpha) = -2 \big\}
\subset (K_k^*)^\perp \subset \h_k^*
\end{equation}

\vspace{-2mm}         
\noindent
of type \footnote{True/genuine $E_k$ for $k =6, 7$, and 8, and using the conventions of \Cref{table1}
for $0 \le k \le 5$.} $E_k$ and the Weyl group $W(E_k)$, generated by the
reflections in the hyperplanes orthogonal to the
roots $r \in R_k$, now in the context of cyclic loop spaces of $S^4$.
\end{theorem}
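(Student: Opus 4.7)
The proof essentially has two tasks: a collation task (assembling the ingredients (a), (b), (c)) and a comparison task (matching the resulting triple with the del Pezzo one and extracting $E_k$). I would organize it as follows.

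First, I would simply read off the three pieces from the preceding sections. Item (a), the real vector space $\h_k^*$ with basis $\{\epsilon_0,\epsilon_1,\dots,\epsilon_k\}$, together with the lattice $\PPP(\h_k)$, is exactly Theorem \ref{thm-bases}(ii). Item (b), the bilinear form, is precisely the form induced on the dual of $\h_k$ by the corollary following Theorem \ref{thm-bases}, which gives $(\epsilon_0,\epsilon_0)=1$ and $(\epsilon_i,\epsilon_j)=-\delta_{ij}$ for $i>0$. Item (c), the distinguished element $K_k^*=-3\epsilon_0+\epsilon_1+\dots+\epsilon_k$, is the image in $\h_k^*$ of the element $K_k\in\h_k$ distinguished in Theorem \ref{thm-deg} via the self-duality $h_i\leftrightarrow\epsilon_i$. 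No computation is needed here; it is a repackaging.

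Next, for part (ii), I would write down the obvious $\RR$-linear isomorphism
\[
\Phi:H^2(\BB_k;\RR)\xrightarrow{\;\sim\;}\h_k^*,\qquad \cH\longmapsto\epsilon_0,\quad \cE_i\longmapsto\epsilon_i\ \ (i=1,\dots,k),
\]
and verify by inspection that it sends the intersection form \eqref{HE-inter} to the bilinear form in (b), the integral sublattice $H^2(\BB_k;\ZZ)$ to the weight sublattice $\PPP(\h_k)$, and the canonical class $\cK_k=-3\cH+\cE_1+\dots+\cE_k$ from \eqref{k-canclass} to $K_k^*$. This is where the precise normalization chosen in \cref{symm-cycl-concrete} (with the ambiguity in the sign of $\epsilon_0$ fixed by convention, see \cref{k=0}) is what makes the anticanonical class come out with the correct sign; this is the only place the normalization matters.

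For the root system assertion, I would take the definition \eqref{roots} and transport it across $\Phi$ to the classical definition
\[
R_k=\{\alpha\in\Pic(\BB_k)\ |\ \alpha\cdot\cK_k=0,\ \alpha\cdot\alpha=-2\}\subset \cK_k^\perp
\]
inside $H^2(\BB_k;\RR)$. At this point I would invoke the classical theorem of Manin \cite{Man} (see also \cite{Demazure}) which asserts that for $0\le k\le 8$ this set of vectors in the Lorentzian lattice $\ZZ^{1,k}$ with the prescribed anticanonical class forms a root system of type $E_k$ in the sense of \Cref{table1}, and that the Weyl group $W(E_k)$ is generated by the reflections $s_\alpha:v\mapsto v+(v,\alpha)\alpha$ in the hyperplanes orthogonal to $\alpha\in R_k$ (note the sign is correct because $(\alpha,\alpha)=-2$). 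Transporting back through $\Phi^{-1}$ then interprets the same root system and Weyl group intrinsically in terms of the weight lattice of the maximal split torus acting on $M(\mc{L}_c^k S^4)$.

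The proof is almost entirely bookkeeping, so I do not expect a real obstacle: the one point deserving care is the sign/normalization consistency between the degree operator (which fixes $K_k$, not $-K_k$, by Theorem \ref{thm-deg}) and the conventions under which $-\cK_k$ is ample on $\BB_k$. Beyond that, everything follows either from the cited Manin theorem or from the constructions already completed in \cref{symm-cycl-concrete} and \cref{can-section}.
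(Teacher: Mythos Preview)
Your proposal is correct and matches the paper's approach: the theorem is stated as a summary of the constructions already carried out in \cref{thm-bases}, its corollary, and \cref{thm-deg}, together with the comparison to the del Pezzo data recalled in \cref{dPtriple} and the classical identification of the $E_k$ root system due to Manin. The paper in fact gives no separate proof for this theorem, treating it as an assembly of the preceding results, which is precisely what you have written out.
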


This result has an independent interest, apart from
Mysterious Duality, as it uncovers a new symmetry pattern for the
series $\mc{L}^{k}_c S^4$, $0 \le k \le 8$, of cyclic loop spaces of
the 4-sphere. This should have a number of topological consequences
shedding new light on these spaces. For instance, one may wonder: what
is the analogue of an exceptional curve on a del Pezzo surface? What corresponds to the famous statement about the 27 lines on the cubic surface $\BB_6$
on the topological side, the six-fold cyclification $\mc{L}^{6}_c S^4$ of $S^4$?
We investigate these intriguing questions in
\cref{Sec-27}.

\begin{remark}[Why $k \le 8$ vs. $k \geq 9$]
\label{k-le-8}
  The construction of the data $\big(\h_k^*, \{\epsilon_0, \dots, \epsilon_k\}, (-,-), K_k^*\big)$ of Theorem \ref{rootdata} above extends beyond $k = 8$ verbatim. However, the identification of the root system for $k > 8$ needs to be treated with care. For $0 \le k \le 8$, the degree $\deg (\mc{L}_c^k S^4) = (K_k, K_k) = (K_k^*, K_k ^*) = 9-k$ of the cyclic loop space is positive, just like the degree of the del Pezzo surface $\BB_k$. From simple linear algebra of Lorentzian inner products, we can see that $(K_k, K_k) > 0$ implies that the inner product induced on the subspace
  $K_k^\perp = \{ x \in \h_k \; | \; (x, K_k) = 0 \}$ of
  $\h_k$ by the Lorentzian inner product $(-,-)$ is negative-definite. (If we switch the sign and use $-(-,-)$, it would be a more familiar positive-definite inner product). This also implies that the root system $R_k$ is finite; see \cite{Man}. For $k \ge 9$, the inner product loses its negative-definiteness, and the root
  system $R_k$ becomes infinite and can be identified as the set of \emph{real roots} of a
  more general Kac-Moody algebra. In fact, for $k = 9$, the subspace $K_k^\perp$ is negative semi-definite of nullity 1. For $k > 9$, the orthogonal complement $K_k^\perp$ gets a Lorentzian inner product. See the discussion of the $k \ge 9$ cases in \Cref{Sec-KM}.
\end{remark}

\paragraph{Weyl group as symmetry of symmetries.}
For each root $\alpha_{}$, an element of the set $R_k$ (see expression \eqref{roots})  
define a reflection
\vspace{-1mm} 
\[
\sigma_\alpha: \beta \;\; \longmapsto \;\; \beta - 2 \frac{(\beta,\alpha)}{(\alpha,\alpha)} \alpha = \beta + (\beta,\alpha) \alpha, \qquad \beta \in \h_k^*,
\]

\vspace{-1mm} 
\noindent 
of the Lorentzian space $\h_k^*$. The reflections $\{\sigma_\alpha \; | \; \alpha \in R_k \}$ generate a group, 
known to be the Weyl group $W(E_k)$ of the root system $E_k$. It is also known that $W(E_k)$ is the group of all linear isometries of $\PPP(\h_k)$ which preserve the element $K_k^*$ \cite{Man}.
Thus, the Weyl group, being an isometry group of (the dual of) the abelian Lie algebra $\h_k$ of infinitesimal 
symmetries of $M(\mc{L}_c^k S^4)$, is a ``second derived'' object with respect to
$\mc{L}_c^k S^4$: $W(E_k)$ is the group of ``symmetries of symmetries'' of $\mc{L}_c^k S^4$.
This is typical for the role of Weyl groups in Lie theory: a Lie group is usually a group of 
symmetries of a certain mathematical object, the Cartan subalgebra is the maximal abelian Lie algebra of infinitesimal symmetries of that object. The Weyl group is a group of symmetries of the Cartan subalgebra.

\paragraph{The moduli space of $k$-fold cyclifications of $S^4$.}
We can interpret an element $\omega$ of the linear dual space $\h_k^* \cong \R^{k+1}$ of the abelian Lie algebra $\h_k$ of infinitesimal symmetries of $M(\mc{L}_c^k S^4)$ as some sort of an extra, geometric ingredient complementing the purely topological data carried by the real homotopy type of the $k$-fold cyclification $\mc{L}_c^k S^4$. Indeed, an element $\omega \in \h_k^*$ is a {\it weight}, and as such, it tells us which ``spectral parameters'' we might want to assign to the basic infinitesimal symmetries $h_0, h_1, \dots, h_k$.

\medskip 
For instance, recall from \Cref{natural0} that $h_0$ comes from the action of the real 1-torus $\R^\times$ by automorphisms of the real homotopy type of $S^4$ and the resulting action of $\R^\times$ on $\mc{L}_c^k S^4$. That action originates ultimately from the folding self-maps of $S^4$. The value $\omega(h_0) \in \R$ tells us how much we shall value the effect of the folding self-maps of $S^4$. In this sense, $\omega(h_0)$ is akin to the size of $S^4$, such as its radius $R_0$, or rather the logarithm $\log R_0$ thereof, since $\omega(h_0)$ is not necessarily positive. This value $\omega(h_0)$ is analogous to the logarithmic Planck scale $\log \ell_p$ in the 11-dimensional supergravity and the generalized K\"{a}hler volume $\omega(H) = \int_H \omega$ of the line $H = \CC \PP^1$ in $\CC \PP^2$ and its image in $\BB_k$ in the del Pezzo story; cf.\ \cite[\S 3.1]{INV}.

\medskip 
Similarly, as per \Cref{natural1}, the element $h_i$ for each $i$, $1 \le i \le k$, comes from the action of $\R^\times$ by the folding self-maps of the $i$th source circle of the cyclic loop space $\mc{L}_c^k S^4$. Assigning $h_i$ a real value $\omega(h_i)$ tells us how much we shall value the effect of the folding self-maps of the $i$th source circle of $\mc{L}_c^k S^4$. In this way, $\omega(h_i)$ is analogous to the logarithm $\log R_i$ of the radius $R_i$ of the $i$th source circle and the $i$th compactification circle in M-theory wrapped on $T^k = (S^1)^k$, as well as the generalized K\"{a}hler volume $\omega(\cE_i)$ of the exceptional divisor $\cE_i$ in $\BB_k$ in the del Pezzo story; cf.\ \cite[\S 3.1]{INV} again.

\medskip 
In this sense the choice of a weight $\omega \in \h_k^*$ adds a certain ingredient of \emph{metric} flavor, missing in the real homotopy model $M(\mc{L}_c^k S^4)$ of $\mc{L}_c^k S^4$. For example, an arbitrary weight $\omega \in \h_k^*$ will not be a real homotopy invariant of $\mc{L}_c^k S^4$. The values $\omega(h_i)$, $ i = 0, 1, \dots, k$, that is to say, the logarithmic radii of the target sphere $S^4$ and the source circles $S^1$, may be thought of as the coordinates of the weight $\omega$ in the space of all weights.

\medskip 
The Weyl group $W(E_k)$, being the group of symmetries of the $E_k$ data $(\PPP(\h_k), (-,-), K_k^*)$, acts on the vector space $\h_k^*$. 
Thus, 
it makes sense to identify the weights $\omega$ brought together by this action. 
We call the corresponding quotient orbifold the \emph{moduli space $\mc{M}_k$ of $k$-fold cyclifications of $S^4$}:
\vspace{-1mm}
\[
\mc{M}_k = \mc{M}_k (\mc{L}^k S^4)
:= [\h_k^* / W(E_k)].
\]

\vspace{-1mm}
\noindent This is a stacky quotient $[\h_k^* / W(E_k)]$, which is different from the naive, topological quotient
$\h_k^* / W(E_k)$. Another reincarnation of the quotient orbifold $[\h_k^* / W(E_k)]$ is the familiar homotopy 
quotient $\h_k^* \dslash W(E_k)$, which may be realized via the Borel construction, cf.\ \eqref{hquotient}, but 
in this context, the orbifold viewpoint would be more common. The $(k+1)$-dimensional topological quotient 
$\h_k^* / W(E_k)$ contains the $k$-dimensional quotient $K_k^\perp/ W(E_k)$, which for $k \le 8$ may be 
identified with the closure 
of a \emph{Weyl chamber}
for the Weyl group action in the Euclidean space $K_k^\perp = \{\omega \in \h_k^* \; | \; \omega (K_k) = 0\} 
\subset \h_k^*$, cf.\ \cite[Prop. 8.29]{Hall15}.


\begin{remark}[Further interpretation]
\label{Interpretation}
We highlight the following:
\vspace{-1mm} 
\item {\bf (i)} Identifying the weights $\omega$ under a permutation of the radii of the circles entering
$\mc{L}_c^k S^4$ is similar to identifying punctured Riemann surfaces under a permutation of the punctures,
if we wish to consider the moduli space of Riemann surfaces with unlabeled punctures.
The identification of weights under the Weyl group action is also analogous to identifying the Planck scale 
and the sequence of radii of the circle components in compactified M-theory under U-duality. 
In the del Pezzo story, one also identifies generalized K\"{a}hler classes on a del Pezzo surface under
the action of the Weyl group, see \cite[\S 3.1]{INV}, and the fundamental Weyl chamber in the Picard 
group of a del Pezzo surface plays a prominent role in studying Cremona isometries \cite[\S 8.2.8]{Dolgachev}.

\vspace{-1mm}
\item {\bf (iii)} 
 Ultimately, physical configurations have to be U-duality invariant, so 
it makes sense to mod out by that symmetry. 
The Weyl group $W= W(E_k)$ is traditionally taken as a subgroup 
of the discrete $\mathbb{Z}$-form of the U-duality group $E_k(\RR)$;
see the discussion before \eqref{abelianization}.
And since $W$ already contains a substantial part of that symmetry, 
modding out by $W$ makes sense and is also `close' to the 
ultimate moduli space, in the sense of \eqref{abelianization}.
We will consider this in more detail in \cite{SV:M-theory}.	
\end{remark}


\subsection{The $E_k$ symmetry of type IIB}

Given that the real homotopy type $IIB$ was worked out so as to match with type $IIA$ via T-duality \eqref{T-duality} and \eqref{T-duality-sub},
we would like to choose a compatible trivialization $T^B \cong \GG_m \times \GG_m$ of the maximal $\RR$-split 
torus of $\Aut M(IIB)$ and a compatible basis of weights of the action of $T^B$ on $M(IIB)$.

\medskip 
We have been identifying the weights corresponding to the maximal $\RR$-split tori actions on $M(\mc{L}_c^k S^4)$ in \Cref{split-rank} starting from $k = 0$ and moving up to higher $k$ with the use of \Cref{step}, which 
related the toroidal symmetries of $M(\mc{L}_c^{k+1} S^4)$ to those of $M(\mc{L}_c^k S^4)$. The real homotopy type 
$IIB$ is connected to this sequence by a single cyclification: $M(\mc{L}_c IIB)$ is closely related to
$M(\mc{L}_c IIA) = M(\mc{L}_c^2 S^4)$, expressing T-duality; see \Cref{T-duality}. Thus, if $T^B_c$ is the maximal $\RR$-split torus of $\Aut M(\mc{L}_c IIB)$, 
then it maps naturally
to the maximal split torus $T^B$ of $\Aut M(IIB)$, because $\RR[w]$, where $w$ is as in the extension
\eqref{ext:cyc}, is $T^B_c$-invariant. Since the differential ideal $(h_7, \omega_7, dh_7, d\omega_7)$ 
of $M(\mc{L}_c IIB)$ is also $T^B_c$-invariant, $T^B_c$ acts on the quotient
$M(\mc{L}_c IIB)/(h_7, \omega_7, dh_7, d\omega_7)$ in \eqref{T-duality}. 
The maximal $\RR$-split torus $T^A_c$ of $\Aut M(\mc{L}_c IIA)$ also keeps the differential ideal $(h_7, d h_7)$ 
invariant and maps naturally to $T^B_c$. This way, we have a map 
$$
T^A_c \longrightarrow T^B_c \longrightarrow T^B.
$$ 
One can reverse these maps and get a natural map $T^B \to T^A_c$. This also implies that the weights of $T^A_c$
pull back to the weights of $T^B$.

\medskip 
Now, let us use these maps to create distinguished bases of the Lie algebra $\h_B$ of the maximal $\RR$-split 
torus $T^B$ and the dual space vector space $\h_B^*$ of weights.
We will use the weights of the elements $h_3$ and $\omega_3$ as fundamental 
(corresponding to spacetime fields $H_3$ and $F_3$ associated with the fundamental 
string and its S-dual, the D1-brane). With that, we write
\begin{equation*}
t \cdot h_3 = t^{\gamma_{\, 1}} h_3, \qquad t \cdot \omega_3 = t^{\gamma_{\, 2}} \omega_3,
\end{equation*}
where
\[
\gamma_{\, 1} := \epsilon_0-\epsilon_1, \quad \text{and} \quad \gamma_{\, 2} := \epsilon_0 -\epsilon_2,
\]
given that
$$
h_3  \mapsto s_1 g_4, \quad \text{and} \quad  
\omega_3  \mapsto s_2 g_4
$$
in the correspondence between $M(\mc{L}_c IIB)$ and $M(\mc{L}_c IIA) = M(\mc{L}_c^2 S^4)$. Here $\epsilon_0$, $\epsilon_1$, and $\epsilon_2$ are the generating weights of $\h_2^*$, the Lie algebra of the maximal $\RR$-split torus $T^A_c$ of $\Aut M(\mc{L}_c IIA) = \Aut M(\mc{L}_c^2 S^4)$. We view these weights as being pulled back to $T^B$ (and $\h_B$) via the homomorphism $T^B \to T^A_c$. Since $\gamma_{\, 1} = \beta_0$ and $\gamma_{\, 2} = \beta_0 + \beta_1$ by Equations \eqref{tIIB}, $\gamma_{\, 1}$ and $\gamma_{\, 2}$ form a basis of $\h_B^*$ and, moreover, of the weight lattice $\PPP(\h_B) := \ZZ \gamma_1 \oplus \ZZ \gamma_2 \subseteq \h_B^*$.

\medskip 
We will use the inner product induced on $\h_{B}^*$ from $\h_2^*$:
$$
(\gamma_{\, 1}, \gamma_{\, 1})  := (\epsilon_0 - \epsilon_1, \epsilon_0 - \epsilon_1) = 0,\quad 
(\gamma_{\, 2}, \gamma_{\, 2})  := (\epsilon_0 - \epsilon_2, \epsilon_0 - \epsilon_2) = 0, \quad 
(\gamma_{\, 1}, \gamma_{\, 2})  :=(\epsilon_0 - \epsilon_1, \epsilon_0 - \epsilon_2) = 1\;.
$$
The dual basis of $\h_{B}$ will be given as $\{\mathfrak{l}_1 := h_0 -h_1, \mathfrak{l}_2 := h_0 - h_2\}$, where we used the images of the generators $h_0, h_1, h_2$ of $\h_2$ under the linear map $\h_2 \to \h_B$ linearizing the above group homomorphism $T^A_c \to T^B$. Then one can check from Equations \eqref{tIIB} that there exists a unique element $-K_B \in \h_B$ which acts as the degree operator \eqref{degree-operator} on the Quillen model $Q(IIB)$, namely, the element
\vspace{-2mm} 
\[
-K_B = 2 \mathfrak{l}_1 + 2 \mathfrak{l}_2. 
\]
Indeed, $2(h_0-h_1) + 2 (h_0-h_2)$ acts on $s_1 g_4$ in $M(\mc{L}_c IIA)$ by scaling by $2(1-1) + 2(1-0) = 2$ and the same way on $s_2 g_4$. This implies that $-K_B$ acts on $h_3$ and $\omega_3$ in $M(IIB)$ by a factor of 2, which is the degree of the dual elements in the Quillen model $Q(IIB)$. The element $2(h_0-h_1) + 2 (h_0-h_2)$ acts on $s_2 w_1$ in $M(\mc{L}_c IIA)$ by zero, and this implies that $-K_B$ acts on $\omega_1 \in M(IIB)$ also trivially, just as it is supposed to act on an degree-zero element of $Q(IIB)$. Equations \eqref{htypeIIB} and the fact that $\h_B$ acts on $M(IIB)$ by derivations then imply that $-K_B \in \h_B$ acts on the remaining generators $\omega_5$, $h_7$, and $\omega_7$ by 4, 6, and 6, respectively, again compatible with acting as the degree operator on the Quillen model $Q(IIB)$. 

\medskip 
As concerns uniqueness of an element realizing the degree operator, an arbitrary element $a_1 \mathfrak{l}_1 + a_2 \mathfrak{l}_2 \in \h_B$ acts on $\omega_1$ by a factor of $a_1 - a_2$ and on $h_3$ by $a_2$. Thus, if it acts by the Quillen-model degree, we must have $a_1 - a_2 = 0$ and $a_2 = 2$, whence $a_1 = a_2 = 2$.

\medskip 
This way, as in \Cref{rootdata}, we create a root system $R_B$, which we might denote by $E_B$, corresponding  to the real homotopy type $IIB$. In contrast with the type $IIA$ root system, which is empty
(in the notation of \Cref{table1}):
\[
E_1 = R_1 = \big\{ \alpha \in \PPP(\h_1) \; | \; (\alpha, 3\epsilon_0 -\epsilon_1) = 0 , \; (\alpha, \alpha) = -2 \big\} = \varnothing = A_0\,,
\]
we have:

\begin{prop}[Exceptional root data from the rational model for type IIB]
\label{Root-IIB}
The data, as in Theorem \ref{rootdata}, associated to the Sullivan minimal model $M(IIB)$
of type IIB  replicates the data  determined by the del Pezzo surface 
$\BB'_1 = \CC \PP^1 \times \CC \PP^1$
and produces the root system
\[
E_B = R_B := \big\{ \alpha \in \PPP(\h_B) \; | \; (\alpha, 2\gamma_1 + 2 \gamma_2) = 0 , \;  
(x, x) = -2 \big\} = \big\{\gamma_1 - \gamma_2, \; \gamma_2 - \gamma_1\big\}
\]
which is nonempty and may be identified as a root system of type $A_1$. 
\end{prop}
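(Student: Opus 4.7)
The plan is to carry out a direct computation on the rank-two weight lattice $\PPP(\h_B)=\ZZ\gamma_{\,1}\oplus\ZZ\gamma_{\,2}$, using the bilinear form and distinguished element $-K_B^*=2\gamma_{\,1}+2\gamma_{\,2}$ (dual to $-K_B$) already identified in the text. First I would note that, by construction earlier in \S\ref{Sec-Ek} (and the type IIB analogue just before the proposition), the quadruple
\[
\bigl(\h_B^*,\{\gamma_{\,1},\gamma_{\,2}\},(-,-),K_B^*\bigr)
\]
has exactly the structure of a lattice with a distinguished basis, a symmetric bilinear form, and a distinguished element. The matching with $(H^2(\CC\PP^1\times\CC\PP^1;\RR),\{\mathfrak{l}_1,\mathfrak{l}_2\},(-,-),\cK_{\BB_1'})$ is then immediate from the dictionary
\[
\gamma_{\,1}\leftrightarrow\mathfrak{l}_1,\qquad \gamma_{\,2}\leftrightarrow\mathfrak{l}_2,
\]
since the bilinear forms agree entry-by-entry: $(\gamma_{\,i},\gamma_{\,i})=0=\mathfrak{l}_i\cdot\mathfrak{l}_i$, $(\gamma_{\,1},\gamma_{\,2})=1=\mathfrak{l}_1\cdot\mathfrak{l}_2$, and $-K_B^*=2\gamma_{\,1}+2\gamma_{\,2}$ matches $-\cK_{\BB_1'}=2\mathfrak{l}_1+2\mathfrak{l}_2$ recalled in \cref{dPtriple}.

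Next I would determine $R_B$ by a one-line computation. Writing $\alpha=a\gamma_{\,1}+b\gamma_{\,2}$ with $a,b\in\ZZ$, the orthogonality to the distinguished class reads
\[
(\alpha,\,2\gamma_{\,1}+2\gamma_{\,2})=2b+2a=0,
\]
so $b=-a$. The norm condition then becomes
\[
(\alpha,\alpha)=(a\gamma_{\,1}-a\gamma_{\,2},\,a\gamma_{\,1}-a\gamma_{\,2})=-2a^2=-2,
\]
forcing $a=\pm1$. This yields exactly the two vectors $\pm(\gamma_{\,1}-\gamma_{\,2})$, and I would observe that these form a root system of type $A_1$: they are negatives of one another, and the only reflection they generate swaps $\gamma_{\,1}$ and $\gamma_{\,2}$, giving the Weyl group $W=\ZZ/2$.

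The only step with any subtlety is checking that this computation is well-posed, i.e.\ that the inner product on $\h_B^*$ inherited via the map $T^B\to T^A_c$ in the paragraph preceding the proposition is indeed the one recorded there. This is stated and used in the text, so I would simply invoke it. Everything else is a finite, explicit verification on rank-two lattices, and no further technical obstacle arises; the proposition is essentially the two-variable shadow of \cref{rootdata} under the correspondence with $\CC\PP^1\times\CC\PP^1$.
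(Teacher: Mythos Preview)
Your proposal is correct and is exactly the direct verification the paper intends: the text immediately preceding the proposition sets up the basis $\{\gamma_{\,1},\gamma_{\,2}\}$, the inner product, and $-K_B$, and the proposition is then stated without further proof because the remaining check is the elementary two-variable lattice computation you carried out.
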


This justifies the type IIB row of \Cref{table1}.

\subsection{27 ``Lines'' in the cyclic loop space $\mc{L}_c^6 S^4$}
\label{Sec-27}

In this section we show that our discovery of at least the toroidal part of $E_k$ symmetry of the rational homotopy type of $\mc{L}_c^k S^4$ may lead to surprising consequences, such as the existence of 27 ``lines'' in $\mc{L}_c^6 S^4$: shortly, $\dim \pi_2^\RR (\mc{L}_c^6 S^4) = 27$. Since 27 is also remarkable as the dimension of a fundamental representation of the Lie algebra $\mathfrak{e}_6$, this is suggestive of the possibility of extending the rational homotopy symmetries of $\mc{L}_c^k S^4$ outside of the toroidal part of $E_k$, if not to the whole Lie algebra $\mathfrak{e}_k$.

Given that the data $\left(\h_6^*, \{\epsilon_0, \dots, \epsilon_6\}, (-,-), K_6^*\right)$ arising in \Cref{rootdata} is exactly the same as that for the del Pezzo surface $\BB_6$, we can identify the 27 ``lines'' in $\mc{L}_c^6 S^4$. These lines are generated by the $\RR$-homotopy classes of 27 maps $\CC\PP^1 \to \mc{L}_c^6 S^4$ (to be more precise, linear combinations of such, i.e., 27 elements of $\pi_2^\RR (\mc{L}_c^6 S^4) = \pi_2 (\mc{L}_c^6 S^4) \otimes \RR$) supplied by the following result.

\begin{theorem}
[27 lines via rational homotopy of 6-fold cyclic loop space]
\label{27lines}
The 27 exceptional vectors $\alpha \in \PPP(\h_6)$, $(\alpha, \alpha) = (\alpha, K_6^*) = -1$, give rise to 27 canonically defined lines in the $\RR$-vector space $\pi_2^\RR (\mc{L}_c^6 S^4)$. Moreover, these lines freely generate $\pi_2^\RR (\mc{L}_c^6 S^4)$ and thus $\dim \pi_2^\RR (\mc{L}_c^6 S^4) = 27$.
\end{theorem}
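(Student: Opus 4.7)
The plan is to exploit the weight decomposition of the Quillen minimal model $Q(\mc{L}_c^6 S^4) = \pi_\bullet^\RR(\mc{L}_c^6 S^4)[1]$ under the action of $\h_6$, which (as noted before \Cref{thm-deg}) carries the same weights as the generating space $V$ of the Sullivan minimal model $M(\mc{L}_c^6 S^4) = S(V)$. Under the duality $V \cong Q[-1]^*$, the degree-$2$ piece $V^2$ is paired with $\pi_2^\RR(\mc{L}_c^6 S^4)$, so the weight decomposition of $\pi_2^\RR$ is completely determined by that of $V^2$; identifying which weights occur in $V^2$ and showing that they are exactly the $27$ exceptional vectors of $\PPP(\h_6)$ are the two computations driving the proof.

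First I would enumerate the degree-$2$ generators of $M(\mc{L}_c^6 S^4)$ by specializing the classification of \Cref{examplek} and the weight list of \Cref{kgeq3} to $k=6$. Since $|g_4| = 4$, $|g_7| = 7$, $|w_i| = 2$, $|s_j w_i| = 1$, and each $s_j$ lowers degree by one, the degree-$2$ generators fall into three families: the $\binom{6}{2} = 15$ elements $s_i s_j g_4$ of weight $\epsilon_0 - \epsilon_i - \epsilon_j$; the $\binom{6}{5} = 6$ elements $s_{i_1} \cdots s_{i_5} g_7$ of weight $2\epsilon_0 - \epsilon_{i_1} - \cdots - \epsilon_{i_5}$; and the $6$ generators $w_i$ of weight $\epsilon_i$. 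This yields exactly $27$ generators with $27$ pairwise distinct weights, each weight space in $V^2$ being one-dimensional.

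Next I would verify that each of these $27$ weights $\alpha$ satisfies $(\alpha, \alpha) = -1$ and $(\alpha, K_6^*) = -1$, using $(\epsilon_0, \epsilon_0) = 1$, $(\epsilon_i, \epsilon_j) = -\delta_{ij}$ for $i \ge 1$, and $K_6^* = -3\epsilon_0 + \epsilon_1 + \cdots + \epsilon_6$. For instance, $(\epsilon_0 - \epsilon_i - \epsilon_j, \epsilon_0 - \epsilon_i - \epsilon_j) = 1 - 1 - 1 = -1$ and $(\epsilon_0 - \epsilon_i - \epsilon_j, K_6^*) = -3 + 1 + 1 = -1$, with the other two families treated analogously. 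That these $27$ weights exhaust the set of exceptional vectors in $\PPP(\h_6)$ is the classical enumeration of $(-1)$-classes in the $E_6$ lattice, as in \cite{Man}.

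Combining these steps, one obtains a bijection between the $27$ exceptional vectors $\alpha_i$ and a basis of $V^2$; by duality with $Q^1$, each exceptional weight labels a canonical one-dimensional subspace of $\pi_2^\RR(\mc{L}_c^6 S^4)$, these $27$ lines span the whole space (the weights being distinct), and hence $\dim \pi_2^\RR(\mc{L}_c^6 S^4) = 27$. The main bookkeeping obstacle is to keep the degree, dualization, and weight conventions between $V$, $V^*$, and $Q$ aligned, as laid out in the proof of \Cref{thm-deg}; beyond that, everything reduces to elementary combinatorics on the lattice $\PPP(\h_6)$.
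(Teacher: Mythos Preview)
Your proposal is correct and follows essentially the same route as the paper: both identify $\pi_2^\RR(\mc{L}_c^6 S^4)$ with the dual of the degree-$2$ generating space $V^2$ of $M(\mc{L}_c^6 S^4)$, enumerate $V^2$ explicitly via the three families $w_i$, $s_j s_i g_4$, $s_6 \cdots \widehat{s_i} \cdots s_1 g_7$ (using \Cref{examplek} and \Cref{kgeq3}), and then match the resulting $27$ distinct one-dimensional weight spaces with the $27$ exceptional vectors of the $E_6$ lattice. The paper phrases the link between exceptional vectors and degree~$2$ via the degree-operator characterization of $-K_6$ from \Cref{thm-deg}, whereas you verify the conditions $(\alpha,\alpha)=(\alpha,K_6^*)=-1$ directly and appeal to the classical count in \cite{Man}; these are two sides of the same computation.
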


\begin{proof}
The 27 exceptional vectors for the data 
$\left(\h_6^*,  \{\epsilon_0, \dots, \epsilon_6\}, (-,-), K_6^*\right)$ associated with the space $\mc{L}_c^6 S^4$ by \Cref{rootdata} are the elements $\alpha$ of the weight lattice $\PPP(\h_6) = \ZZ \epsilon_0 \oplus \dots \oplus \ZZ \epsilon_6$ which pair to 1 with $-K_6^*$ or, equivalently, evaluate to 1 at $-K_6 \in \h_6$: $\alpha(-K_6) = 1$. By the definition of the ``anticanonical class'' $-K_6$, for any weight $\alpha \in \PPP(\h_6)$, an element $x$ of the weight space $Q_\alpha$ of the Quillen minimal model $Q = \pi_\bullet^\RR (\mc{L}_c^6 S^4)[1]$ has degree $\abs{x} = \alpha(-K_6)$:
\vspace{-3mm} 
\begin{align*}
h x & = \alpha (h) \cdot x  && \text{for any $h \in \h_6$},\\
-K_6 x & = \abs{x} \cdot x  && \text{for  $h = -K_6$}.
\end{align*}
Thus, the weights $\alpha \in \PPP(\h_6)$ which evaluate to 1 at $-K_6$
are precisely the weights of the degree-one component of $\pi_\bullet^\RR (\mc{L}_c^6 S^4)[1]$,
\[
\big(\pi_\bullet^\RR (\mc{L}_c^6 S^4)[1]\big)_1 = \pi_2^\RR (\mc{L}_c^6 S^4).
\]
The vector space $\pi_2^\RR (\mc{L}_c^6 S^4)$ is linear dual to the degree-two component $V^2$ of the generator vector space $V = Q^*[1] = \pi_\bullet^\RR (\mc{L}_c^6 S^4)^*$ of the Sullivan minimal model $M (\mc{L}_c^6 S^4) = (S(V), d)$. Note that in the weight decomposition 
\[
V^2 = \bigoplus_{\alpha \in \PPP(\h_6)} V^2_\alpha \; ,
\]

\vspace{-2mm} 
\noindent the weights that actually occur, i.e., $V^2_\alpha \ne 0$, are exactly on the following list:
\vspace{-2mm} 
\begin{align}
\nonumber
    &\epsilon_1, \dots, \epsilon_6,\\
\label{27weights}
&\epsilon_0 - \epsilon_i - \epsilon_j, \hspace{2.92cm}  1 \le i < j \le 6,\\
\nonumber
   & 2 \epsilon_0 - \epsilon_{1} - \dots - \widehat{\epsilon_i} - \dots - \epsilon_6, \quad 1 \le i \le 6.
\end{align}
Moreover, the corresponding weight spaces are all one-dimensional and generated by the elements
\vspace{-2mm} 
\begin{align}
\nonumber
   & w_1, \dots, w_6,\\
 \label{27fields}
 &    s_j s_i g_4, \hspace{1.7cm} 1 \le i < j \le 6,\\
\nonumber
  &  s_6 \dots \widehat{s_i} \dots s_1 g_7, \quad 1 \le i \le 6,
\end{align}
respectively, just as in the table which presented the 27 spacetime fields earlier in this section. The corresponding one-dimensional linear-dual subspaces $(V^2_\alpha)^* = (Q_1)_\alpha \subset \pi_2^\RR (\mc{L}_c^6 S^4)$ are canonically defined as weight spaces, and we have
\vspace{-1mm} 
\begin{align*}
V^2 & = \bigoplus_{\scalebox{0.55}{${\alpha  \text{ on the list  \eqref{27weights}}}$}} V^2_\alpha \; ,\\
\pi_2^\RR (\mc{L}_c^6 S^4) & = \bigoplus_{\scalebox{0.55}{${\alpha  \text{ on the list  \eqref{27weights}}}$}} \pi_2^\RR (\mc{L}_c^6 S^4)_\alpha \; .
\end{align*}
These are the 27 lines in $\pi_2^\RR (\mc{L}_c^6 S^4)$. For each line, a generator, defined up to a nonzero real factor, is represented by a real homotopy class in $\pi_2^\RR (\mc{L}_c^6 S^4)$, a ``line $\CC \PP^1 = S^2 \to \mc{L}_c^6 S^4$''.
\end{proof}

\begin{remark}[Other cases]
A similar count of ``lines in $\mc{L}_c^k S^4$'' works for all $k$, $0 \le k \le 6$, with the same numbers as those for the exceptional vectors $I_k \subset N_k$ and exceptional curves on del Pezzo surfaces $\BB_k$, cf.\ \cite[Theorem 4.3]{Man}. This count starts to break for $k > 6$, because some of the exceptional vectors will start having trivial weight spaces. In particular, for $k = 7$, instead of 28 pairs of lines in $\BB_7$, we will have 28 ``lines,'' 21 of which are paired to other 21 ``lines'' in $\mc{L}_c^7 S^4$, with 7 ``lines'' missing a pair. We will address this in an upcoming paper \cite{SV:M-theory}.
\end{remark}

\subsection{
Cyclifying $\ge 9$ times and Kac-Moody algebras}
\label{Sec-KM}

Nothing prevents us from cyclifying the 4-sphere 9 times and more, just like blowing up $\CC \PP^2$ at $k \ge 9$ points makes perfect sense --- the resulting surfaces are just no longer del Pezzo \cite{Dol1}\cite{Do-Coble}. In our case, it is interesting to see what ``phase transition'' is happening between $k=8$ and 9. It is also reasonable to expect the emergence of Kac-Moody algebras of type $E_k$ for $k \ge 9$ in relation to higher cyclifications $\mc{L}_c^k S^4$. These Lie algebras play a role in further blowups of $\CC \PP^2$, as well as in reductions of 11d supergravity to dimensions 2, 1, and 0 \cite{Julia86}\cite{N92}. We plan to address the relation to algebraic geometry and physics in \cite{SV:M-theory}.

\paragraph{The Lie algebras $\mathfrak{e}_k$ of type $E_k$ for $k \ge 9$.} 
For $k \ge 9$, the Dynkin diagram
\[
\hspace{-1cm} 
\mbox{\small Dynkin diagram $E_k$}
\qquad
  \scalebox{.9}{$
  \raisebox{-30pt}{\begin{tikzpicture}[scale=.6]
    \foreach \x in {0,...,5}
    \draw[thick,xshift=\x cm] (\x cm,0) circle (2.5 mm);
    \foreach \y in {0, 1,2, 4}
    \draw[thick,xshift=\y cm] (\y cm,0) ++(.25 cm, 0) -- +(14.5 mm,0);
    \foreach \y in {3,4}
    \draw[dotted, thick,xshift=\y cm] (\y cm,0) ++(.3 cm, 0) -- +(14 mm,0);
    \draw[thick] (4 cm, -2 cm) circle (2.5 mm);
    \draw[thick] (4 cm, -3mm) -- +(0, -1.45 cm);
    \node at (0,.8) {$\sigma_1$};
    \node at (2,.8) {$\sigma_2$};
    \node at (4,.8) {$\sigma_3$};
    \node at (6,.8) {$\sigma_4$};
    \node at (8,.8) {$\sigma_{k-2}$};
    \node at (10,.8) {$\sigma_{k-1}$};
    \node at (5,-2) {$\sigma_0$};
  \end{tikzpicture}
  }
  $}
\]
corresponds to Kac-Moody algebras:
the affine Lie
algebra 
of type $E_9 = \widehat{E}_8 = E_8^{(1)}$, the hyperbolic Kac-Moody algebra of type $E_{10}$, and the Lorentzian Kac-Moody algebras of type $E_{k}$ for $k \ge 11$.

\paragraph{The $k$-fold cyclifications $\mc{L}_c^k S^4$ and root lattices for $k \ge 9$.} The $E_k$ root system data arises from $\mc{L}_c^k S^4$ according to \Cref{rootdata}: the maximal split real torus of $\Aut \mc{L}_c^k S^4$ is $(k+1)$-dimensional; the dual space $\h_{k}^*$ of its Lie algebra has a natural basis $\epsilon_0, \epsilon_1, \dots, \epsilon_k$, Lorentzian inner product, and distinguished element $K_k^*$. The sublattice $(K_k^*)^\perp \subset \PPP(\h_k)$ is a root lattice of type $E_k$, see \cref{table2}.

\vspace{-3mm} 
\paragraph{Exceptional vectors for $\mc{L}_c^k S^4$ for $k \ge 9$.}
In contrast to the $k \le 8$ case, for $k \ge 9$ the set
\[
I_k=\left\{ \alpha \in \PPP(\h_k) \; \vert \; (\alpha, K_k^*) =
(\alpha, \alpha) = -1 \right\}
\]
of exceptional vectors is infinite for $\mc{L}_c^k S^4$. Indeed, the Weyl group $W_k$ acts on the dual of the Lie algebra $\h_k$ of the maximal $\RR$-split torus $T^{k+1}$ of $\Aut M(\mc{L}_c^k S^4)$ by isometries preserving $K_k^* = -3 \epsilon_0 + \epsilon_1 + \dots + \epsilon_k$ and the lattice $\PPP(\h_k)$. In particular, the Weyl group acts on the exceptional vectors. Examples, such as in \eqref{27weights}, show there are enough exceptional vectors to generate the whole vector space $\h_k^*$. This implies that an element of $W_k$ is determined by its action on the exceptional vectors.
Then the fact that the group $W_k$ is infinite for $k \ge 9$ implies that there are infinitely many exceptional vectors.

\medskip
Recall from \Cref{27lines} that in the $k=6$ case, the 27 exceptional weight spaces of the action of $\h_k$ on the Quillen model $Q(\mc{L}_c^6 S^4) = \pi_\bullet^\RR (\mc{L}_c^6 S^4)[1]$, i.e., on $\pi_2^\RR (\mc{L}_c^6 S^4)$ are all one-dimensional and present 27 lines in $\mc{L}_c^6 S^4$.

\medskip 
On the contrary, for $k \ge 9$, only finitely many exceptional weights will be populated, both in $Q(\mc{L}_c^k S^4)$ and $M(\mc{L}_c^k S^4)$. This is because the underlying real vector space $\pi_\bullet^\RR (\mc{L}_c^k S^4)$ of the Quillen model and the subspaces of bounded degree of the Sullivan model of $M(\mc{L}_c^k S^4)$ are finite-dimensional for all $k \ge 0$, as follows from the identification of $M(\mc{L}_c^k S^4)$ in \Cref{Sullmm}.

The ``phase transition'' from $k \le 8$ to $k \ge 9$ is summarized in the following table.
\begin{center} 
\setlength{\tabcolsep}{7pt} 
\renewcommand{\arraystretch}{1.2} 
\begin{tabular}{cccccc}
\hline 
\rowcolor{lightgray} \textbf{Case} & {\bf $\deg \mc{L}_c^k S^4$} & {\bf Inner product on $K_k^\perp$} & {\bf $K_k$ and $K_k^\perp$} & {\bf Weyl group $W_k$} & {\bf Exceptnl vectors for $\mc{L}_c^k S^4$}
\\
\hline 
\hline 
 $k \le 8$ & $>0$ & negative-definite & $K_k \not \in K_k^\perp$ & finite & finitely many
\\
\rowcolor{lightgray} $k=9$ & 0  & negative semi-definite & $K_k \in K_k^\perp$ & infinite & infinitely many\\
 $k \ge 10$ & $<0$ & Lorentzian & $K_k \not \in K_k^\perp$ & infinite & infinitely many\\
\hline 
\end{tabular} 
 \end{center}


\noindent  Hisham Sati, {\it Mathematics, Division of Science, and 
\\
\indent Center for Quantum and Topological Systems (CQTS),  NYUAD Research Institute, 
\\
\indent New York University Abu Dhabi, UAE.}
\\
{\tt hsati@nyu.edu}
\\
\\
\noindent  Alexander A. Voronov, {\it School of Mathematics, University of Minnesota, Minneapolis, MN 55455, USA, and
\\
\indent Kavli IPMU (WPI), UTIAS, University of Tokyo, Kashiwa, Chiba 277-8583, Japan.}
\\
{\tt voronov@umn.edu}

\end{document}